\def\a{\alpha}
\def\b{\beta}
\def\g{\gamma}
\def\d{\delta}
\def\r{\rho}     
\def\s{\sigma}
\def\D{\Delta}
\def\>{\rangle}
\def\<{\langle}
\def\Tr{\mathrm{Tr}}
\newcommand{\ketb}[2]{|{#1}\>\!\<#2|}
\newcommand{\mc}{\mathcal}
\newcommand{\bes} {\begin{subequations}}
\newcommand{\ees} {\end{subequations}}
\newcommand{\bea} {\begin{align}}
\newcommand{\eea} {\end{align}}
\newcommand{\beq}{\begin{equation}}
\newcommand{\eeq}{\end{equation}}
\newtheorem{mytheorem}{Theorem}
\newtheorem{mylemma}{Lemma}
\newtheorem{mycorollary}{Corollary}
\newtheorem{myproposition}{Proposition}
\newtheorem{mydefinition}{Definition}
\Crefname{myproposition}{Proposition}{Propositions}
\Crefname{mylemma}{Lemma}{Lemmas}
\newcounter{example}
\newenvironment{Example}
{
    \par\refstepcounter{example} 
    \textbf{Example \theexample:} 
}
{
    \par\vspace{\baselineskip} 
}
\newcommand{\lind}{\mathcal{L}}
\newcommand{\bc}{\mathbf{c}}
\newcommand{\bh}{\mathbf{h}}
\newcommand{\bl}{\mathbf{l}}
\newcommand{\bv}{\mathbf{v}}
\newcommand{\bw}{\mathbf{w}}
\DeclareMathOperator\arctanh{arctanh}
\begin{document}
    \title{Quantum property preservation}
    \author{Kumar Saurav}
\affiliation{Center for Quantum Information Science \& Technology}
\affiliation{Department of Electrical \& Computer Engineering}
    \author{Daniel A. Lidar}
\affiliation{Center for Quantum Information Science \& Technology}
\affiliation{Department of Electrical \& Computer Engineering}
\affiliation{Department of Physics \& Astronomy}
\affiliation{Department of Chemistry\\
University of Southern California, Los Angeles, CA 90089, USA}

\begin{abstract}
Quantum property preservation (QPP) is the problem of maintaining a target property of a quantum system for as long as possible.
This problem arises naturally in the context of open quantum systems subject to decoherence. Here, we develop a general theory to formalize and analyze QPP. We characterize properties encoded as scalar functions of the system state that can be preserved time-locally via continuous control using smoothly varying, time-dependent control Hamiltonians. 
The theory offers an intuitive geometric interpretation involving the level sets of the target property and the stable and unstable points related to the noise channel.
We present solutions for various noise channels and target properties, which are classified as trivially controllable, uncontrollable, or controllable.
In the controllable scenario, we demonstrate the existence of control Hamiltonian singularities and breakdown times, beyond which property preservation fails.
QPP via Hamiltonian control is complementary to quantum error correction, as it does not require ancilla qubits or rely on measurement and feedback. It is also complementary to dynamical decoupling, since it uses only smooth Hamiltonians without pulsing and works in the regime of Markovian open system dynamics.
From the perspective of control theory, this work addresses the challenge of tracking control for open quantum systems.
\end{abstract}
    \maketitle
    
\section{Introduction}
\label{sec:intro}

Quantum control is the generalization of the well-established classical control theory to the quantum domain, dating back at least to the early 1980s~\cite{TARN1980109,huang1983controllability}.
The field has since expanded greatly into mathematics~\cite{Mikobook}, physics~\cite{Wiseman:book,jacobs2014quantum}, chemistry~\cite{Brumer:book,Brif:2010fu}, and quantum technology~\cite{Koch:2022aa}. 
In the context of quantum information and computation, quantum control is often used to design improved quantum logic gates~\cite{Ramakrishna:96,Palao:02,Grace:2007aa,Motzoi:2009aa,Hsieh:10,Doria:2011aa}. It has long been recognized that quantum error correction can be viewed as a quantum control problem as well~\cite{Ahn:01,Sarovar:04,Sarovar:05,Mabuchi:2009aa}. 

Much of the work in quantum control to date has been devoted to questions of controllability and optimal state-to-state or operator-to-operator transformations. Here, we introduce the complementary problem of \emph{quantum property preservation} (QPP). We define this task more precisely below, but in essence, it amounts to selecting an arbitrary function of the quantum state and preserving its initial value in the presence of a decohering environment~\cite{alicki_quantum_2007,Breuer:book}. QPP is less demanding than full state or operator preservation but is nevertheless of fundamental and practical interest. 

The fundamental interest in QPP arises in the context of understanding the limits of what can be preserved in a restricted and minimalistic scenario where the only resource available is \emph{smooth Hamiltonian control}.
Namely, unlike in quantum error avoidance~\cite{Zanardi:97c,Lidar:1998fk} or correction~\cite{shor_scheme_1995,Steane:96a,Gottesman:1996fk}, we explicitly avoid introducing any encoding or measurement-based feedback, whether classical or coherent~\cite{Lloyd:2000aa}. Moreover, the only control we allow is a smooth, time-dependent Hamiltonian that preserves the property time-locally, 
i.e., we avoid the ``bang-bang'' control setting of dynamical decoupling~\cite{Viola:98,Viola:99a,Khodjasteh:2011aa}, which only recovers a state at the end of each decoupling cycle and only applies under non-Markovian dynamics.
We note that the general theory of quantum error correction and suppression already provides a framework that allows for (stroboscopic) QPP as a special case~\cite{,Knill:2000dq,Blume-Kohout:2008ec,Lidar-Brun:book}. By excluding the resources assumed by the former, preserving an arbitrary quantum state becomes impossible, but, as we show here, it remains possible to preserve specific properties of quantum states. 

The practical interest in QPP without encoding or feedback and using only smooth controls arises, in part, from its very minimalism. It also arises naturally in situations where it is possible to precompute a stabilizing state trajectory so that a given property can be preserved over time as intended. When this is the case, the control Hamiltonian (which becomes a function of the instantaneous system state) can be implemented without requiring continuous state tracking. While we do not discuss specific applications here, this is a relevant scenario for, e.g., metrology and sensing, where QPP can demonstrate an advantage~\cite{tbp-tracking}. Some examples of critical quantum properties to preserve in an open system setting are coherence, fidelity, mean energy, or, more generally, any expectation value.
     
More formally, the main questions we address in this work are the following: (1) Which class of properties $f(\rho)$ can be preserved given a decohering environment, where $f$ is the desired (scalar) function of the instantaneous state $\rho$? (2) Given a property $f$, what control Hamiltonian needs to be applied to preserve $f(\rho)$ over a specified evolution period? (3) How long can this period be? (4) What is the underlying geometry of the control landscape? We consider these questions in both the general setting of finite-dimensional Hilbert spaces and in the special case of a single qubit, where the underlying geometry can be visualized intuitively.

The QPP problem we have defined can be placed within the framework of \emph{tracking control}, which has a rich history in classical control theory
\cite{Hirschorn:79,Hirschorn:88,Jakubczyk:93}.
It was extensively studied in the context of closed quantum systems~\cite{Chen:95,Lu:95,Gross:93,Zhu:99,Zhu:98,Zhu:03}. 

This work is organized as follows. \cref{sec:general_theory} provides the background and notation, and develops the general theory for arbitrary Hilbert spaces. \cref{sec:qubit_theory} specializes these ideas in the qubit setting. Examples for two selected properties (coherence and fidelity) and different noise channels are discussed in \cref{sec:examples}. \cref{sec:conclusion} concludes. Additional technical details are provided in the Appendix.

\section{Theory and Methods}
\label{sec:general_theory}

\subsection{Background}
\label{sec:background}

The most general form of a time-local master equation governing the dynamics of open quantum systems described by a state (density matrix) $\r(t)$ is~\cite{alicki_quantum_2007,Breuer:book}:
\begin{subequations}
\label{eq:matrix_le}
\begin{align}
\label{eq:rhodot}
\dot{\r} &= \mc{L}(\r) = \lind_H(\rho) + \lind_D(\rho)\\
\label{eq:L_H}
\lind_H(\rho) &= -i[H,\rho]\\
\label{eq:L_a}
\lind_D(\rho) &= \sum_{\alpha} \gamma_\alpha(t) (L_\alpha \rho L^\dagger_\alpha - \frac{1}{2} \{L^\dagger_\alpha L_\alpha, \rho\})\ ,
\end{align}
\end{subequations}
where where the dot denotes $\partial/\partial t$, $\mc{L}$ is the Liouvillian, $\lind_D$ the dissipator, and the unitary dynamics are generated by $\lind_H$. We allow both $\lind_D$ and $\lind_H$ to be time-dependent, either explicitly or implicitly. When the rates $\g_\a(t) \ge 0$ $\forall \a$ the dynamics are Markovian; otherwise the master equation describes general non-Markovian, time-local dynamics~\cite{rivas2010entanglement,breuer2016colloquium,Chruscinski:2022aa}. We nevertheless refer to the $L_\alpha$ as Lindblad operators (even in the non-Markovian case) and assume that they are traceless, such that the traceful part is absorbed into the Hamiltonian $H$. 

In the tracking control setting, we view $H$ as the control Hamiltonian $H[\r(t)]$, which is allowed to depend on the instantaneous state $\r(t)$ of the system, i.e., an implicit time dependence. Here, we do not consider the complications associated with measuring the state and the inevitable delay in making $H$ dependent upon the measurement result. In this sense, the results we present in this work are upper bounds on the achievable performance of tracking control. Alternatively, the predicted performance can be realized in cases where the Hamiltonian can be precomputed.

We consider a finite-dimensional Hilbert space $\mc{H}$ with $d=\dim(\mc{H})<\infty$, and the space of bounded linear operators $\mc{B}(\mc{H})$ acting on $\mc{H}$ equipped with the Hilbert-Schmidt inner product $\< A,B\> \equiv \Tr(A^\dagger B)$.  Quantum states are represented by density operators $\rho\in \mc{B}_+(\mc{H})$ (positive trace-class operators acting on $\mc{H}$) with unit trace: $\Tr\rho =1$. A \emph{nice operator basis} for $\mc{H}$ is a set of orthonormal Hermitian operators $\{F_j\}_{j=0}^{J}\in\mc{B}(\mc{H})$ such that $F_0 \propto I$ where $I$ is the identity operator. Consequently, $F_0 = \frac{1}{\sqrt{d}}I$, $\<F_j, F_k\>=\d_{jk}$ and for $j\ge 1$ the operators $F_j$ are traceless.

Expanding the state in the nice operator basis yields $\rho(t) = \frac{1}{d}I+\sum_{j=1}^{J} v_j(t) F_j$, where the \emph{coherence vector} $\bv \in  \mc{M}^{(d)} \subset \mathbb{R}^{J}$ has coordinates given by $v_j = \< F_j, \rho \>$, and $J\equiv d^2-1$. 
$\mc{M}^{(d)}$ is a convex set that is topologically equivalent to a sphere~\cite{bengtsson2006geometry}; see \cref{app:cohvec} for more details. The coherence vector completely characterizes the state, and instead of working in $\mc{B}(\mc{H})$ we can equivalently work in $\mathbb{R}^{J}$ with the standard Euclidian inner product $\mathbf{a}\cdot\mathbf{b} = \mathbf{a}^T \mathbf{b} = \mathbf{b}^T \mathbf{a} = \sum_j a_j b_j$. Rewriting \cref{eq:matrix_le} in matrix form for the coherence vector gives:
 \beq
  \label{eq:le}
     \dot{\bv} = (Q+R) \bv + \bc\ ,
     \eeq
where $Q$ and $R$ are the matrices corresponding to $\lind_H$ and $\lind_D$, respectively, and $\bc$ is non-zero only in the non-unital case [i.e., when $\lind_D(I)\ne 0$], and depends purely on $\lind_D$.
All the vectors and matrices in \cref{eq:le} are real-valued.
Moreover, $Q$ is anti-symmetric, and $R$ is symmetric when $d=2$ (qubit) or when the Lindblad operators are all Hermitian~\cite{ODE2QME}.
In the context of the coherence vector, we refer to $Q$ as the control matrix and refer to $D\equiv (R,\bc)$ as the dissipator, since both $R$ and $\bc$ depend only on $\mc{L}_D$.
Note that in the tracking control setting, $Q = Q(\bv)$, i.e., the control depends on the state.
We use both the density operator and coherence vector formalisms below, emphasizing one over the other as convenient.

\subsection{Realizable trajectories}
\label{sec:realizable-traj}

We refer to differentiable operators from a non-empty interval of the real numbers to $\mc{B}_+(\mc{H})$ as \emph{trajectories} and solutions of \cref{eq:le} with a given Hamiltonian $H$ as \textit{$H$-realizable trajectories}.  We assume throughout that $\|H\|<\infty$, where the norm can be chosen as needed to reflect desirable physical constraints, such as a maximum power constraint.
Denoting the set of $H$-realizable trajectories by $\mathbb{P}_H$, the set $\mathbb{P}$ of \emph{realizable trajectories} is then the union of the sets $\mathbb{P}_H$.

Alternatively, from a control theory perspective, we can start from a given realizable trajectory $\r$, agnostic of the control that realizes it. Then, an equivalent definition is:
\beq
\label{eq:prt-set}
\mathbb{P} = \{\r(\cdot) : \exists H(\cdot) \text{ s.t. } \r,H \text{ satisfy \cref{eq:matrix_le}}\}
\eeq

\begin{proof}
Let $\cup \mathbb{P}_H = \mathbb{P}^1$ and denote the set in \cref{eq:prt-set} by $\mathbb{P}^2$. If $\r \in \mathbb{P}^1$, then $\r \in \mathbb{P}_H$ for some $H$.
By definition, the pair $(\r, H)$ satisfies \cref{eq:matrix_le} and therefore $\r \in \mathbb{P}^2$.
This proves $ \mathbb{P}^1 \subseteq \mathbb{P}^2$.
Now suppose $\s \in \mathbb{P}^2$.
By definition, for some $H$, $(\s, H)$ satisfies \cref{eq:matrix_le}, which means $\s \in \mathbb{P}_H$.
Therefore $\s \in \mathbb{P}^1$.
So $ \mathbb{P}^2 \subseteq \mathbb{P}^1$.
\end{proof}

Note that using the nice operator basis, realizable trajectories, which are density matrix solutions of \cref{eq:matrix_le}, map bijectively to coherence vector solutions of \cref{eq:le}. Based on convenience, we will alternate between these two possible representations of solutions, and refer to both as realizable trajectories.

We will require the following notion of equivalence of trajectories:

\begin{mydefinition}
\label{def:equiv-traj}
Two trajectories, $\r_1:I_1\mapsto \mc{B}_+(\mc{H})$ and $\r_2:I_2\mapsto \mc{B}_+(\mc{H})$ are said to be \emph{equivalent} iff there exists a bijective continuously differentiable map $\varphi:I_1\mapsto I_2$ such that $\forall t\in I_1$:  (1) $\r_1(t)=\r_2(\varphi(t))$ and (2) $\varphi'(t) > 0$.
\end{mydefinition}

From hereon, we use the notation $\s(u)$ to denote a trajectory in $\mc{B}_+(\mc{H})$ and $\bl(u)$ to denote a trajectory in $\mc{M}^{(d)}$, both parametrized by $u\in I_1=[0,1]$. We reserve the notation $\r(t)$ and $\bv(t)$ to denote the density matrix and coherence vector parametrized by time $t\in I_2=[0,t_f]$, where $t_f$ denotes the final time. The map $\varphi:I_1\mapsto I_2$ transforms the dimensionless parameter $u$ to time $t$.

\subsection{General tracking control framework}

The target function can be described as a scalar function of the system state $\rho$, which can be written as $f[\r(t)]$, but we suppress the explicit time-dependence from here on. We assume throughout this work that the function $f$ is real-valued and differentiable. An example is the state purity $f(\rho)=\Tr \rho^2 \equiv P(\rho)$. The gradient of the scalar function $f$ is:
\beq
\grad f (\r) = \sum_{j=1}^{J} \frac{\partial f}{\partial v_j} F_j\ .
\label{eq:gradf}
\eeq
where $\bv$ is the coherence vector for $\r$ and $F_j$ are the basis elements in the nice operator basis. Note that since $f$ is real-valued, the coefficients of $\grad f$ in the nice operator basis are real, and therefore, the $d\times d$ traceless matrix on the r.h.s. of \cref{eq:gradf} is Hermitian. This representation of $\grad f$ is appropriate when it is considered an operator in $\mc{B}(\mc{H})$, in combination with the density operator. Equivalently, $\grad f = (\frac{\partial f}{\partial v_1},\dots,\frac{\partial f}{\partial v_{J}})$ is a vector in $\mathbf{R}^{J}$, which is appropriate when it is considered in combination with the coherence vector. Both representations are used below, and the appropriate representation will be clear by context.

The goal of time-local QPP is to keep the target property constant for as long as possible, i.e., to find a control Hamiltonian such that $f[\r(t)] = f[\r(0)]$ for all $t\in [0,t_f]$, while maximizing $t_f$. This is equivalent to $\frac{d}{dt}f[\r(t)] = 0$ for all $t\in [0,t_f]$, i.e.:
\begin{equation}
\label{eq:constraint}
    \< \grad f, \dot{\r} \> = 0 = \grad f \cdot \dot{\bv} \ .
\end{equation}
Here, on the l.h.s. $\grad f \in \mc{B}(\mc{H})$ while on the r.h.s. $\grad f \in \mathbf{R}^{J}$.

Any function $f$ independent of $\rho$ is trivially preservable since $\grad f = 0$ in \cref{eq:constraint}. For the rest of this work, we only consider non-trivial functions $f$ with a non-zero gradient over $\mc{B}_+(\mc{H})$.

\cref{eq:constraint} formalizes that to preserve $f$, the path of the coherence vector should be restricted to the \emph{level set} of $f$, i.e., the hypersurface on which $f$ is constant (recall that if $f$ is differentiable, then $\grad f$ at a point is either orthogonal to the level set of $f$ at that point or zero). 

Therefore, the equations that govern the tracking control problem are the \emph{dynamics equation} \cref{eq:matrix_le} for the density matrix or \cref{eq:le} for the coherence vector, along with the respective forms of the \emph{constraint equation}
\bes
\label{eq:constraint_matrix_dynamics_eq-all}
\begin{align}
\label{eq:constraint_matrix_dynamics_eq}
            &\< \grad f, -i[H, \rho] +\lind_D \rho \> = 0\\
\label{eq:constraint_matrix_dynamics_eq-CV}
            & \grad f \cdot [(Q+R) \bv + \bc] = 0\ .
\end{align}    
\ees
Note that in the tracking control problem setting, the control Hamiltonian $H$ and $Q$ are explicit functions of $\r(t)$ and $\bv(t)$, respectively, instead of $t$. 

\begin{mydefinition}
\label{def:f-pres}
For a given target property $f$, a control Hamiltonian $H$ 
for which the constraint equation is satisfied, is called an \emph{$f$-preserving Hamiltonian}.
\end{mydefinition}

\begin{mydefinition}
\label{def:f-pres-traj}
A solution of the dynamics and constraint equations for $t\in [0,t_f]$ is called an \emph{$f$-preserving trajectory} $\r(t)$ [or $\bv(t)$].
\end{mydefinition}

\subsection{Uncontrollable and trivially controllable target properties}
\label{sec:triv}

Two exceptional cases of target properties arise from the constraint equation, which we call trivially controllable and uncontrollable.

\subsubsection{Trivially controllable $f$}

\begin{mydefinition}
A target property $f$ is \emph{trivially controllable} when for all $\rho$
\beq
\< \grad f, \lind_D \rho \> = 0 = \grad f\cdot(R\bv + \bc)\ .
\eeq
\end{mydefinition}
This holds when the dissipator itself moves the state along the level set of $f$. In this case, $f$ is already invariant under the action of the dissipator, and an additional control Hamiltonian is not needed to preserve it, i.e., we can set $H = Q = 0$. 

As a simple example that we will use throughout to illustrate the theory, consider:
\begin{Example}
\label{ex:1}
Markovian dephasing of a single qubit: $L_\a=\s^z$ and $\g_\a(t) = \g$ in \cref{eq:L_a}. Equivalently, $R=\mathrm{diag}(-2\g,-2\g,0)$ and $\bc=\mathbf{0}$.
\end{Example} 
In this case, any target property $f$ for which $\grad f\cdot R\bv = \frac{\partial f}{\partial{v_x}}v_x+\frac{\partial f}{\partial{v_y}}v_y=0$ is trivially controllable. One solution of this partial differential equation is $f(\bv)=f(v_z)$. Another solution is any target property that depends only on the ratio $v_y/v_x$, i.e., $f(\bv)=g(v_y/v_x)$, where $g$ is an arbitrary function.

\subsubsection{Uncontrollable $f$}
\label{sec:unc-f}

\begin{mydefinition}
A target property $f$ is \emph{uncontrollable} when it is not trivially controllable and
\begin{equation}
\label{eq:inffective_h_condition_matrix}
\< \grad f, -i[H, \rho]\>=\< H, -i[\rho, \grad f]\>=0\ ,
\end{equation}
whenever $\< \grad f, \lind_D \rho \> \neq 0$.
\end{mydefinition}

We impose that $f$ is not trivially controllable since otherwise, we cannot guarantee the existence of a $\rho$ such that $\< \grad f, \lind_D \rho \> \neq 0$.

For a given state $\rho$, the uncontrollability condition~\eqref{eq:inffective_h_condition_matrix} holds for all control Hamiltonians iff $i[\rho, \grad f]=0$. At such points, any change in the target property value $f$ depends solely on the dissipator, and the control Hamiltonian has no effect. 

In terms of the coherence vector, uncontrollable target properties are equivalently characterized by
\beq
\label{eq:ineffective_h_condition_vector}
\grad f \cdot Q\bv = (\grad f)^T Q\bv = -\bv^T Q \grad f= 0\ ,
\eeq
whenever $\grad f\cdot(R\bv + \bc) =  (\grad f)^T (R\bv + \bc) \ne 0$, using the antisymmetry of $Q$ and the symmetry of the dot product over real vector spaces. \cref{eq:ineffective_h_condition_vector} holds for all $Q$ iff $\grad f \propto \bv$.

\begin{myproposition} 
\label{prop:1}
Any target property that depends only on the magnitude $v\equiv \| \bv\|$ of the coherence vector, and for which $\bv^T (R\bv + \bc) \ne 0$, is uncontrollable.
\end{myproposition}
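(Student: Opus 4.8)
The plan is to verify the two requirements in the definition of an uncontrollable property directly, working in the coherence-vector representation where the criteria are sharpest. First I would compute $\grad f$ under the hypothesis that $f$ depends on $\bv$ only through $v=\|\bv\|=\sqrt{\bv^T\bv}$. Writing $f(\bv)=g(v)$, the chain rule gives $\partial f/\partial v_j = g'(v)\,v_j/v$, so that $\grad f = (g'(v)/v)\,\bv$. The key structural consequence is that $\grad f$ is parallel to $\bv$, which is precisely the condition identified just after \cref{eq:ineffective_h_condition_vector} as equivalent to that equation holding for all control matrices $Q$.

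Next I would substitute $\grad f \propto \bv$ into the uncontrollability condition \cref{eq:ineffective_h_condition_vector}. Since $\grad f\cdot Q\bv = (g'(v)/v)\,\bv^T Q\bv$ and $Q$ is antisymmetric, the quadratic form $\bv^T Q\bv$ vanishes identically; hence $\grad f\cdot Q\bv = 0$ for every $\bv$ and every admissible $Q$. In particular this holds on the subset of states where $\grad f\cdot(R\bv+\bc)\ne 0$, which is exactly the conditional statement demanded by the definition.

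The remaining step is to rule out trivial controllability, which would otherwise void the conclusion, and this is where the hypothesis $\bv^T(R\bv+\bc)\ne 0$ enters. Because $\grad f\cdot(R\bv+\bc) = (g'(v)/v)\,\bv^T(R\bv+\bc)$ and $g'(v)\ne 0$ for a non-trivial target property (recall the paper restricts to $f$ with nonvanishing gradient), this inner product is nonzero at the state in question. Hence there exists a $\rho$ with $\< \grad f, \lind_D\rho\> \ne 0$, so $f$ is not trivially controllable, and both clauses of the definition are then satisfied.

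I do not anticipate a genuine obstacle: the result is essentially a repackaging of the gradient computation together with the already-established equivalence ``\cref{eq:ineffective_h_condition_vector} holds for all $Q$ iff $\grad f\propto\bv$.'' The only point needing mild care is the bookkeeping around the nonvanishing-gradient assumption — I would confirm that $g'(v)\ne 0$ on the relevant level set, so that the exclusion of trivial controllability and the nontriviality of $\grad f$ hold simultaneously, rather than tacitly assuming $g'(v)\ne 0$ everywhere.
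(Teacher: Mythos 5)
Your proposal is correct and follows essentially the same route as the paper: compute $\grad f \propto \bv$ by the chain rule, kill $\grad f\cdot Q\bv$ via the antisymmetry of $Q$, and invoke the hypothesis $\bv^T(R\bv+\bc)\ne 0$ for the dissipator term. Your extra bookkeeping on $g'(v)\ne 0$ (needed to pass from $\bv^T(R\bv+\bc)\ne 0$ to $(\grad f)^T(R\bv+\bc)\ne 0$ and to rule out trivial controllability) is a point the paper's proof leaves implicit, but it does not change the argument.
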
 

\begin{proof}
For this class of properties $f(\bv)$ can be written as a scalar function $f(v)$, so that:
\beq
        \grad f(\bv) = \grad f(v) = \frac{d f(v)}{dv}\frac{\bv}{v} \propto \bv\ .
\eeq
Then $(\grad f)^T Q\bv = -\bv^T Q \grad f$ implies $\bv^T Q \bv=0$ and \cref{eq:ineffective_h_condition_vector} holds. The property
$(\grad f)^T (R\bv + \bc) \ne 0$ holds by assumption.
\end{proof}

\begin{Example}
\label{ex:2}
An example of an uncontrollable target property is the purity: $P(\rho) = \frac{1}{d} + v^2$. In the case of Markovian dephasing (\cref{ex:1}), $R\bv' + \bc \neq \bf{0}$ for states $\bv' \neq (0,0,v_z) $. For any such $\bv'$, \cref{eq:ineffective_h_condition_vector} holds.
\end{Example}

\subsubsection{Controllable target properties}
Having defined and characterized trivially controllable and uncontrollable target properties, we now define:

\begin{mydefinition}
A \emph{controllable target property} is a target property that is neither trivially controllable nor uncontrollable.
\end{mydefinition}

An example of a controllable target property is coherence preservation subject to dephasing, as discussed below, in \cref{sec:coherence}.

\subsection{$f$-preserving Hamiltonians for controllable target properties}

Next, we identify necessary conditions $f$-preserving Hamiltonians must satisfy when a target property is controllable.

\begin{mytheorem}
\label{th:main}
An $f$-preserving control Hamiltonian satisfies
\beq
\label{eq:matrix_hamiltonian_constraint_explicit}
\< H, i [\rho, \grad f] \> = \< \grad f, \lind_D \rho \> \ .
\eeq
Assuming $[\rho, \grad f]\ne 0$, a particular (but not unique) solution of \cref{eq:matrix_hamiltonian_constraint_explicit} is:
\beq
H = i\frac{\< \grad f, \lind_D \rho \>}{\norm{[\rho, \grad f]}^2} [\rho, \grad f]\ .
\label{eq:basic_control-H}
\eeq
\end{mytheorem}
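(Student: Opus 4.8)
The plan is to obtain the constraint \cref{eq:matrix_hamiltonian_constraint_explicit} directly by rearranging the $f$-preservation condition \cref{eq:constraint_matrix_dynamics_eq}, and then to verify by substitution that the explicit Hamiltonian \cref{eq:basic_control-H} solves it. First I would split the constraint into its Hamiltonian and dissipative pieces,
\beq
\< \grad f, -i[H, \rho]\> + \< \grad f, \lind_D \rho \> = 0\ ,
\eeq
and transfer the commutator onto $H$ using the adjoint (integration-by-parts) identity already recorded in \cref{eq:inffective_h_condition_matrix}, namely $\< \grad f, -i[H, \rho]\> = \< H, -i[\rho, \grad f]\>$, which follows from the cyclicity of the trace together with the Hermiticity of the relevant operators. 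Substituting and moving the dissipative term to the right then yields $\< H, i[\rho, \grad f]\> = \< \grad f, \lind_D \rho \>$, establishing the first claim.

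For the particular solution I would write $H = i c\,[\rho, \grad f]$ with the scalar $c = \< \grad f, \lind_D \rho \>/\norm{[\rho, \grad f]}^2$, first noting that $c$ is real (both $\grad f$ and $\lind_D \rho$ are Hermitian, so their Hilbert-Schmidt inner product is real, and the denominator is manifestly real and nonzero under the hypothesis $[\rho, \grad f]\ne 0$) and that $H$ is therefore Hermitian, since $i[\rho, \grad f]$ is Hermitian (a commutator of two Hermitian operators is anti-Hermitian) and $c$ is real. I would then evaluate the left side of \cref{eq:matrix_hamiltonian_constraint_explicit} on this $H$, collecting the two factors of $i$ to obtain $\< H, i[\rho, \grad f]\> = -c\,\Tr([\rho,\grad f]^2)$. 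The decisive step is the identity $\Tr(A^2) = -\Tr(A^\dagger A) = -\norm{A}^2$ for the anti-Hermitian operator $A \equiv [\rho, \grad f]$, which converts the expression into $c\,\norm{[\rho, \grad f]}^2 = \< \grad f, \lind_D \rho \>$, exactly as required.

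The computation is essentially routine, so the main thing to watch is bookkeeping rather than any genuine obstacle: keeping the two factors of $i$ consistent and correctly applying the anti-Hermiticity relation $\Tr(A^2) = -\norm{A}^2$, which is precisely what cancels the denominator. I would close by remarking that the solution is non-unique, since any Hermitian $H'$ orthogonal to $i[\rho, \grad f]$ in the Hilbert-Schmidt inner product may be added without affecting the constraint, and such a space is nontrivial whenever $\dim(\mc{H}) > 2$; the assumption $[\rho, \grad f]\ne 0$ is exactly the local controllability condition at $\rho$ that guarantees the denominator does not vanish.
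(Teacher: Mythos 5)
Your proposal is correct and follows essentially the same route as the paper: rearranging the constraint equation \cref{eq:constraint_matrix_dynamics_eq} via the adjoint identity $\< \grad f, -i[H,\rho]\> = \< H, -i[\rho,\grad f]\>$, and then checking \cref{eq:basic_control-H} by direct substitution (which the paper dispatches "by inspection" and you spell out, correctly, using $\Tr(A^2)=-\norm{A}^2$ for anti-Hermitian $A$). The only quibble is your closing remark that the orthogonal complement is nontrivial "whenever $\dim(\mc{H})>2$": since $i[\rho,\grad f]$ is a single traceless Hermitian operator in a $d^2$-dimensional real space, the solution is non-unique already for $d=2$.
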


\begin{proof}
\cref{eq:matrix_hamiltonian_constraint_explicit} follows directly from rearranging \cref{eq:constraint_matrix_dynamics_eq} using the identity $\< \grad f, i[H, \rho] \> = \< H, i[\rho, \grad f] \>$.  \cref{eq:basic_control-H} satisfies \cref{eq:matrix_hamiltonian_constraint_explicit} by inspection. 
\end{proof}

Note that, based on \cref{eq:basic_control-H}, the overall scale of $f$ cancels out, and the $f$-preserving component of the control Hamiltonian is linear in the dissipator.

Using \cref{def:f-pres-traj} of an $f$-preserving trajectory, we can now state that such trajectories arise in particular when the system evolution is obtained by applying a control Hamiltonian which obeys \cref{eq:matrix_hamiltonian_constraint_explicit}. 

It is clear that trivially controllable target properties correspond to \cref{eq:matrix_hamiltonian_constraint_explicit} being satisfied with $H=0$ for all states $\rho$. Likewise, uncontrollable target properties correspond to the l.h.s. of \cref{eq:matrix_hamiltonian_constraint_explicit} vanishing with the r.h.s. remaining non-zero. In \cref{eq:basic_control-H}, uncontrollable target properties correspond to $H$ diverging. We call a state and a time at which this happens
a \emph{breakdown point} and \emph{breakdown time}, respectively, and return to these concepts in \cref{sec:breakdown_pts_f_traj}.

\subsubsection{Geometric interpretation}
\cref{eq:matrix_hamiltonian_constraint_explicit} has a geometric interpretation: the dissipator can cause the state to move away from the level set, and since $\grad f$ is orthogonal to the level set, this is quantified by the magnitude of $\< \grad f, \lind_D \rho \>$ or $\grad f \cdot (R\bv +\bc)$. The control Hamiltonian can also push $\r$ away from the level set; however, its action is \emph{orthogonal} to $\r$ (since they appear together in a commutator, akin to the vector product; see \cref{sec:qubit_theory-a}). 
Therefore, to counteract the dissipator, the relevant Hamiltonian component is the one that is orthogonal to both $\rho$ and $\grad f$ (i.e., $\< H, i[\rho, \grad f] \>$), and the required magnitude is the same as that given by the dissipator.

\subsubsection{Control parameters}

The $d^2$ matrix elements of $H$ in a nice operator basis are the \emph{control parameters}. We call a control parameter \emph{relevant} when it is needed for satisfying the constraint  \cref{eq:matrix_hamiltonian_constraint_explicit}. Next, we address how many control parameters are relevant.

\begin{myproposition}
\label{prop:control-params}
    Let $m_0$ denote the multiplicity of the $0$ eigenvalue of $i[\rho, \grad f]$; then $H$ has exactly $d-m_0$ relevant control parameters with which to satisfy the constraint \cref{eq:matrix_hamiltonian_constraint_explicit}. In particular, the number of relevant control parameters is at most $d$.
\end{myproposition}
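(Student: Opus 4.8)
The plan is to reduce the constraint \cref{eq:matrix_hamiltonian_constraint_explicit} to a single scalar linear equation in the matrix elements of $H$, and then to count, using the spectral structure of the Hermitian operator $M\equiv i[\rho,\grad f]$, exactly how many of those matrix elements can enter with a nonzero coefficient. First I would note that $M=i[\rho,\grad f]$ is Hermitian (because $[\rho,\grad f]$ is anti-Hermitian for Hermitian $\rho$ and $\grad f$) and traceless (being a commutator). With this notation the constraint becomes the single real equation $\langle H,M\rangle=\Tr(HM)=c$, where $c\equiv\langle \grad f,\lind_D\rho\rangle$ is a fixed number independent of $H$. Thus the constraint is one linear functional of $H$, and a control parameter (a matrix element of $H$) is \emph{relevant} precisely when it appears in this functional with a nonzero coefficient.

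Next I would invoke the spectral theorem to write $M=\sum_{k=1}^{d}\lambda_k\,\ketb{k}{k}$ in an orthonormal eigenbasis $\{|k\rangle\}$, where by definition exactly $m_0$ of the $\lambda_k$ vanish and the remaining $d-m_0=\operatorname{rank}(M)$ are nonzero. I would then adopt a nice operator basis adapted to this eigenbasis: the diagonal operators $\ketb{k}{k}$ (recombined so that one element is $F_0\propto I$), together with the off-diagonal Hermitian operators $\tfrac{1}{\sqrt2}(\ketb{k}{l}+\ketb{l}{k})$ and $\tfrac{i}{\sqrt2}(\ketb{k}{l}-\ketb{l}{k})$ for $k<l$. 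Because $M$ is diagonal in this basis, the functional collapses to $\Tr(HM)=\sum_{k=1}^{d}\lambda_k H_{kk}$: every off-diagonal matrix element has zero coefficient and drops out, as does every diagonal element $H_{kk}$ for which $\lambda_k=0$.

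The conclusion then follows by counting. The only matrix elements that survive are the $d-m_0$ diagonal entries $H_{kk}$ with $\lambda_k\neq0$, so exactly $d-m_0$ control parameters are relevant; since $m_0\geq0$, this is at most $d$. As a consistency check, the limiting case $[\rho,\grad f]=0$ gives $M=0$, hence $m_0=d$ and zero relevant parameters, matching the uncontrollable/breakdown situation identified after \cref{th:main}.

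The step I expect to be the main obstacle is not the algebra but pinning down the notion of relevance so that the count is well defined. In a generic nice basis many more than $d-m_0$ matrix elements appear in $\sum_j h_j\langle F_j,M\rangle$; these extra parameters are merely linearly dependent through the single constraint, so I must argue that the intrinsic count—the one realized in the eigenbasis of $M$—equals $\operatorname{rank}(M)$. In doing so I would also verify that the tracelessness of $M$, i.e. $\sum_k\lambda_k=0$, is exactly what makes the identity direction $F_0$ carry coefficient $\langle F_0,M\rangle=0$, so that discarding it is automatic and does not affect the count.
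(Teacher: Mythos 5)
Your proposal is correct and follows essentially the same route as the paper's proof: diagonalize the Hermitian operator $i[\rho,\grad f]$, expand $H$ in that eigenbasis so the constraint collapses to $\sum_j h_{jj}\beta_j = \langle \grad f,\lind_D\rho\rangle$, and count the $d-m_0$ diagonal entries with nonzero eigenvalue. Your additional remarks on tracelessness, the $F_0$ direction, and the basis-dependence of ``relevance'' are sensible refinements but do not change the argument.
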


\begin{proof}
    The operator $i[\rho, \grad f]$ is Hermitian and has a spectral decomposition. After diagonalization it can be written as $\sum_{j=0}^{d-1}\beta_j \ketb{\beta_j}{\beta_j}$. In the same basis, $H$ can be expanded as $\sum_{j,k=0}^{d-1} h_{jk} \ketb{\beta_j}{\beta_k}$. Thus,  \cref{eq:matrix_hamiltonian_constraint_explicit} becomes:
    \beq
    \sum_{j=0}^{d-1} h_{jj}\beta_j = \< \grad f, \lind_D \rho \> \ .
    \eeq
    This is a linear equation involving scalars on both sides. Since $m_0$ is the number of zero $\b_j$'s, the sum on the l.h.s. is over the $d-m_0$ terms for which $\b_j\ne 0$. The corresponding $h_{jj}$ are the relevant control parameters for satisfying the constraint \cref{eq:matrix_hamiltonian_constraint_explicit}.
\end{proof}

Note that the remaining $d^2-(d-m_0)$ control parameters in $H$ are irrelevant to time-local property preservation, i.e., they play no role in satisfying \cref{eq:matrix_hamiltonian_constraint_explicit}. However, in practice, it may be advantageous to utilize these control parameters as well (instead of setting them to zero) if this simplifies the implementation of the control Hamiltonian. Moreover, modifying these parameters can also change the breakdown time (e.g., see \cref{subsec:bit-flip-discussion}).

\subsection{Stable points}
 \label{sec:hd_stable_points}
The goal of preserving the target property for as long as possible motivates the following definition: 

\begin{mydefinition}
\label{def:stable-points}
A \emph{stable point} of a given dissipator is a system state $\r$ for which the control Hamiltonian can be chosen so that $\r$, and hence any target property, is indefinitely preserved, i.e., $\dot{\r}=0$ $\forall t$. 
\end{mydefinition}

Note that this definition is closely related to the concept of decoherence-free subspaces in the Markovian setting~\cite{Zanardi:98a, ShabaniLidar:05a}, but unlike the latter, we do not assume the availability of additional qubits for encoding.

The converse of a stable point is:
\begin{mydefinition}
\label{def:unstable-points}
A \emph{unstable point} is a system state $\r$ for which $\dot{\r}\ne 0$ for all control Hamiltonians $H$.
\end{mydefinition}

Under an appropriate control, the system state does not change when it is at a stable point, and therefore, neither does the purity $P(\r)=\Tr\r^2$. However, a constant purity is not a sufficient condition for stable points:
\begin{myproposition}
\label{prop:Pdot=0}
For $d>2$, $\dot{P}=0$ is a necessary, but not sufficient, condition for stable points.
\end{myproposition}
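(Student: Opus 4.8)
The plan is to prove necessity by direct differentiation and then establish non-sufficiency by exhibiting, for every $d>2$, a state with $\dot P=0$ that admits no stabilizing control. Throughout I would work in the operator picture, using that $\lind_D\r$ is Hermitian and traceless.

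\textbf{Necessity.} This is immediate. Differentiating $P(\r)=\Tr\r^2$ gives $\dot P = 2\,\Tr(\r\dot\r)$, which vanishes whenever $\dot\r=0$; by \cref{def:stable-points} this is exactly the defining property of a stable point, so $\dot P=0$ there. (As in the computation preceding the proposition, the unitary part never contributes, $\Tr(\r\,[H,\r])=0$, so in fact $\dot P = 2\,\Tr(\r\,\lind_D\r)$ depends only on the dissipator.)

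\textbf{Reformulating stability.} Next I would recast the stable-point condition $\dot\r=0$ as the equation $-i[H,\r]=-\lind_D\r$ and ask when it is solvable for a Hermitian $H$. The real-linear map $L:H\mapsto -i[H,\r]$ on the space of Hermitian operators satisfies $\<L(H),G\>=-\<H,L(G)\>$ (it equals minus its transpose), and its kernel is the Hermitian commutant $\mc{C}_\r\equiv\{A=A^\dagger:[A,\r]=0\}$. Hence its image is exactly $\mc{C}_\r^\perp$, so a stabilizing $H$ exists iff $\<A,\lind_D\r\>=0$ for \emph{every} $A\in\mc{C}_\r$. By contrast, $\dot P=0$ is the single condition $\<\r,\lind_D\r\>=0$: orthogonality to $\r$ alone (orthogonality to $I$ being automatic from tracelessness of $\lind_D\r$). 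Thus $\dot P=0$ enforces only two of the orthogonality conditions — those against $I$ and $\r$ — that define membership in $\mc{C}_\r^\perp$.

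\textbf{The dimensional gap and counterexample.} For a non-degenerate $\r$, $\mc{C}_\r$ is the span of the $d$ eigenprojectors, so $\dim\mc{C}_\r=d$, while $\{I,\r\}$ spans only a $2$-dimensional subspace. When $d=2$ these exhaust $\mc{C}_\r$, so the two conditions coincide and $\dot P=0$ \emph{is} sufficient for such $\r$ — precisely why the proposition requires $d>2$. For $d>2$ there remain $d-2\ge 1$ independent directions, and I would use them as follows. Take $d=3$ and $\r=\mathrm{diag}(p_1,p_2,p_3)$ with distinct $p_j$, so $\mc{C}_\r$ is the diagonal subalgebra and the diagonal part of $\lind_D\r$ is a traceless vector $\mathbf{a}=(a_1,a_2,a_3)$. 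The constraints $\sum_j a_j=0$ (tracelessness) and $\sum_j p_j a_j=0$ ($\dot P=0$) are two linear conditions on three unknowns, leaving a one-parameter family of \emph{nonzero} solutions. Choosing such an $\mathbf{a}\ne\mathbf{0}$ and realizing it by a population-transfer dissipator with Lindblad operators $\ketb{j}{k}$ and suitable rates yields $\<\r,\lind_D\r\>=0$ while the commutant component of $\lind_D\r$ is nonzero; then $-i[H,\r]=-\lind_D\r$ has no solution, so $\r$ is an unstable point in the sense of \cref{def:unstable-points}. Padding with empty levels extends this to any $d>2$.

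\textbf{Main obstacle.} The routine parts are the necessity and the anti-symmetric-map argument identifying the image of $L$ with $\mc{C}_\r^\perp$. The one point needing care is confirming that the chosen diagonal vector $\mathbf{a}$ is produced by a legitimate generator of the form in \cref{eq:matrix_le} — ideally with nonnegative (Markovian) rates, although the framework also admits general time-local generators. Since population-transfer generators span all traceless diagonal population changes, this is achievable, and the crux is really the dimension count showing that orthogonality to $\r$ is strictly weaker than orthogonality to the full commutant $\mc{C}_\r$ exactly when $d>2$.
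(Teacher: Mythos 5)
Your proof is correct, and its core is the same dimension count as the paper's, but you execute it along a genuinely more explicit route. The paper's proof of \cref{prop:Pdot=0} establishes necessity exactly as you do, and for insufficiency it observes that $-i[H,\r]$ lives in the (at most) $d^2-d$ dimensional off-diagonal subspace in the eigenbasis of $\r$, while $\dot P=0$ only confines $\lind_D\r$ to the $d^2-2$ dimensional complement of $\bv$; since $d^2-d<d^2-2$ for $d>2$ it then simply asserts that some state and dissipator must fail. You instead characterize the image of $H\mapsto -i[H,\r]$ as $\mc{C}_\r^\perp$ via the antisymmetry of that map --- which is precisely the content of the paper's \cref{lm:commutator_structure}, stated there as $\Pi_l A\Pi_l=0$ and used later for \cref{thm:stable_condition} --- and you compare $\dim\mc{C}_\r=d$ (nondegenerate $\r$) against the two conditions coming from $I$ and $\r$. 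The two counts are dual ($d^2-d$ versus $d^2-2$ in the ambient space is the same gap as $d$ versus $2$ in the commutant), so nothing conceptually new is needed; what your version buys is (i) a clean explanation of why $d=2$ is excluded (the commutant is then exhausted by $I$ and $\r$, foreshadowing \cref{prop:locus}), and (ii) an explicit counterexample: a nonzero traceless diagonal $\mathbf{a}\perp(p_1,p_2,p_3)$ realized by population-transfer Lindblad operators $\ketb{j}{k}$ with nonnegative rates. That last step actually closes a small gap the paper leaves implicit --- the dimension argument alone does not verify that a legitimate generator of the form \cref{eq:matrix_le} can produce a $\lind_D\r$ outside the image of the commutator map --- and your observation that the $\ketb{j}{k}$ generators span all traceless diagonal population changes settles it. Your padding to general $d>2$ is fine since the padded state remains a valid (if rank-deficient) density matrix and the enlarged commutant only adds constraints.
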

\begin{proof}
 The rate of change of the purity  is
\beq
\label{eq:dotP}
\dot{P}(\r)=2\Tr(\r\dot{\r}) = 2\Tr[\r \mc{L}_D(\r)]\ ,
\eeq
or, equivalently, $P(\bv) = \frac{1}{d}+\|\bv\|^2$ and
\beq
\label{eq:dotP-v}
\dot{P}= 2\bv\cdot\dot{\bv} = 2\bv \cdot (R\bv+\bc)\ .
\eeq
Note that $\dot{P}$ is independent of the control Hamiltonian, since $\Tr(\r[H,\r])=0$.
Consider a coherence vector $\bv \in \mathbb{R}^{d^2-1}$, with a corresponding $d\times d$-dimensional density matrix $\r$. For a stable point, $\dot \bv = 0$; therefore $\dot{P}= 2\bv \cdot \dot{\bv} = 0$, which proves necessity. For insufficiency, note first that $\bv$ has a $d^2-2$ dimensional orthogonal space. Following the same argument as in \cref{prop:control-params}, we can diagonalize the density matrix as $\r=\sum_{j=0}^{d-1}\b_j \ketb{\b_j}{\b_j}$ and write $H$ in the same basis as $H=\sum_{k,l=0}^{d-1}h_{kl} \ketb{\b_k}{\b_l}$, so that $-i[H,\rho]=-i\sum_{k\ne l} h_{kl}(\b_l-\b_k)\ketb{\b_k}{\b_l}$. Thus, $-i[H,\r]$ belongs to an orthogonal subspace of dimension at most $d^2-d < d^2-2$ when $d>2$. Therefore there exist states $\rho$ and dissipators $\lind_D$ for which no $H$ can found such that $-i[H,\rho] + \lind_D\rho = 0$.
\end{proof}
We show later (\cref{prop:locus}) that in the qubit case ($d=2$), $\dot{P}=0$ is also a sufficient condition for stable points.

Note that stable points are inherent to the dissipator, not the target property. The definition of stable points along with \cref{eq:rhodot} implies that there exists a Hamiltonian $H$ such that $\mc{L}(\r) = \lind_H(\rho) + \lind_D(\rho)=0$.
The following proposition builds on this and further characterizes stable points:

\begin{myproposition}
    \label{thm:stable_condition}
    A system state is stable if it satisfies $\Pi_\lambda \lind_D (\rho) \Pi_\lambda=0$ for every eigenspace projector $\Pi_\lambda$ of $\rho$.
\end{myproposition}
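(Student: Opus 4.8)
The plan is to unpack \cref{def:stable-points} and reduce stability to a linear-algebra solvability question. By \cref{def:stable-points} together with \cref{eq:rhodot}, the state $\r$ is stable precisely when there exists a bounded Hermitian control Hamiltonian $H$ with $\lind_H(\r)+\lind_D(\r)=0$, i.e. $i[H,\r]=\lind_D(\r)$. So I would set $M\equiv\lind_D(\r)$ and ask: when can the operator equation $i[H,\r]=M$ be solved for Hermitian $H$? First I would record two structural facts about $M$: it is Hermitian (each GKLS term in \cref{eq:L_a} preserves Hermiticity and the rates are real) and traceless (the dissipator is trace-preserving). The latter makes it automatically compatible with the right-hand side, since $i[H,\r]$ is always traceless.

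The key step is to pass to an eigenbasis $\{\ket{j}\}$ of $\r$, with $\r\ket{j}=p_j\ket{j}$. A one-line computation gives the matrix elements $\bra{j}\, i[H,\r]\,\ket{k}=i(p_k-p_j)H_{jk}$. The decisive observation is that whenever $\ket{j}$ and $\ket{k}$ lie in the same eigenspace ($p_j=p_k$), this vanishes identically, independently of $H$; equivalently, $\Pi_\lambda\, i[H,\r]\,\Pi_\lambda=0$ for every eigenprojector $\Pi_\lambda$ and every $H$. Consequently the hypothesis $\Pi_\lambda\lind_D(\r)\Pi_\lambda=0$ is exactly the consistency condition for solvability: the within-eigenspace blocks of $M$ must match the (necessarily zero) within-eigenspace blocks of $i[H,\r]$.

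With consistency secured, I would exhibit an explicit solution by inverting the commutator on the cross-eigenspace entries: set $H_{jk}=M_{jk}/\!\left[i(p_k-p_j)\right]$ whenever $p_j\ne p_k$, and choose the within-eigenspace entries freely (e.g. zero, since they do not enter the commutator). Because the cross-eigenspace denominators are bounded away from zero by the spectral gaps of $\r$, the resulting $H$ has finite norm and is therefore an admissible control. The last check is Hermiticity: using $M_{jk}=\overline{M_{kj}}$ one verifies $\overline{H_{kj}}=H_{jk}$, so $H=H^\dagger$. This $H$ achieves $i[H,\r]=\lind_D(\r)$, hence $\dot\r=0$; keeping this $H$ fixed at the fixed point then preserves $\r$ for all $t$, so $\r$ is stable.

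I expect the main conceptual obstacle to be the middle step, namely recognizing that the within-eigenspace blocks of any commutator $i[H,\r]$ vanish, which turns the seemingly opaque hypothesis into the natural obstruction for solving $i[H,\r]=\lind_D(\r)$; once that is seen, the remainder is a routine construction plus the Hermiticity and boundedness verifications. A minor point worth flagging is that the proposition asserts sufficiency only, but the same eigenbasis computation shows the block condition is also necessary, paralleling the diagonalization arguments already used in \cref{prop:Pdot=0} and \cref{prop:control-params}.
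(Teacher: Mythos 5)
Your proposal is correct and follows essentially the same route as the paper: the paper isolates your middle step as \cref{lm:commutator_structure} (the within-eigenspace blocks of any commutator $-i[H,\rho]$ vanish, and conversely any such operator is a commutator via $h_{jk}=-i\,a_{jk}/(\lambda_j-\lambda_k)$) and then applies it to $\lind_D(\rho)$, which is exactly your eigenbasis computation and explicit inversion. Your added remarks on Hermiticity, boundedness via the spectral gaps, and the necessity of the block condition are consistent with the paper's lemma, which also establishes the converse.
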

        
The proof needs the following ingredients and lemma:
Given a Hermitian matrix $B$ with eigenspace projectors $\{\Pi_l\}$, define the set $\mathcal{E}_B\equiv\{A=-i[H,B]: H=H^\dag\}$ and $\mathcal{E}_B'\equiv\{A: \Pi_l A \Pi_l=0 \;\forall\; \Pi_l$\}.

 \begin{mylemma}
 \label{lm:commutator_structure}
$A\in \mathcal{E}_B \iff A \in \mathcal{E}_B'$.
\end{mylemma}

This lemma can be viewed as characterizing the action of the Hamiltonian ($H$) as inducing a zero instantaneous change in the eigenspace of the density matrix ($B$).

\begin{proof}
Assume that $A \in \mathcal{E}_B$. Then $A=\lind_H(B)=-i[H,B]$ for some Hamiltonian $H$. The spectral decomposition of $B$ is $\sum_{j=1}^d\lambda_j \Pi_j$, and $H$ can be written in the same basis as $\sum_{j,k=1}^d h_{jk}\ketb{\lambda_j}{\lambda_k}$, where $\Pi_j \ket{\lambda_k} = \d_{\lambda_j \lambda_k}\ket{\lambda_k}$. Expanding the commutator yields:
\beq
\label{eq:commutator}
            A=-i[H,B] = -i\sum_{j,k=1}^d h_{jk}(\lambda_k-\lambda_j)\ketb{\lambda_j}{\lambda_k}\ ,
\eeq
so that for every eigenspace projector $\Pi_l$ of $B$:
\beq
\Pi_l A \Pi_l = -i \sum_{j,k=1}^d h_{jk}(\lambda_k-\lambda_j) \Pi_l \d_{\lambda_l \lambda_j}\d_{\lambda_k \lambda_l} = 0 \ ,
\eeq
so $A \in \mathcal{E}_B'$.
    
Now assume that $A \in \mathcal{E}_B'$. Then $\Pi_l A \Pi_l=0$ for all eigenspace projectors $\Pi_l$ of $B$. Consequently, $A$ can be expanded in the eigenbasis of $B$ as $\sum_{k,l} a_{kl}\ketb{\lambda_k}{\lambda_l}$ such that $a_{kl}=0$ whenever $\lambda_k=\lambda_l$. The Hamiltonian $H$ given by $\sum_{j,k} h_{jk}\ketb{\lambda_j}{\lambda_k}$ where
\beq
\label{eq:h_jk}
h_{jk} =
\begin{cases*}
          -i\frac{a_{jk}}{\lambda_j-\lambda_k} & if $\lambda_j\neq\lambda_k$ \\
          0        & otherwise
\end{cases*}\ ,
\eeq
satisfies the relation $A=-i[H,B]$, which can be verified using \cref{eq:commutator}. Therefore $A \in \mathcal{E}_B$.
\end{proof}

We are now ready to prove \cref{thm:stable_condition}.

\begin{proof}
    If $\Pi_\lambda \lind_D (\rho) \Pi_\lambda=0$ for every eigenspace projector $\Pi_\lambda$ of $\rho$, then by \cref{lm:commutator_structure}, $\lind_D (\rho)$ can be written as $-i[H',\rho]$ for some Hermitian matrix $H'$. We can set the Hamiltonian to be $-H'$, leading to $\dot{\r} = 0$ in \cref{eq:matrix_le}, thus preserving that state and by extension, any target property.
\end{proof}

Assuming that stable points exist for a given dissipator, the ultimate goal of the quantum tracking control protocol could be seen as \emph{guiding the system from a given initial state to a stable point}, if possible. The reason is self-evident: reaching a stable point guarantees that the target property will be preserved. However, the stable point might not be reachable from a given initial state by applying a finite control Hamiltonian. We discuss this in detail next.

\subsection{Breakdown points, $f$-preserving trajectories}
\label{sec:breakdown_pts_f_traj}

At the opposite extreme of stable points are singularities of the control Hamiltonian associated with uncontrollable target properties. We call such singularities \emph{breakdown points}; they are points at which an $f$-preserving control Hamiltonian [\cref{eq:basic_control-H}] diverges. More formally:

\begin{mydefinition}
\label{def:breakdown-gen}
For a given target property $f$ and dissipator $\lind_D$, a state is called a \emph{breakdown point} $\rho_b$ 
if (a) $[\rho_b, \grad f] = 0$ and (b) $\< \grad f, \lind_D \rho_b \>\neq 0$.
The \emph{breakdown time} $t_b$ is the evolution time to a breakdown point. 
\end{mydefinition}

The first example of a breakdown point and time was identified in Ref.~\cite{LidarSchneider:04} in the context of a dephasing Lindbladian (\cref{ex:1}) with coherence as the target property. More generally, singularities are a well-known
feature of tracking control \cite{Chen:95,Lu:95,Gross:93,Zhu:99}.

\begin{myproposition}
\label{th:no-breakdown-on-f}
    An $f$-preserving trajectory does not pass through breakdown points.
\end{myproposition}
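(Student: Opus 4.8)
The plan is to argue by contradiction, exploiting the definitions of an $f$-preserving trajectory and a breakdown point, which are essentially incompatible. Recall that an $f$-preserving trajectory is a solution of both the dynamics equation and the constraint equation \cref{eq:constraint_matrix_dynamics_eq}, which states that $\< \grad f, -i[H,\rho] + \lind_D \rho\> = 0$ along the whole trajectory. A breakdown point $\rho_b$, by \cref{def:breakdown-gen}, satisfies (a) $[\rho_b, \grad f] = 0$ and (b) $\< \grad f, \lind_D \rho_b\> \neq 0$. First I would suppose, for contradiction, that some $f$-preserving trajectory passes through a breakdown point $\rho_b$ at some time $t_b \in [0,t_f]$.

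The key computation is to evaluate the constraint equation at $\rho_b$. At that point the constraint reads $\< \grad f, -i[H,\rho_b]\> + \< \grad f, \lind_D \rho_b\> = 0$. Using the adjoint identity already invoked in the proof of \cref{th:main}, namely $\< \grad f, i[H,\rho_b]\> = \< H, i[\rho_b, \grad f]\>$, the Hamiltonian term becomes $-\< H, i[\rho_b, \grad f]\>$. By breakdown condition (a), $[\rho_b, \grad f] = 0$, so this term vanishes identically for \emph{every} Hamiltonian $H$, no matter how it is chosen. Consequently the constraint equation collapses to $\< \grad f, \lind_D \rho_b\> = 0$, which directly contradicts breakdown condition (b). Hence no $f$-preserving trajectory can pass through $\rho_b$, establishing the proposition.

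I would present this compactly as a chain of equalities, being careful to state at which step each breakdown condition is used, and to emphasize that the vanishing of the Hamiltonian contribution is independent of the control — this is precisely what makes the point a genuine obstruction rather than something a clever choice of $H$ could repair. It is worth remarking that this is the trajectory-level counterpart of the observation already made after \cref{th:main}: at an uncontrollable point the left-hand side of \cref{eq:matrix_hamiltonian_constraint_explicit} vanishes while the right-hand side does not, so \cref{eq:basic_control-H} forces $H$ to diverge. The present proposition simply translates that static incompatibility into the statement that such a point cannot lie on any valid $f$-preserving trajectory.

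I do not anticipate a serious obstacle here, since the result is essentially a direct unwinding of the definitions; the only mild subtlety is being explicit that condition (a) kills the Hamiltonian term for \emph{all} $H$ simultaneously, so that one cannot evade the contradiction by a different choice of control. One could additionally phrase the argument in the coherence-vector language using \cref{eq:constraint_matrix_dynamics_eq-CV} and the characterization $\grad f \propto \bv$ at breakdown, but the operator form above is the cleanest and most self-contained, so that is the route I would take.
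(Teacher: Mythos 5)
Your proposal is correct and follows essentially the same route as the paper: both argue by contradiction through the constraint equation, using $[\rho_b,\grad f]=0$ to kill the Hamiltonian term for every $H$ and thereby force $\< \grad f, \lind_D \rho_b\>=0$, contradicting breakdown condition (b). The paper simply states this more tersely by invoking \cref{eq:matrix_hamiltonian_constraint_explicit} directly, while you unwind the adjoint identity explicitly; the substance is identical.
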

\begin{proof}
    Suppose $\rho$ is a breakdown point on the $f$-preserving trajectory. By definition, $i[\rho, \grad f] = 0$. But from \cref{eq:matrix_hamiltonian_constraint_explicit}, this means $\< \grad f, \lind_D \rho_b \> = 0$, which contradicts our assumption.
\end{proof}

Next, we argue that the set of breakdown points introduces holes or partitions in the $f$ level set hypersurface. This hypersurface is a geometric object in $\mc{M}^{(d)}$, the convex set of coherence vectors mentioned in \cref{sec:background}.
The control Hamiltonian can preserve the target property within each such partition but not between different partitions. 
To formalize this, let the level set of $f$ be represented by the surface $S_f\in\mc{M}^{(d)}$, let $\mathbb{B}_{f, D}$ be the set of breakdown points given $f$ and $\lind_D$ (or $D$), and let $S'_f \equiv S_f\setminus\mathbb{B}_{f, D}$ (the level set minus the set of breakdown points). 

\begin{mydefinition}
Two states $\rho_a$ and $\rho_{b}$ with respective coherence vector $\bv_a\in S_f$ and $\bv_b\in S_f$ are \emph{disconnected} if for all differentiable trajectories $\bl: [0,1] \mapsto \mc{M}^{(d)}$ with beginning and end points $\bl(0)=\bv_{a}$ and $\bl(1)=\bv_{b}$, there exists $u\in(0,1)$ s.t. $\bl(u) \notin S'_f$.
\end{mydefinition}
In other words, every trajectory must pass through one or more breakdown points to connect the disconnected states $\rho_a$ and $\rho_{b}$.
The following clarifies what this means in terms of target-property-preserving trajectories:

\begin{myproposition}
\label{th:disconnect_breakdown}
A system starting from a point in $S'_f$ cannot reach a disconnected point in $S'_f$ via an $f$-preserving trajectory.
\end{myproposition}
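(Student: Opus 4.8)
The plan is to argue by contradiction, using two facts established earlier: that an $f$-preserving trajectory lies entirely within the level set $S_f$ (immediate from its definition), and that it cannot pass through a breakdown point (\cref{th:no-breakdown-on-f}). First I would suppose, for contradiction, that there exists an $f$-preserving trajectory with coherence-vector representation $\bv(t)$, $t\in[0,t_f]$, whose endpoints $\bv(0)=\bv_a$ and $\bv(t_f)=\bv_b$ are disconnected points of $S'_f$. Since the trajectory is $f$-preserving, \cref{def:f-pres-traj} together with the constraint equation gives $f[\r(t)]=f[\r(0)]$ for all $t$, so $\bv(t)\in S_f$ for every $t\in[0,t_f]$; the entire image of the trajectory lies on the level set.

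Next I would recast this time-parametrized trajectory as an admissible path for the disconnectedness definition. Using the reparametrization map $\varphi:[0,1]\mapsto[0,t_f]$ of \cref{def:equiv-traj} (concretely the affine choice $\varphi(u)=t_f\,u$, which is bijective, continuously differentiable, and has $\varphi'>0$), I obtain an equivalent differentiable path $\bl(u)\equiv\bv(\varphi(u))$ with $\bl(0)=\bv_a$, $\bl(1)=\bv_b$, and the same image as $\bv(\cdot)$. Because $\bv_a$ and $\bv_b$ are disconnected, the definition guarantees some $u^*\in(0,1)$ with $\bl(u^*)\notin S'_f$. Since the image of $\bl$ coincides with that of the $f$-preserving trajectory, $\bl(u^*)=\bv(\varphi(u^*))\in S_f$; combined with $\bl(u^*)\notin S'_f=S_f\setminus\mathbb{B}_{f,D}$, this forces $\bl(u^*)\in\mathbb{B}_{f,D}$, i.e., the trajectory passes through a breakdown point. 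This directly contradicts \cref{th:no-breakdown-on-f}, completing the argument.

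I expect the only step requiring care to be the reparametrization: confirming that an $f$-preserving trajectory, which is defined as a time-parametrized solution of the dynamics and constraint equations on $[0,t_f]$, genuinely qualifies as one of the ``differentiable trajectories $\bl:[0,1]\mapsto\mc{M}^{(d)}$'' quantified over in the disconnectedness definition. This is handled by the reparametrization freedom of \cref{def:equiv-traj}, noting that differentiability of $\bv(t)$ (inherited from the smooth control Hamiltonian and the linear dynamics \cref{eq:le}) is preserved under $\varphi$, and that no injectivity of the path is required. Everything else reduces to definition-chasing and the two prior results.
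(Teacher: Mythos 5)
Your proposal is correct and follows essentially the same route as the paper: assume an $f$-preserving trajectory connects the two disconnected points, invoke the definition of disconnectedness to produce a point of the trajectory outside $S'_f$, and observe that this point is either off the level set (impossible by $f$-preservation) or a breakdown point (impossible by \cref{th:no-breakdown-on-f}). The paper leaves the dichotomy implicit and skips the reparametrization to $[0,1]$, which you spell out carefully, but the substance is identical.
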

    
\begin{proof} 
Suppose an $f$-preserving trajectory does exist between two disconnected states $\rho_{a}$ and $\rho_{b}$. Consider a state $\rho'$ and corresponding coherence vector $\bv'$ on the portion of the trajectory lying outside $S'_f$. At this point, either $f(\rho') \neq f(\rho_{a})$ (not on the level set) or $\bv'\in \mathbb{B}_{f, D}$ (a breakdown point, i.e., the control Hamiltonian diverges). 
\end{proof}

An $f$-preserving trajectory can reach a stable point from an unstable point, but not always: there are cases where a stable point can only be reached from another stable point (see \cref{sec:dephasing-bd}).

\subsection{Control-independent characterization of $f$-preserving trajectories}

Suppose one wishes to steer a system from a given initial point $\r$ (e.g., an unstable point) to a final point $\s$ (e.g., a stable point) while preserving the target $f$.
From a Hamiltonian-first perspective, we choose a control Hamiltonian and evolve the state.
However, this approach does not readily yield a Hamiltonian control and state trajectory compatible with the final state and the $f$-preservation constraint.
Here, we address an alternative way of approaching the problem: we choose a trajectory instead of a Hamiltonian and determine whether such a given trajectory between two such given points is realizable, as defined in \cref{sec:realizable-traj}.

\subsubsection{A control-independent characterization of realizable trajectories}
\label{sec:characteristics_of_f_trajectory}

The system's evolution is captured by two different processes in \cref{eq:rhodot}: $\lind_H$, which generates unitary dynamics, and $\lind_D$, which generates non-unitary dynamics.
This implies that the dissipation-adjusted system evolution is unitary: $\dot{\r}(t) - \lind_D\rho(t) = \lind_H\rho(t)$. 
It turns out that the set of all trajectories generated by varying Hamiltonians can be characterized by this insight via \cref{lm:commutator_structure}.
We formalize this next.

Consider a parametrized trajectory $\bl : [0,1] \to \mc{M}^{(d)}$.
This corresponds to an operator-valued trajectory $\s : [0,1] \to \mc{B}_+(\mc{H})$ using the nice operator basis.
Let $\{\lambda_i\}$ and $\{\Pi_i\}$ represent the eigenvalues and corresponding eigenprojectors of a Hermitian operator.
Recall that a realizable trajectory is a solution of \cref{eq:matrix_le} for some Hamiltonian $H$, or, equivalently, \cref{eq:le} for some $Q$.

\begin{mylemma}
 \label{th:hd_general_trajectory}
 Let $\s(u) = \sum_i \lambda_i(u)\Pi_i(u)$ be the spectral decomposition of $\s(u)$. Then, the corresponding trajectory $\bl(u)$ is equivalent to a realizable trajectory
 iff $\ \forall \Pi_i(u)$, $u \in [0,1]$:
    \begin{equation}
        \label{eq:18}
            \Pi_i(u) [\partial_u \s(u) - c(u) \lind_D \s(u)]\Pi_i(u) =  0\ ,
        \end{equation}
 for some $c(u)$ where $0 < c(u) < \infty$.
\end{mylemma}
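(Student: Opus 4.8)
The plan is to prove both directions of the iff, using \cref{lm:commutator_structure} as the central tool that converts the commutator-form unitary dynamics into the eigenspace-projection condition in \cref{eq:18}.

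\medskip

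\noindent\textbf{Forward direction ($\Rightarrow$).} Suppose $\bl(u)$ is equivalent to a realizable trajectory. By \cref{def:equiv-traj}, there is a reparametrization $\varphi$ with $\varphi'>0$ relating $\bl(u)$ (parametrized by $u\in[0,1]$) to a time-parametrized solution $\bv(t)$ of \cref{eq:le}, or equivalently $\s(u)$ to a solution $\r(t)$ of \cref{eq:matrix_le}. First I would write the chain rule $\partial_u \s(u) = \varphi'(u)\,\dot{\r}(\varphi(u))$ and substitute the dynamics equation $\dot{\r} = -i[H,\r] + \lind_D\r$. This gives $\partial_u\s(u) = \varphi'(u)\big(-i[H,\r] + \lind_D\r\big)$. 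Setting $c(u) = \varphi'(u) > 0$ and rearranging, the quantity $\partial_u\s(u) - c(u)\lind_D\s(u)$ equals $\varphi'(u)\,(-i[H,\r]) = -i[H', \s(u)]$ for the rescaled Hamiltonian $H' = \varphi'(u)H$ (using that $\s(u)=\r(\varphi(u))$). Since this is of the form $-i[H',\s]$, it lies in $\mathcal{E}_{\s(u)}$, so by \cref{lm:commutator_structure} it lies in $\mathcal{E}'_{\s(u)}$, which is precisely the statement $\Pi_i(u)[\partial_u\s - c\lind_D\s]\Pi_i(u)=0$ for all eigenprojectors $\Pi_i(u)$ of $\s(u)$. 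The finiteness $0<c(u)<\infty$ holds because $\varphi'$ is a positive continuous derivative of a bijective map on a compact interval.

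\medskip

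\noindent\textbf{Reverse direction ($\Leftarrow$).} Suppose \cref{eq:18} holds for some $c(u)$ with $0<c(u)<\infty$. The quantity $A(u) \equiv \partial_u\s(u) - c(u)\lind_D\s(u)$ then satisfies $\Pi_i(u) A(u)\Pi_i(u)=0$ for every eigenprojector of $\s(u)$, so $A(u)\in\mathcal{E}'_{\s(u)}$. By \cref{lm:commutator_structure}, $A(u) = -i[\tilde H(u),\s(u)]$ for some Hermitian $\tilde H(u)$ (explicitly constructible via \cref{eq:h_jk}). Thus $\partial_u\s(u) = -i[\tilde H(u),\s(u)] + c(u)\lind_D\s(u)$. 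To exhibit an equivalent \emph{realizable} trajectory, I would define the time reparametrization by $\varphi(u) = \int_0^u c(u')\,du'$, which is continuously differentiable with $\varphi'=c>0$, hence strictly increasing and bijective onto $[0,t_f]$ with $t_f=\varphi(1)$. Defining $\r(t)\equiv\s(\varphi^{-1}(t))$ and $H(t) \equiv \tilde H(\varphi^{-1}(t))/c(\varphi^{-1}(t))$, the chain rule converts the $u$-equation into $\dot{\r}(t) = -i[H(t),\r(t)] + \lind_D\r(t)$, which is exactly \cref{eq:matrix_le}. Hence $\r(t)$ is $H$-realizable, and by construction $\s = \r\circ\varphi$ with $\varphi'>0$, so $\bl$ is equivalent to this realizable trajectory per \cref{def:equiv-traj}.

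\medskip

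\noindent\textbf{Main obstacle.} The routine parts are the chain-rule bookkeeping and the invocation of \cref{lm:commutator_structure}. The delicate point I expect to need care is the role of $c(u)$ and the reparametrization: I must ensure that absorbing $c(u)$ into the time variable is legitimate, i.e., that $\varphi$ defined by integrating $c$ is genuinely a valid equivalence map (continuously differentiable, positive derivative, bijective). This requires $c$ to be strictly positive and bounded — precisely the hypothesis $0<c(u)<\infty$ — and also that $c(u)$ can be chosen continuous enough for $\varphi$ to be $C^1$; I would note that since $\partial_u\s$ and $\lind_D\s$ are continuous, $c$ can be taken continuous. A secondary subtlety is that the eigenprojectors $\Pi_i(u)$ may change rank or cross (degeneracy points) along the trajectory, so I should remark that the condition \cref{eq:18} is required to hold at each fixed $u$ for the instantaneous spectral decomposition, and that \cref{lm:commutator_structure} applies pointwise in $u$ without needing smooth global choices of eigenvectors.
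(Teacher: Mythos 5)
Your proof is correct and follows essentially the same route as the paper's: both directions reduce to \cref{lm:commutator_structure} applied pointwise, with the reparametrization $\varphi(u)=\int_0^u c$ and the rescaled Hamiltonian $H(t)=\tilde H(\varphi^{-1}(t))/c(\varphi^{-1}(t))$ exactly as in the paper's argument. Your added remarks on the continuity of $c$ and on degeneracy points of the eigenprojectors are sensible refinements the paper leaves implicit, but they do not change the structure of the proof.
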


\begin{proof}
$\implies$: 
    Since, by assumption, \cref{eq:18} holds for all eigenspace projectors $\Pi_i(u)$, by \cref{lm:commutator_structure} there exists a Hermitian operator $H(u)$ such that:
    \begin{align}
            \label{duCu}
            &\partial_u\s(u) - c(u) \lind_D \s(u) = -i[H(u), \s(u)] \ .
    \end{align}
At this point, it is already clear that $\s(u)$ represents a solution of \cref{eq:matrix_le} for some Hamiltonian, and therefore corresponds to a realizable trajectory. However, some extra work is required to account for the parametrization.

    Reparameterizing with $dt = c(u)du$ yields a transform from the interval $I_1=[0,1]$ to a new interval $I_2 = [0,\varphi(1)]$ where $\varphi(s) = \int_{0}^{s} c(u) du$. Since $c(u)$ is positive, the inverse transform $M^{-1}:I_2\to I_1$ is also well defined. Defining $\rho(t)\equiv \s(\varphi^{-1}(t))$ and $H_0(t)\equiv \frac{1}{c(\varphi^{-1}(t))}H(\varphi^{-1}(t))$ for $t\in I_2$, \cref{duCu} becomes
    \beq
      \dot{\r}(t) = -i[H_0(t),\rho(t)] + \lind_D \rho(t)\ .
    \eeq
    Since $c(u)$ is finite and non-zero $\forall u \in I_1$, it follows that $H_0(t)$ is finite $\forall t \in I_2$.
    Moreover, if the nice operator basis representation of $\rho(t)$ is $\bv(t)$, then $\bv(t)=\bv(\varphi(u)) = \bl(u)$ where $\varphi$ is continuously differentiable, and $\partial_u \varphi(u) > 0$. Therefore, by \cref{def:equiv-traj}, $\bl(u)$ is equivalent to the realizable trajectory $\bv(t)$.
    
    $\impliedby$: Suppose $\bl : [0,1] \mapsto \mc{M}^{(d)}$ is equivalent to a realizable trajectory $\bv : [0,t_f] \mapsto \mc{M}^{(d)}$, which corresponds to a density operator evolution given by $\rho(t)$. Then, by \cref{def:equiv-traj}, $\bl(u) = \bv(\varphi(u))$ for some continuously differentiable map $\varphi : [0,1] \mapsto [0,t_f]$. In the nice operator basis, this corresponds to $\s(u)=\rho(\varphi(u))$ where $\s$ is the representation of $\bl$ in the operator space. Since $\bv$ is realizable
    \beq
      \dot{\r}(t) = -i[H(t),\rho(t)] + \lind_D \rho(t)\ .
    \eeq
    for some $H(t)$. Substituting $t=\varphi(u)$ gives
        \begin{align}
            \frac{1}{\partial_u \varphi(u)} \partial_u\s(u) &= -i[H(\varphi(u)),\s(u)] + \lind_D \s(u)\ ,
\end{align}
which implies
\beq
\partial_u\s(u) - \partial_u \varphi(u)\lind_D \s(u) = -i[\partial_u \varphi(u) H(\varphi(u)),\s(u)]\ ,
\eeq
    where $\partial_u \varphi(u) > 0$. Finally, applying \cref{lm:commutator_structure} gives the required result.
\end{proof}

Once we have a realizable trajectory, we can find the control Hamiltonian which realizes it:
\begin{mytheorem}
\label{th:realizable-trajectory-ham}
Let $\s(u), c(u)$ be as in \cref{th:hd_general_trajectory}. Then the actual realizable trajectory is given by the state $\r(t) = \s(\varphi^{-1}(t))$ where $\varphi(s) = \int_{0}^{s} c(u) du$.

Moreover, a control Hamiltonian which realizes this trajectory is 
\bes
\begin{align}
H(t) &= \sum_{j,k} h_{jk}(t)\ketb{\lambda_j(t)}{\lambda_k(t)} \\
h_{jk}&=-i\frac{\bra{\lambda_j}\dot{\r}-\lind_D\r\ket{\lambda_k}}{\lambda_j-\lambda_k}\ ,
\end{align}
\ees
where $\{\lambda_j(t)\}$, $\{\ket{\lambda_j(t)}\}$ denote the eigenvalues and eigenvectors of $\rho(t)$ respectively.
\end{mytheorem}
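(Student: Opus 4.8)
The plan is to recognize that both assertions are essentially repackagings of the constructive forward direction of \cref{th:hd_general_trajectory} together with the explicit Hamiltonian formula \cref{eq:h_jk} supplied by \cref{lm:commutator_structure}. For the first assertion—that the realizable trajectory is $\r(t)=\s(\varphi^{-1}(t))$ with $\varphi(s)=\int_0^s c(u)\,du$—I would point out that this object was already constructed inside the proof of \cref{th:hd_general_trajectory}: the reparametrization $dt=c(u)\,du$ was introduced there, $\r(t)$ was defined in exactly this way, and the resulting dynamics $\dot{\r}(t)=-i[H_0(t),\r(t)]+\lind_D\r(t)$ was verified. Since $c(u)>0$ throughout, $\varphi$ is strictly increasing and hence invertible, so $\varphi^{-1}$ is well defined and the first assertion requires only citing that construction.

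For the Hamiltonian, I would apply \cref{lm:commutator_structure} with $B=\r(t)$ and $A=\dot{\r}(t)-\lind_D\r(t)$. The first step is to verify the hypothesis $A\in\mathcal{E}_B'$, i.e.\ that $\Pi_l A\Pi_l=0$ for every eigenprojector $\Pi_l$ of $\r(t)$. This is the one place where the reparametrization must be tracked carefully: writing $u=\varphi^{-1}(t)$, the chain rule gives $\dot{\r}(t)=\partial_u\s(u)/c(u)$, so that
\beq
\dot{\r}(t)-\lind_D\r(t)=\frac{1}{c(u)}\bigl[\partial_u\s(u)-c(u)\,\lind_D\s(u)\bigr]\ .
\eeq
Because $\r(t)$ and $\s(u)$ are the same operator, they share eigenprojectors, $\Pi_i(t)=\Pi_i(u)$; sandwiching this identity between $\Pi_i$ reproduces, up to the positive factor $1/c(u)$, exactly the left-hand side of \cref{eq:18}, which vanishes by hypothesis. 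Hence $A\in\mathcal{E}_B'$.

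With the hypothesis in hand, \cref{lm:commutator_structure} yields a Hermitian $H$ with $A=-i[H,B]$, i.e.\ $\dot{\r}-\lind_D\r=-i[H,\r]$, which is just \cref{eq:rhodot}. Reading the matrix elements off \cref{eq:h_jk} with $a_{jk}=\bra{\lambda_j}(\dot{\r}-\lind_D\r)\ket{\lambda_k}$ and the eigenvalues $\lambda_j$ of $\r(t)$ then produces the stated $h_{jk}$ in the eigenbasis of $\r(t)$. As a consistency check I would also confirm that this $H$ equals the $H_0(t)=H(\varphi^{-1}(t))/c(\varphi^{-1}(t))$ built earlier, since the reparametrization factor $1/c(u)$ cancels the $c(u)$ multiplying $\lind_D\s$ against $\partial_u\s/c=\dot{\r}$.

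I do not expect a serious obstacle, since the result is constructive and rests entirely on the already-established \cref{lm:commutator_structure} and \cref{th:hd_general_trajectory}. The only genuinely delicate point is the bookkeeping of the reparametrization—ensuring the eigenprojectors and the chain-rule factor $1/c(u)$ are handled consistently so that the vanishing of the sandwiched $\partial_u\s-c\,\lind_D\s$ transfers to $\dot{\r}-\lind_D\r$—together with noting that the denominator $\lambda_j-\lambda_k$ in \cref{eq:h_jk} is only ever invoked on the $\lambda_j\neq\lambda_k$ branch, while degenerate pairs contribute zero because there $a_{jk}=0$ by the $\mathcal{E}_B'$ condition.
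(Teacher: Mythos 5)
Your proposal is correct and follows essentially the same route as the paper: the first assertion is read off from the construction inside the proof of \cref{th:hd_general_trajectory}, and the Hamiltonian is obtained by rearranging the master equation to $-i[H,\rho]=\dot{\rho}-\lind_D\rho$ and invoking \cref{lm:commutator_structure} via \cref{eq:h_jk}. The paper's proof is just a terser version of yours; your extra care in verifying the $\mathcal{E}_B'$ hypothesis through the chain-rule factor $1/c(u)$ and in handling the degenerate-eigenvalue branch is sound but implicit in the paper.
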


\begin{proof}
$\r(t) = \s(\varphi^{-1}(t))$ follows from the proof of \cref{th:hd_general_trajectory}. Then, $\dot{\r} = -i[H,\r] + \lind_D\r$. Rearranging gives $-i[H,\r] = \dot{\r} - \lind_D\r$. The particular form of $H$ follows from the proof of \cref{lm:commutator_structure}, specifically \cref{eq:h_jk}.
\end{proof}

An example of designing the control Hamiltonian based on the trajectory is provided in \cref{app:bitflip-sol}.

\subsubsection{Result for $f$-preserving trajectories}

The corollary below emphasizes that characterizing realizable $f$-preserving trajectories requires enforcing $f$-preservation in any one of the equivalent trajectories.
\begin{mycorollary}
    $f(\bl(u))$ in \cref{th:hd_general_trajectory} is constant over the trajectory iff $\bl(u)$ is equivalent to a realizable $f$-preserving trajectory.
\end{mycorollary}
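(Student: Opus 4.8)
The plan is to observe that the realizability of $\bl(u)$ is already fully settled by \cref{th:hd_general_trajectory}, so the only genuinely new ingredient here is the $f$-preservation condition, and to reduce the whole statement to a single fact: constancy of $f$ along a trajectory is invariant under the equivalence of \cref{def:equiv-traj}. The statement is read in the context of \cref{th:hd_general_trajectory}, i.e.\ $\bl(u)$ satisfies \cref{eq:18} and is therefore equivalent to some realizable trajectory $\bv(t)$ through a reparametrization $\varphi$ with $\bl(u)=\bv(\varphi(u))$ and $\varphi'>0$. (This presupposition is necessary: a non-realizable trajectory along which $f$ happens to be constant would otherwise falsify the ``iff.'') Since $f$ is a function of the state alone — independent of whether the state is represented as an operator $\s(u)$ in $\mc{B}_+(\mc{H})$ or as a coherence vector $\bl(u)$ in $\mc{M}^{(d)}$, and independent of the time-label — we have $f(\bl(u))=f(\bv(\varphi(u)))$.

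For the forward direction I would assume $f(\bl(u))$ is constant in $u$. Because $\varphi$ is a continuously differentiable bijection with $\varphi'>0$, it is strictly increasing and invertible, so $\bv(t)=\bl(\varphi^{-1}(t))$ visits exactly the same set of states as $\bl$; hence $f(\bv(t))=f(\bl(\varphi^{-1}(t)))$ is also constant in $t$. A realizable trajectory along which $f$ is constant is, by \cref{def:f-pres-traj}, an $f$-preserving trajectory, and $\bl(u)$ is equivalent to it. For the converse I would assume $\bl(u)$ is equivalent to a realizable $f$-preserving trajectory $\bv(t)$; by \cref{def:f-pres-traj}, $f(\bv(t))$ is constant, and substituting $\bl(u)=\bv(\varphi(u))$ gives $f(\bl(u))=f(\bv(\varphi(u)))$, a constant function composed with $\varphi$, hence constant in $u$.

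There is no hard computational step: once the invariance of $f$-constancy under equivalence is isolated, both directions are immediate. The only place where care is needed — and the main thing to get right — is the bookkeeping between the two representations and the verification that the reparametrization $\varphi$ changes only the time-labels and not the set of states traversed (so that $f$ cannot change value under it). Both points are guaranteed by \cref{def:equiv-traj} together with the bijection between the operator and coherence-vector representations established in \cref{sec:realizable-traj}, so the corollary follows directly.
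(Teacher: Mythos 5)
Your proposal is correct and follows essentially the same route as the paper: invoke \cref{th:hd_general_trajectory} to obtain the equivalent realizable trajectory $\bv(t)=\bl(\varphi^{-1}(t))$, then observe that constancy of $f$ is preserved under the monotone reparametrization $\varphi$ in both directions. Your explicit remark that the statement presupposes the hypotheses of \cref{th:hd_general_trajectory} (so that realizability is already in hand for the forward direction) is exactly how the paper's proof implicitly reads it.
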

\begin{proof}
    $\implies$: By \cref{th:hd_general_trajectory}, $\bl(u)$ is equivalent to a realizable trajectory $\bv(t)$, where $\bv(t)=\bl(\varphi(u))$ for some continuously differentiable $\varphi$. Then $f(\bv(t)) = f(\bl(\varphi(u))) = \text{const}$. Therefore, $\bv(t)$ is an $f$-preserving trajectory.
    
    $\impliedby$: $\bl(u)$ is equivalent to a realizable trajectory, $\bv(t)$ where $\bv(\varphi(t))=\bl(u)$ for some continuously differentiable $\varphi$. By assumption, $f(\bv(t)) = \text{const}$. Therefore $f(\bl(u)) = f(\bv(\varphi(t))) = \text{const}$.
\end{proof}

This answers the question that motivated this subsection of how to steer a system from a given initial point $\r$ (e.g., an unstable point) to a final point $\s$ (e.g., a stable point) while preserving the target $f$: we choose potential trajectories and check whether they are realizable trajectories. Crucially, the check can be done without knowing the control Hamiltonian due to \cref{th:hd_general_trajectory}.
Once a realizable trajectory is found, the control Hamiltonian can be found using \cref{th:realizable-trajectory-ham}.

\section{The single-qubit case: general theory}
\label{sec:qubit_theory}

We now specialize to the single-qubit case to gain intuition and illustrate the general theory. Due to the simplicity of the Bloch sphere picture, we present our results in this section using the Bloch vector formalism. Thus, we now consider differentiable trajectories $\bl:[0,1]\mapsto S^2$, the unit sphere in $\mathbb{R}^3$.

\subsection{Qubit tracking control}
\label{sec:qubit_theory-a}

Both the Hamiltonian and the density matrix can be expanded in the (unnormalized) Pauli basis: $H = h_0 I+ \bh \cdot\boldsymbol{\s}$ with $\bh  = (h_x,h_y,h_z)\in \mathbb{R}^3$, and $\r = \frac{1}{2}(I+\bv\cdot\boldsymbol{\s})$, where $\bv \in S^2$ (the unit sphere in $\mathbb{R}^3$) is now the Bloch vector, as opposed to the coherence vector.\footnote{The Bloch vector and the coherence vector differ by normalization, since the latter is defined via $\r = \frac{1}{2}I+\bv\cdot\boldsymbol{\s}$.} In this case, the unitary dynamics component in \cref{eq:le} is given by $Q\bv = 2\bh \cross\bv$, where $\cross$ denotes the cross product in $\mathbb{R}^3$ (see \cref{app:Bloch}). We tabulate the dissipators $D=(R,\bc)$ for various noise noise channels in \cref{app:Bloch}.

Therefore, the dynamics [\cref{eq:le}] and constraint [\cref{eq:constraint_matrix_dynamics_eq-CV}] equations that govern the tracking control problem in the single-qubit case are, respectively:
\begin{subequations}
\label{eq:reference}
\begin{align}
\label{eq:reference_dynamics_eq}
         & \dot{\bv} = 2\bh \cross \bv + R\bv + \bc\\ 
\label{eq:reference_constraint_eq}
        & \grad f \cdot( 2\bh \cross \bv + R\bv + \bc) = 0\ ,
\end{align}
\end{subequations}
where in the tracking control problem $\bh = \bh (\bv)$.

Note that the cross product is the specialization of the commutator (over the vector space of Hermitian matrices) to $\mathbb{R}^3$: they share the standard properties of bilinearity, anticommutativity, and the Jacobi identity. Moreover, $[A,B]$ is likewise orthogonal to both $A$ and $B$, i.e., $\< A, [A,B] \> = \< B, [A,B] \> = 0$, and both satisfy the cyclic property $\< A,[B,C]\> = \<C,[A,B]\>$. Finally, the cross-product vanishes when both the input vectors are collinear, similar to the commutator (which vanishes if the operators share the same eigenspace, i.e., if they are diagonal in the same basis).

\subsection{Uncontrollable target properties}

Using \cref{eq:ineffective_h_condition_vector} and $Q\bv = 2\bh \cross\bv$, uncontrollable target properties are characterized by: 
\beq
 \label{eq:inffective_h_condition}
        \grad f \cdot (\bh (\bv) \cross \bv) = \bh (\bv) \cdot (\bv \cross \grad f) =0\ .
\eeq
For a given Bloch vector $\bv$, this condition holds if $\bh (\bv)$ is orthogonal to $\grad f \cross \bv$, and for all control Hamiltonians $\bh$ iff $\grad f \propto \bv$. 

For a qubit (but not in higher-dimensional Hilbert spaces), \cref{prop:1} rules out the preservation of any target property which is purely a function of the eigenvalues of the density matrix, since these are given by $\lambda_{1,2} = \frac{1}{2}(1 \pm v)$ (see \cref{app:Bloch}). This includes the
set of R\'enyi $\alpha$-entropies: $f(\bv) = \frac{1}{1-\alpha}\ln\left(\lambda_1^\alpha+\lambda_2^\alpha\right)$, which reduces to the von Neumann entropy $-\lambda_1 \ln(\lambda_1)-\lambda_2 \ln(\lambda_2)$ for $\alpha \to 1$.

\subsection{Stable point loci}

Stable points are Bloch vectors for which $\bh (\bv)$ can be chosen so that $\dot{\bv}=0$ $\forall t$ (\cref{def:stable-points}). In the qubit case, the purity $P(\bv) = \frac{1}{2}(1+\|\bv\|^2)$ and
\beq
\label{eq:dotP-v-qubit}
\dot{P}= \bv\cdot\dot{\bv} = \bv \cdot (R\bv+\bc)\ .
\eeq

We now specialize \cref{prop:Pdot=0} to the qubit case, showing that $\dot{P}=0$ is necessary and sufficient for stable points.
\begin{myproposition}
\label{prop:locus}
In the single-qubit case, the locus of stable points is independent of the control Hamiltonian and is characterized by a constant purity, i.e., it satisfies
\beq 
\label{eq:pD}
\dot{P}  = 0\ .
\eeq
Conversely, the unstable points are characterized by $\dot{P} \ne 0$. 
\end{myproposition}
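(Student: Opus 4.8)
The plan is to prove the proposition as three linked claims: that the locus of stable points is control-independent, that $\dot{P}=0$ is necessary for stability, and that $\dot{P}=0$ is sufficient. The first two are immediate and inherited from the general theory; the substance lies in the sufficiency direction, which upgrades \cref{prop:Pdot=0} (where $\dot{P}=0$ was shown to be only necessary for $d>2$) to a full equivalence in the qubit case.

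For control-independence and necessity I would simply invoke \cref{eq:dotP-v-qubit}, $\dot{P}=\bv\cdot(R\bv+\bc)$. The right-hand side contains no reference to $\bh$, so its zero set is manifestly independent of the control Hamiltonian, establishing the first claim. Necessity then follows exactly as in \cref{prop:Pdot=0}: if $\bv$ is stable, there is an $\bh$ with $\dot{\bv}=0$, whence $\dot{P}=\bv\cdot\dot{\bv}=0$.

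The core step is sufficiency. Assuming $\dot{P}=\bv\cdot(R\bv+\bc)=0$, I must produce an $\bh$ solving $\dot{\bv}=0$ in \cref{eq:reference_dynamics_eq}, i.e.\ $2\bh\cross\bv=-(R\bv+\bc)$. The key geometric fact, special to $\mathbb{R}^3$, is that for fixed $\bv\neq\mathbf{0}$ the linear map $\bh\mapsto\bh\cross\bv$ has image exactly the plane $\bv^{\perp}$ orthogonal to $\bv$: its kernel is the line spanned by $\bv$, so by rank-nullity the image is two-dimensional, and since every cross product is orthogonal to $\bv$, the image fills all of $\bv^{\perp}$. The equation is therefore solvable precisely when $-(R\bv+\bc)\in\bv^{\perp}$, which is exactly the hypothesis $\bv\cdot(R\bv+\bc)=0$. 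An explicit solution is
\beq
\bh=-\frac{1}{2\|\bv\|^{2}}\,\bv\cross(R\bv+\bc)\ ,
\eeq
which I would verify directly using the identity $(\bv\cross\bw)\cross\bv=\|\bv\|^{2}\,\bw-(\bv\cdot\bw)\bv$ with $\bw=R\bv+\bc$ and $\bv\cdot\bw=0$. The converse statement about unstable points then follows by negation. This is precisely where the qubit restriction is essential: in the proof of \cref{prop:Pdot=0}, the image of the commutator map has dimension at most $d^{2}-d$, which is a proper subspace of the $(d^{2}-2)$-dimensional space $\bv^{\perp}$ for $d>2$ but coincides with it when $d=2$, so that the obstruction to sufficiency disappears.

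The main obstacle, besides stating the surjectivity argument cleanly, is the degenerate point $\bv=\mathbf{0}$ (the maximally mixed state), where $\|\bv\|^{2}=0$ makes the explicit $\bh$ ill-defined and where $\dot{P}=0$ holds automatically while $\dot{\bv}=\bc$; there stability instead requires $\bc=\mathbf{0}$. I would note that the Bloch-sphere setting used in this section (trajectories on $S^2$, $\|\bv\|=1$) sidesteps this case, and otherwise handle $\bv=\mathbf{0}$ separately, observing that it is stable if and only if the channel is unital.
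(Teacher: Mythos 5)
Your proof is correct and follows essentially the same route as the paper: control-independence and necessity read off from $\dot{P}=\bv\cdot(R\bv+\bc)$, and sufficiency from the fact that any vector orthogonal to $\bv$ lies in the image of $\bh\mapsto 2\bh\cross\bv$, exactly the paper's observation that $R\bv+\bc\perp\bv$ can be written as $2\mathbf{h}'\cross\bv$. Your explicit formula for $\bh$ and, especially, your separate treatment of the degenerate point $\bv=\mathbf{0}$ go slightly beyond the paper's proof, which dismisses that case with the parenthetical ``(or a zero vector)'' even though for a non-unital channel the origin satisfies $\dot{P}=0$ yet has $\dot{\bv}=\bc\neq\mathbf{0}$ for every $\bh$ --- a genuine edge case you were right to flag.
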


\begin{proof}
\cref{eq:dotP-v-qubit} is independent of the control Hamiltonian $\bh$, since 
$\bv\cdot \bh\cross \bv=0$. If $\bv\cdot(R\bv+\bc) =0$, then $R\bv+\bc$ is orthogonal to $\bv$ (or a zero vector) and hence can be written as $2\mathbf{h'}\cross \bv$ for some vector $\mathbf{h'}$.
We can set the Hamiltonian to be $-\mathbf{h'}$, which leads to $\dot{\bv} = 0$, and the point is stable.
Conversely, if $\dot{P} \neq 0$, then using \cref{eq:dotP-v-qubit} we have $\bv\cdot\dot{\bv} \ne 0$, i.e., $\dot{\bv} \ne 0$.
Therefore, the point is unstable.
\end{proof}

\begin{Example}
\label{ex:3}
Consider \cref{ex:1} again. The stable points are given by \cref{eq:pD}, which yields $\bv \cdot (R\bv+\bc) = \bv\cdot \text{diag}(-2\g,-2\g,0)\bv = -2\g(v_x^2+v_y^2)=0$, i.e., the $z$-axis. Points off the $z$-axis are unstable.
\end{Example} 

We proceed to give a geometric characterization of the stable point loci. As mentioned above, when working in a nice operator basis, $R$ in \cref{eq:matrix_le} is a symmetric matrix in the single qubit case. We can then diagonalize $R$ via an orthogonal matrix $O$: $R = ODO^T$ where $D = -\text{diag}(d_1,d_2,d_3)$ is a negative semi-definite diagonal matrix of $R$'s eigenvalues (some but not all of the eigenvalues can be zero), and $O$'s columns are $R$'s corresponding eigenvectors. The dissipator can then be rewritten as:
\begin{align}
            R\bv+\bc 
            = O(D\bw+\bc')O^T\ .
\end{align}
where $\bw = O^T\bv O,\bc' = O^T\bc O$. Note that the vector $\bc'$ is constrained to be in the column space of $D$. \cref{eq:pD} then implies that the locus of stable points satisfies $\bv^T(R\bv+\bc) = O \bw^T (D\bw+\bc') O^T=0$, i.e., the quadratic form
\begin{align}
\sum\limits_{i=1}^3\!{}^{\prime} d_i (w_i -r_i)^2 = r^2\ , \quad r_i = \frac{1}{2}\frac{c'_i}{d_i}\ ,
\label{eq:locus}
\end{align}
where $r = \|\mathbf{r}\|$, $\mathbf{r} = (r_1,r_2,r_3)$, and the notation $\sum\!{}^{\prime}$ indicates that if $d_i=0$ then the corresponding term is absent in the sum.  

\subsubsection{Unital channels}

For unital channels, $\bc=\mathbf{0}$, which reduces \cref{eq:locus} to $\sum_{i=1}^3 d_i w_i^2 =0$, which means that the locus of stable points is limited to:
\begin{enumerate}
\item The origin: $r=0$ and all $d_i>0$, or
\item A line: two of the $d_i>0$ and the line is along the axis corresponding to the zero $d_i$, or
\item A plane: two of the $d_i$ are zero, and the plane is through the origin, orthogonal to the axis corresponding to the nonzero $d_i$.
\end{enumerate}

\subsubsection{Non-unital channels}
For non-unital channels, where $\bc\ne\mathbf{0}$, \cref{eq:locus} describes:
\begin{enumerate}
\item An ellipsoid centered at $\mathbf{r}$: all $d_i>0$.
\item An elliptic cylinder: two of the $d_i>0$, the cylinder extends along the axis corresponding to the zero $d_i$.
\item Two parallel planes: e.g., $d_2=d_3=0$ and the two planes are through the origin and through $w_1=2r$, orthogonal to the axis corresponding to $d_1\ne 0$.
\end{enumerate}

Below, we give examples of all these different cases using various noise channels.

\subsection{Stable points and purity preservation}

Stable points are related to purity preservation in the single-qubit case:
\begin{myproposition}
\label{prop:distinct_purity}
If there exist two times $t_1<t_2$ such that the purity $P(t_1)=P(t_2)$, then the system must pass through a stable point $t'\in (t_1,t_2)$.
\end{myproposition}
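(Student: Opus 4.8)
The plan is to recognize this statement as a direct application of Rolle's theorem, leveraging two facts already established in the qubit setting. First, the purity $P(t) = \frac{1}{2}(1+\|\bv(t)\|^2)$ is a differentiable scalar function of time along any trajectory. Second, its time derivative $\dot{P} = \bv\cdot(R\bv+\bc)$ [\cref{eq:dotP-v-qubit}] is \emph{independent} of the control Hamiltonian and vanishes exactly on the stable locus, as shown in \cref{prop:locus}. The essential observation is therefore that the purity evolves autonomously—it is a function of the instantaneous state alone—so that any statement about its critical points translates immediately into a statement about stable points.

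First I would note that along the realizable trajectory $\bv(t)$, the map $P(t)$ is continuous on $[t_1,t_2]$ and differentiable on $(t_1,t_2)$, since $\bv(t)$ solves the dynamics equation \cref{eq:reference_dynamics_eq} with a finite (bounded) control Hamiltonian and $P$ is a smooth function of $\bv$. Given the hypothesis $P(t_1)=P(t_2)$, Rolle's theorem then guarantees the existence of a time $t'\in(t_1,t_2)$ at which $\dot{P}(t')=0$.

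Finally, I would invoke \cref{prop:locus}: since $\dot{P}(t')=\bv(t')\cdot(R\bv(t')+\bc)=0$, and this condition characterizes the stable locus in the single-qubit case, the state $\bv(t')$ is a stable point. Hence the system is at a stable point at some time $t'\in(t_1,t_2)$, as claimed. It is worth emphasizing that because $\dot{P}$ is control-independent, the conclusion holds for \emph{every} trajectory connecting the two equal-purity times, regardless of which control is actually steering the system.

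The only real subtlety—hardly an obstacle—is ensuring the regularity required by Rolle's theorem, namely that the trajectory is genuinely differentiable throughout $(t_1,t_2)$ and continuous up to the endpoints. This holds for any $H$-realizable trajectory with bounded $H$, so it requires no argument beyond the finiteness assumption on the control already built into the definition of realizable trajectories in \cref{sec:realizable-traj}. The conceptual crux, then, is not the analysis but the recognition that purity obeys a closed, control-free evolution law, which is what makes the Rolle argument applicable and the identification of a stable point automatic.
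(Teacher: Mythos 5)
Your proposal is correct and follows essentially the same route as the paper: apply Rolle's theorem to the differentiable purity $P(t)$ to obtain a time $t'$ with $\dot{P}(t')=0$, then identify that point as stable via \cref{prop:locus}. The additional remarks on regularity and control-independence are sound but not needed beyond what the paper already assumes.
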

    
\begin{proof}
The purity of a qubit is given by $P(t) = \frac{1}{2} [1+v^2(t)]$, where $v=\|\bv\|$. Since $P(t_1) = P(t_2)$, then by Rolle's theorem there exists a time $t'\in (t_1,t_2)$ such that
\beq
\label{eq:P'}
        0 = \dot{P}(t)|_{t'}\ .
\eeq
The last equality represents a stable point.
\end{proof}
Thus, any \emph{periodic} Bloch vector trajectory passes through a stable point. Note that this result does not generalize to higher-dimensional spaces since it relies on \cref{prop:locus}.

\subsection{$f$-preserving Hamiltonians and breakdown points}
Using the Bloch vector formalism, it is instructive to specialize \cref{th:main} to the single-qubit case.

\begin{mycorollary}
\label{th:main-qubit}

For any controllable target property of a qubit, the $f$-preserving component of the tracking control Hamiltonian may always be written as
\beq
 \label{eq:basic_control}
        \bh  = \frac{\grad f \cdot (R\bv + \bc)}{2\|\grad f \cross \bv \|^2} \grad f \cross \bv\ .
\eeq
\end{mycorollary}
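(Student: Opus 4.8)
The plan is to obtain the corollary as the Bloch-vector specialization of \cref{th:main}, but the quickest self-contained route is to work directly from the qubit constraint equation \eqref{eq:reference_constraint_eq}, $\grad f \cdot (2\bh\cross\bv + R\bv + \bc)=0$, which already lives entirely in the Bloch picture and thereby avoids any operator-normalization bookkeeping. I would first separate the control-dependent term and rewrite it with the scalar triple product identity, $\grad f\cdot(\bh\cross\bv) = \bh\cdot(\bv\cross\grad f) = -\bh\cdot(\grad f\cross\bv)$, turning the constraint into the single scalar equation $2\,\bh\cdot(\grad f\cross\bv) = \grad f\cdot(R\bv+\bc)$.

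Second, I would make the ansatz $\bh = \alpha\,(\grad f\cross\bv)$, i.e., take the control field parallel to $\grad f\cross\bv$. This choice is not arbitrary: it is precisely the Bloch-vector image of the particular operator solution $H\propto i[\rho,\grad f]$ in \eqref{eq:basic_control-H}, since under the Pauli map the commutator $i[\rho,\grad f]$ is proportional to $(\grad f\cross\bv)\cdot\boldsymbol\sigma$ via $[\mathbf a\cdot\boldsymbol\sigma,\mathbf b\cdot\boldsymbol\sigma]=2i(\mathbf a\cross\mathbf b)\cdot\boldsymbol\sigma$. Substituting the ansatz gives $2\alpha\,\|\grad f\cross\bv\|^2 = \grad f\cdot(R\bv+\bc)$, and solving for $\alpha$ yields exactly \eqref{eq:basic_control}. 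I would also note that the identity component $h_0 I$ plays no role, since it commutes with $\rho$, so writing the $f$-preserving component as a formula for $\bh$ alone is justified.

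The one point requiring care --- and the only place the hypothesis enters --- is the non-vanishing of the denominator $\|\grad f\cross\bv\|^2$. Here I would invoke controllability: by \cref{eq:inffective_h_condition} a target property is uncontrollable precisely when $\grad f\propto\bv$, which is exactly the condition $\grad f\cross\bv=0$. Restricting to controllable target properties therefore guarantees $\grad f\cross\bv\ne 0$, so \eqref{eq:basic_control} is well-defined. I would further remark that the constraint is a single scalar equation on the three-component vector $\bh$, so its solution is far from unique; the parallel ansatz singles out the $f$-preserving component, namely the piece orthogonal to both $\bv$ and $\grad f$, in agreement with the geometric interpretation given after \cref{th:main}.

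I anticipate no genuine obstacle in the main computation: it is a one-line triple-product manipulation followed by an ansatz. If instead one insists on deriving the corollary by literally translating the operator formula \eqref{eq:basic_control-H} term by term, the only fiddly step would be reconciling the factors of two --- the $2$ in $Q\bv=2\bh\cross\bv$, the $2i$ in the Pauli commutator identity, and the $\Tr(\sigma^i\sigma^j)=2\delta_{ij}$ normalization entering $\|[\rho,\grad f]\|^2$ --- all of which must combine to leave the single overall factor of $2$ in the denominator of \eqref{eq:basic_control}. Working directly from \eqref{eq:reference_constraint_eq} sidesteps this bookkeeping entirely, which is why I would present that route.
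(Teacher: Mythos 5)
Your proposal is correct and coincides with the paper's second (alternative) proof of this corollary, which likewise works directly from the qubit constraint equation \eqref{eq:reference_constraint_eq}: the paper expands $\bh$ in the basis $\{\bv,\grad f,\grad f\cross\bv\}$ and shows the first two components drop out of the constraint, which is equivalent to your parallel ansatz, and it invokes controllability in the same way to guarantee $\grad f\cross\bv\ne 0$. Your remark that the identity component $h_0 I$ and the factors of two need care matches the paper's first proof, which translates \eqref{eq:basic_control-H} term by term.
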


\begin{proof}
This follows immediately from \cref{eq:basic_control-H} upon replacing $\lind_D \rho$ with $R\bv + \bc$ [\cref{eq:le}], $i[\rho,\grad f]$ with $\grad f \cross \bv$ (i.e., the commutator with the vector product, as argued above), and the coherence vector by the Bloch vector.
The factor of $2$ is an artifact of using $\boldsymbol{\sigma}$ as a basis for $\bh$ instead of $\boldsymbol{\s}/2$.
\end{proof}

We present an alternative proof that does not rely on \cref{eq:basic_control-H}. This proof has the advantage of exposing the components of $\bh$ that play no role in $f$-preservation but are crucial in extending the breakdown time, as will be shown later.

\begin{proof}
It follows from \cref{eq:inffective_h_condition} that to avoid the uncontrollable case, $\bv$ and $\grad f$ must be linearly independent vectors. Therefore the set $\{\bv, \grad f, \grad f \cross \bv\}$ forms a basis for $\mathbb{R}^3$. Expanding $\bh $ in this basis as 
\beq
\label{eq:h-decomp-basis}
\bh  = \alpha_1 \bv + \alpha_2 \grad f + \alpha_3 \grad f \cross \bv\ ,
\eeq 
and substituting into \cref{eq:reference_constraint_eq}, we obtain:
\begin{align}
\label{eq:find-alpha3}
&\grad f \cdot( 2(\alpha_1 \bv + \alpha_2 \grad f + \alpha_3 \grad f \cross \bv)\cross \bv + R\bv + \bc) = 0\ .
\end{align}
The terms involving $\a_1$ and $\a_2$ vanish, while $\grad f \cdot [(\grad f \cross \bv)\cross \bv] = -\|\grad f \cross \bv\|^2$, which implies:
\begin{align}
\label{eq:alpha_3}
\alpha_3 = \frac{1}{2}\frac{\grad f \cdot (R\bv + \bc)}{\|\grad f \cross \bv\|^2}\ .
\end{align}
The other two terms in \cref{eq:h-decomp-basis} vanish in the constraint equation [\cref{eq:reference_constraint_eq}], so they play no role in preserving $f$. Therefore the $f$-preserving component of $\bh$ is $\alpha_3 \grad f \cross \bv$.
\end{proof}

\cref{th:main-qubit} is consistent with our analysis of uncontrollable target properties in \cref{sec:unc-f}, since if $\grad f \propto \bv$, then $\bh \to\infty$ in \cref{eq:basic_control}. This also motivates specializing the definition of breakdown points (\cref{def:breakdown-gen}) to the qubit case:

\begin{mydefinition}
\label{def:breakdown-qubit}
For a given target property $f$ and dissipator $D$, we call a Bloch vector a \emph{breakdown point} $\bv_b$ if (a) $\grad f|_{\bv_b} \propto \bv_b$ and (b) $\dot{P}(\bv_b) \neq 0$ (i.e., $\bv_b$ is unstable).
\end{mydefinition}
Thus, a breakdown point is a point where $\bh $ [\cref{eq:basic_control}] diverges. As before, the \emph{breakdown time} $t_b$ is the time when the state reaches a breakdown point. 

In the context of \cref{ex:1}, the breakdown condition $\grad f|_{\bv_b} \propto \bv_b$ yields the entire $(x,y)$ plane, but note that the condition $\dot{P}_D(\bv_b)\ne 0$ excludes the origin. We find the breakdown time below, in \cref{sec:dephasing-bd}.

Note that \cref{th:main-qubit} fixes the component of $\bh$ in the $\grad f \cross \bv$ direction. In practice, keeping components along the directions $\grad f$ and $\bv$ non-zero may be advantageous, in case this simplifies the implementation of the control Hamiltonian. Moreover, the component along $\grad f$ can also affect the breakdown time (e.g., see \cref{subsec:bit-flip-discussion}).

The reason that only one component of $\bh$ appears in \cref{eq:basic_control} has a simple geometric interpretation: the component of $\bh $ along $\bv$ has no effect, while the $\grad f$ component moves the state \emph{along} $f$'s level set. Counteracting the dissipator requires $\bh $'s component $\bv \cross \grad f$ since both it and the dissipator move $\bv$ \emph{away} from $f$'s level set.

\begin{table*}
\begin{tabular}{ |c|c|c|c|c|}
         \hline
        Target property$\rightarrow$ & Coherence magnitude & Uhlmann Fidelity \\
         Noise channel $\downarrow$ & $f=v_x^2+v_y^2 $ & $f=\Tr(\sqrt{\sqrt{\sigma}\rho\sqrt{\sigma}})$ \\
         \hline \rule{0pt}{2em}
         dephasing ($Z$)&$-\frac{\gamma}{v_z}(v_y,-v_x,0)$&$\gamma [(k_0\bv-\bw) \cdot (v_x,v_y,0)]\frac{\bw\cross\bv}{\|\bw\cross\bv\|^2}$\\
         \hline \rule{0pt}{2em}
         bit-flip ($X$)&$\frac{-\gamma v_y^2}{v_z f}(v_y,-v_x,0)$&$\gamma [(k_0\bv-\bw) \cdot (0,v_y,v_z)]\frac{\bw\cross\bv}{\|\bw\cross\bv\|^2}$\\
         \hline \rule{0pt}{2em}
         bit-phase-flip ($Y$) &$\frac{-\gamma v_x^2}{v_z f}(v_y,-v_x,0)$&$\gamma [(k_0\bv-\bw) \cdot (v_x,0,v_z)]\frac{\bw\cross\bv}{\|\bw\cross\bv\|^2}$\\
         \hline \rule{0pt}{2em}
         depolarizing&$\frac{-2\gamma}{3v_z}(v_y,-v_x,0)$&$\frac23\gamma [(k_0\bv-\bw) \cdot \bv]\frac{\bw\cross\bv}{\|\bw\cross\bv\|^2}$\\
         \hline \rule{0pt}{2em}
         relaxation at temperature $T$&$\frac{-\gamma}{4v_z} (v_y,-v_x,0)$&$\frac14\gamma [(k_0\bv-\bw) \cdot (v_x,v_y,2(v_z-2a)]\frac{\bw\cross\bv}{\|\bw\cross\bv\|^2}$\\
         \hline \rule{0pt}{2em}
         \multirow{2}{*}{\parbox{3.8cm}{relaxation at temperature $T$ ($\g_1$)+ dephasing ($\g_d$)}} & &\\
         &$\frac{-\g_2}{2v_z} (v_y,-v_x,0)$&$\frac12 \g_2 [(k_0\bv-\bw) \cdot (v_x,v_y,2(v_z-2a)]\frac{\bw\cross\bv}{\|\bw\cross\bv\|^2}$\\
         \hline
\end{tabular}
        \caption{The control Hamiltonian [\cref{eq:basic_control}] necessary to preserve two target properties (coherence and fidelity) for different environments. The system state $\rho$ is represented by the Bloch vector $\bv=(v_x,v_y,v_z)$ and $\sigma$ by $\bw$, $k_0^2\equiv\frac{1-\|\bw\|^2}{1-\|\bv\|^2}$, $\g_2 \equiv 2\g_d + \frac{\g_1}{2}$ and $a\equiv [1+\exp(-\b \D)]^{-1}-1/2$ where $\D$ is the qubit energy gap and $\b=1/T$ is the inverse temperature.}
     \label{table:h_example_values}
\end{table*}

\subsection{Realizable trajectories}

Let $\bl:[0,1]\mapsto  S^2$ denote an arbitrary parameterized Bloch sphere trajectory that is not necessarily a solution of the master equation [\cref{eq:reference_dynamics_eq}].

\begin{mydefinition}
\label{def:P_D}
The \emph{trajectory purity} is
\beq
P[\bl(u)] = \frac{1}{2} [1+\|\bl(u)\|^2]\ .
\eeq
The \emph{dissipator-induced purity rate} is
\beq
\label{eq:P_D}
\partial_u {P}_D[\bl(u)] \equiv \bl(u)\cdot(R\bl(u)+\bc)\ .
\eeq 
\end{mydefinition}
The former is the formal purity associated with the trajectory $\bl(u)$. The latter can be interpreted as the rate of purity change due to the dissipator $D=(R,\bc)$ for the trajectory $\bl(u)$. We emphasize that this purity rate is not necessarily related to the solution of the master equation either.

From \cref{eq:dotP-v-qubit}, if $\bl$ were equivalent (in the sense of \cref{def:equiv-traj}) to a solution of \cref{eq:reference_dynamics_eq}, then $\partial_u {P}_D = \partial_u{P}$.
Conversely, is a trajectory that satisfies $\partial_u{P}_D \propto \partial_u{P}$ always realizable?
The following result resolves this question:

\begin{myproposition}
\label{th:general_trajectory}
$\bl(u)$ is equivalent to a realizable trajectory iff 
\beq
\label{eq:P_traj}
\partial_u P[\bl(u)] = c(u) \partial_u  P_D[\bl(u)]\ ,
\eeq 
where $0<c(u)<\infty$ $\forall u\in [0,1]$.
\end{myproposition}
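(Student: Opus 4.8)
The plan is to obtain \cref{th:general_trajectory} as the $d=2$ specialization of \cref{th:hd_general_trajectory}, translating the abstract eigenprojector condition \cref{eq:18} into the single scalar purity condition \cref{eq:P_traj}. First I would invoke \cref{th:hd_general_trajectory}: $\bl(u)$ is equivalent to a realizable trajectory iff the Hermitian operator $A(u)\equiv\partial_u\s(u)-c(u)\lind_D\s(u)$ satisfies $\Pi_i(u)A(u)\Pi_i(u)=0$ for every eigenprojector $\Pi_i(u)$ of $\s(u)$, for some $0<c(u)<\infty$, where $\s(u)=\tfrac{1}{2}(I+\bl(u)\cdot\boldsymbol{\s})$ is the density operator corresponding to $\bl(u)$.

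Next I would reduce the operator condition to scalar conditions, which is where the qubit structure does all the work. Away from the maximally mixed point (i.e.\ $\|\bl\|\neq 0$, which holds on $S^2$), $\s$ has two distinct eigenvalues $\lambda_\pm=\tfrac{1}{2}(1\pm\|\bl\|)$ with rank-one eigenprojectors $\Pi_\pm$. Writing $A$ in the eigenbasis of $\s$, the conditions $\Pi_\pm A\Pi_\pm=0$ are equivalent to the vanishing of the two diagonal matrix elements $\bra{\lambda_\pm}A\ket{\lambda_\pm}$. Since $\lambda_+\neq\lambda_-$, the linear map sending $(\bra{\lambda_+}A\ket{\lambda_+},\bra{\lambda_-}A\ket{\lambda_-})$ to $(\Tr A,\Tr(\s A))$ is invertible, so the projector condition is equivalent to the pair of scalar constraints $\Tr A=0$ and $\Tr(\s A)=0$.

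I would then evaluate these two constraints. The first, $\Tr A=\partial_u\Tr\s-c\,\Tr(\lind_D\s)$, vanishes identically because $\Tr\s\equiv 1$ and the dissipator is trace-preserving ($\Tr\lind_D\s=0$); hence it imposes no condition. For the second I would use $\Tr(\s\,\partial_u\s)=\tfrac{1}{2}\partial_u\Tr\s^2=\tfrac{1}{2}\partial_u P$, and translate $\lind_D\s=\tfrac{1}{2}(R\bl+\bc)\cdot\boldsymbol{\s}$ into $\Tr(\s\lind_D\s)=\tfrac{1}{2}\,\bl\cdot(R\bl+\bc)=\tfrac{1}{2}\partial_u P_D$, using $\Tr(\sigma_i\sigma_j)=2\delta_{ij}$. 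This gives $\Tr(\s A)=\tfrac{1}{2}\bigl(\partial_u P-c\,\partial_u P_D\bigr)$, so $\Tr(\s A)=0$ is precisely \cref{eq:P_traj}, establishing both directions of the equivalence.

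The main point requiring care is the reduction in the second step: the equivalence between the operator projector condition and the single purity condition is special to $d=2$, where there are exactly two rank-one diagonal blocks. In higher dimensions the projector condition carries strictly more information than trace-plus-purity preservation, which is exactly why \cref{prop:Pdot=0} found $\dot P=0$ insufficient for $d>2$; here it becomes sufficient only because the $2\times 2$ system is invertible. I would also remark that the inversion fails at the degenerate point $\|\bl\|=0$ (where $\lambda_+=\lambda_-$ and $\Pi A\Pi=0$ forces $A=0$ rather than just $\Tr(\s A)=0$), but this point is excluded since the trajectory lies on $S^2$.
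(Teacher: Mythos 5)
Your proof is correct, and it reaches the purity condition by a genuinely different reduction than the paper's. Both arguments start by invoking \cref{th:hd_general_trajectory}, but the paper then rewrites \cref{duCu} in Bloch-vector form, $\partial_u\bl - c(R\bl+\bc) = 2\bh\cross\bl$, and finishes with the $\mathbb{R}^3$ fact that a vector equals a cross product with $\bl$ iff it is orthogonal to $\bl$ (valid for $\bl\neq\mathbf{0}$); dotting with $\bl$ gives \cref{eq:P_traj} in one line. You instead stay at the operator level: you observe that for a nondegenerate qubit state the two rank-one conditions $\Pi_\pm A\Pi_\pm=0$ are linearly equivalent to $\Tr A=0$ and $\Tr(\s A)=0$ via the invertible Vandermonde-type matrix $\bigl(\begin{smallmatrix}1&1\\\lambda_+&\lambda_-\end{smallmatrix}\bigr)$, that the first constraint is automatic by trace preservation, and that the second is exactly $\tfrac12(\partial_u P - c\,\partial_u P_D)=0$. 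Your computation of $\Tr(\s\,\lind_D\s)=\tfrac12\,\bl\cdot(R\bl+\bc)$ checks out. What your route buys is an explicit explanation of why the equivalence is special to $d=2$ --- the number of diagonal blocks matches the number of conserved scalars (trace and purity) --- which connects cleanly to the insufficiency result of \cref{prop:Pdot=0} for $d>2$; the paper's route is shorter and encodes the same dimensional accident in the cross-product structure of $\mathbb{R}^3$. The degenerate point $\|\bl\|=0$ that you flag (where $\Pi A\Pi=0$ forces $A=0$ but $\Tr(\s A)=0$ is vacuous) is exactly the same caveat hidden in the paper's ``$\mathbf{a}=\mathbf{b}\cross\mathbf{c}$ for some $\mathbf{c}$'' step when $\mathbf{b}=\mathbf{0}$, so excluding it does not put you at a disadvantage relative to the paper's own argument.
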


\begin{proof}
From \cref{th:hd_general_trajectory}, if the Bloch vector trajectory $\bl(u)$ corresponds to a density matrix trajectory $\s(u)$, then \cref{eq:18} is satisfied iff $\bl(u)$ (which is the same as the coherence vector up to a scalar multiplicative factor) is equivalent to a realizable trajectory. \cref{duCu} can be written as
    \begin{align}
    \label{eq:33}
    \partial_u\bl(u) - c(u) [R \bl(u)+\bc] = 2\bh\cross\bl(u)\ .
    \end{align}
In $\mathbb{R}^3$, orthogonal directions are uniquely determined by the cross product, i.e., if $\mathbf{a},\mathbf{b}\in \mathbb{R}^3$, $\mathbf{a}\cdot\mathbf{b}=0 \iff \mathbf{a}=\mathbf{b}\cross\mathbf{c}$ for some non-zero $\mathbf{c}\in\mathbb{R}^3$. Thus, \cref{eq:33} holds iff
    \begin{align}
    \label{eq:35}
        \bl(u)\cdot[\partial_u\bl(u) - c(u) (R\,\bl(u)+\bc)]=0\ ,
    \end{align}
which is \cref{eq:P_traj}. Since $u$ was arbitrary, the statement must be true for all $u$.
\end{proof}

\section{The single-qubit case: examples}
\label{sec:examples}

In this section, we discuss several examples to illustrate the general theory. \cref{table:h_example_values} gives the values of the control Hamiltonian  [\cref{eq:basic_control}, all other components of $\bh$ set to 0] for two target properties: the square of the coherence magnitude 
\beq
f = [\Tr(\r \s^x)]^2 + [\Tr(\r \s^y)]^2 = v_x^2+v_y^2\ ,
\eeq
and the Uhlmann fidelity between two states $\r$ and $\sigma$:
\beq
f= \Tr(\sqrt{\sqrt{\sigma}\rho\sqrt{\sigma}})\ ,
\eeq

See \cref{app:example_details} for the derivation of most of the results in \cref{table:h_example_values}. \cref{fig:stable_region_plot} illustrates the geometry of the Bloch sphere for three different noise channels and the two target properties above in terms of level sets, stable points, and breakdown points.

In the following, we use the shorthand ``trajectory $\bl$ can be realized" to mean that trajectory $\bl$ is equivalent to a realizable trajectory.

\begin{figure*}
     \centering
        \subfigure[Dephasing with $f=$ coherence]{\includegraphics[width=.3\textwidth]{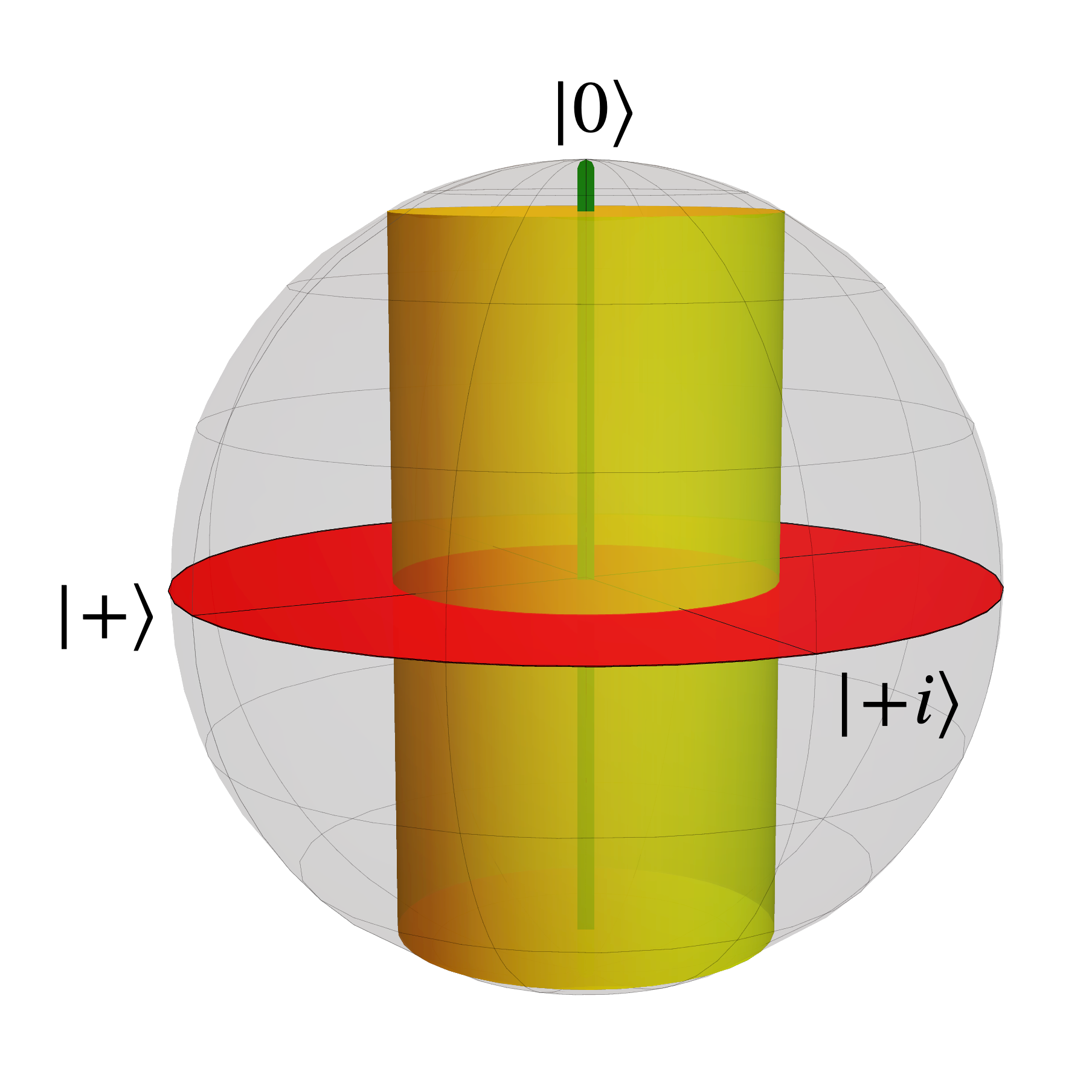}
        \label{fig:dephasing_coherence}}
        \subfigure[Bit-flip with $f=$ coherence]{\includegraphics[width=.3\textwidth]{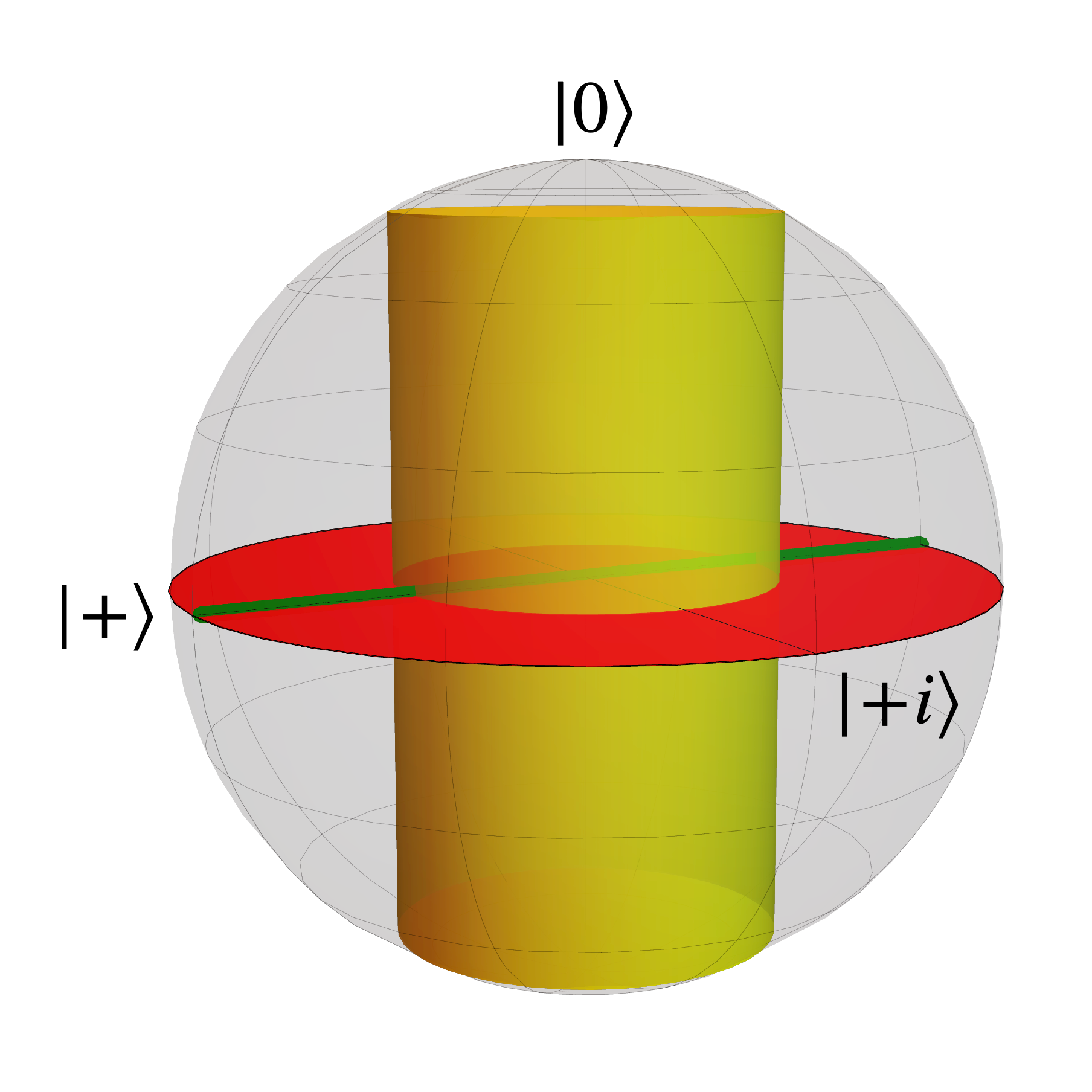}
        \label{fig:bitflip_coherence}}
        \subfigure[Relaxation with $f=$ coherence]{\includegraphics[width=.3\textwidth]{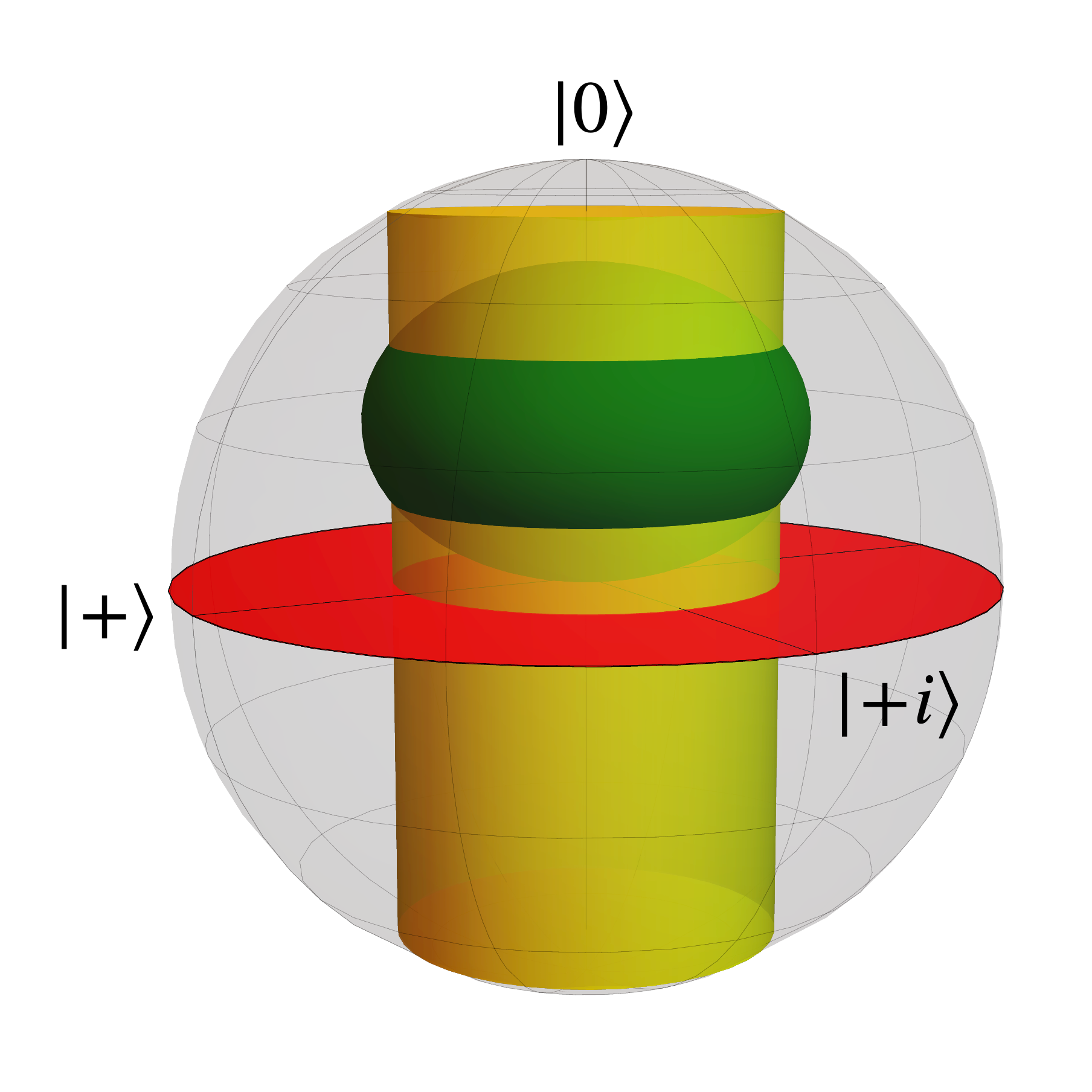}
        \label{fig:relaxation_coherence}}
        \subfigure[Dephasing with $f=$ fidelity]{\includegraphics[width=.3\textwidth]{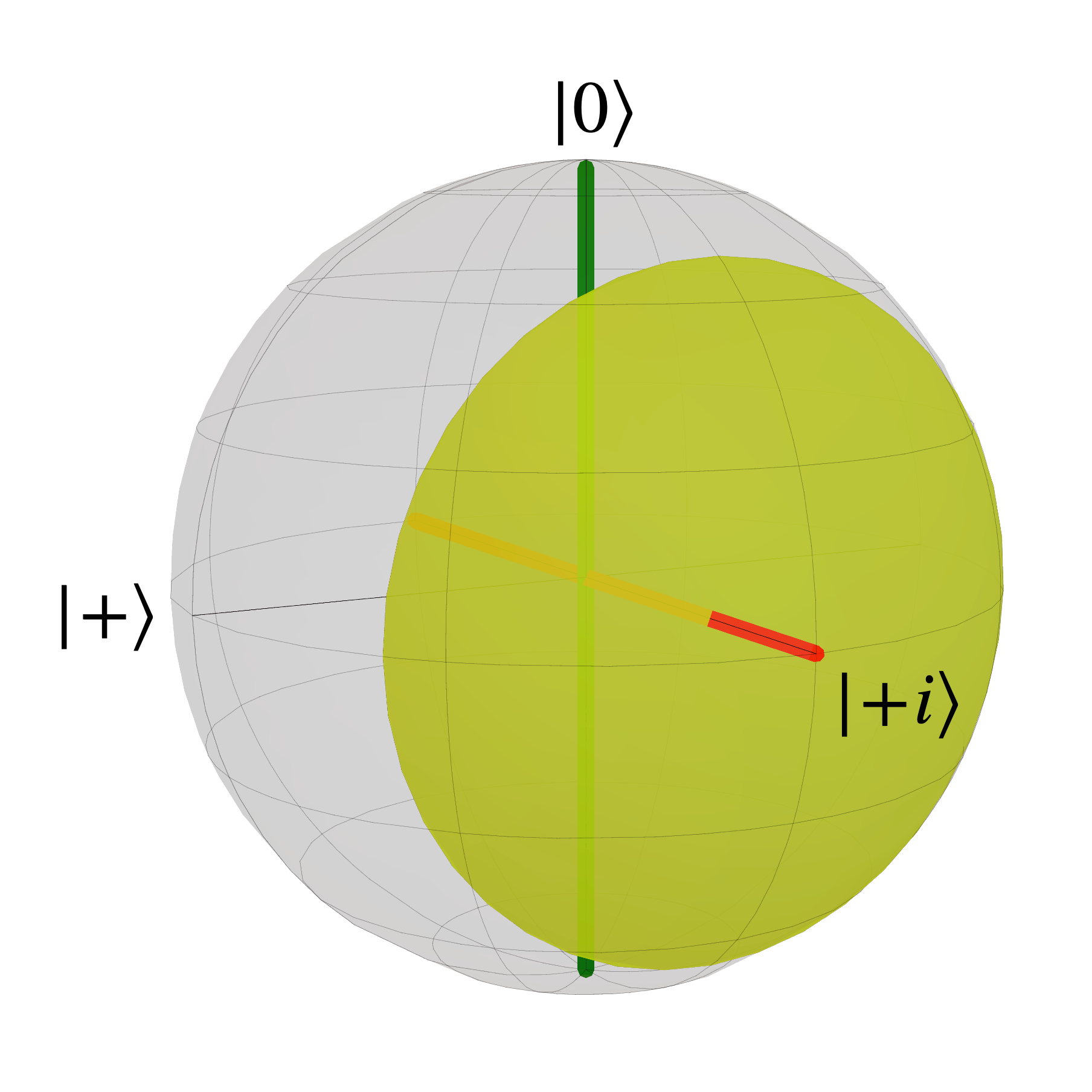}
        \label{fig:dephasing_fidelity}}
        \subfigure[Bit-flip with $f=$ fidelity]{\includegraphics[width=.3\textwidth]{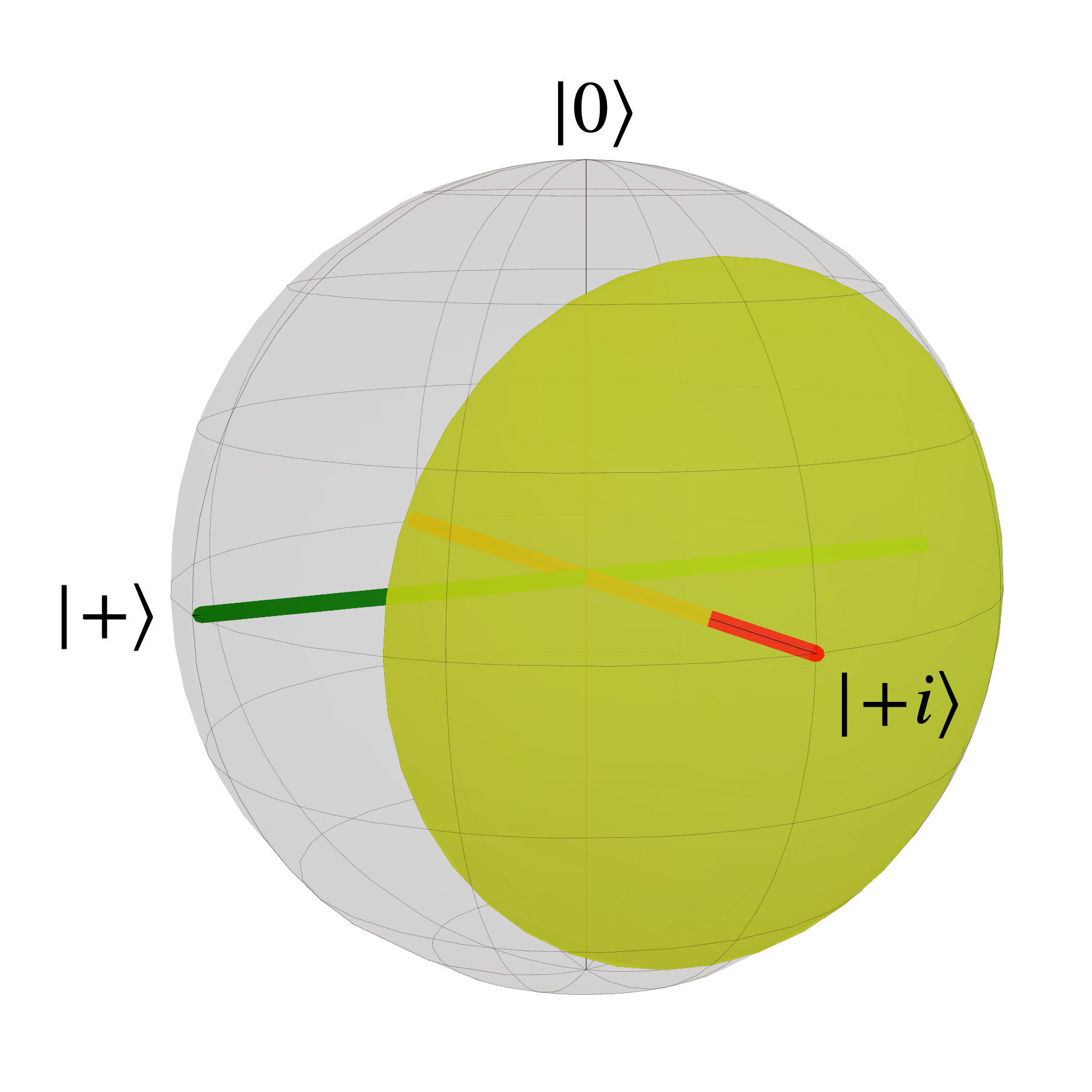}
        \label{fig:bitflip_fidelity}}
        \subfigure[Relaxation with $f=$ fidelity]{\includegraphics[width=.3\textwidth]{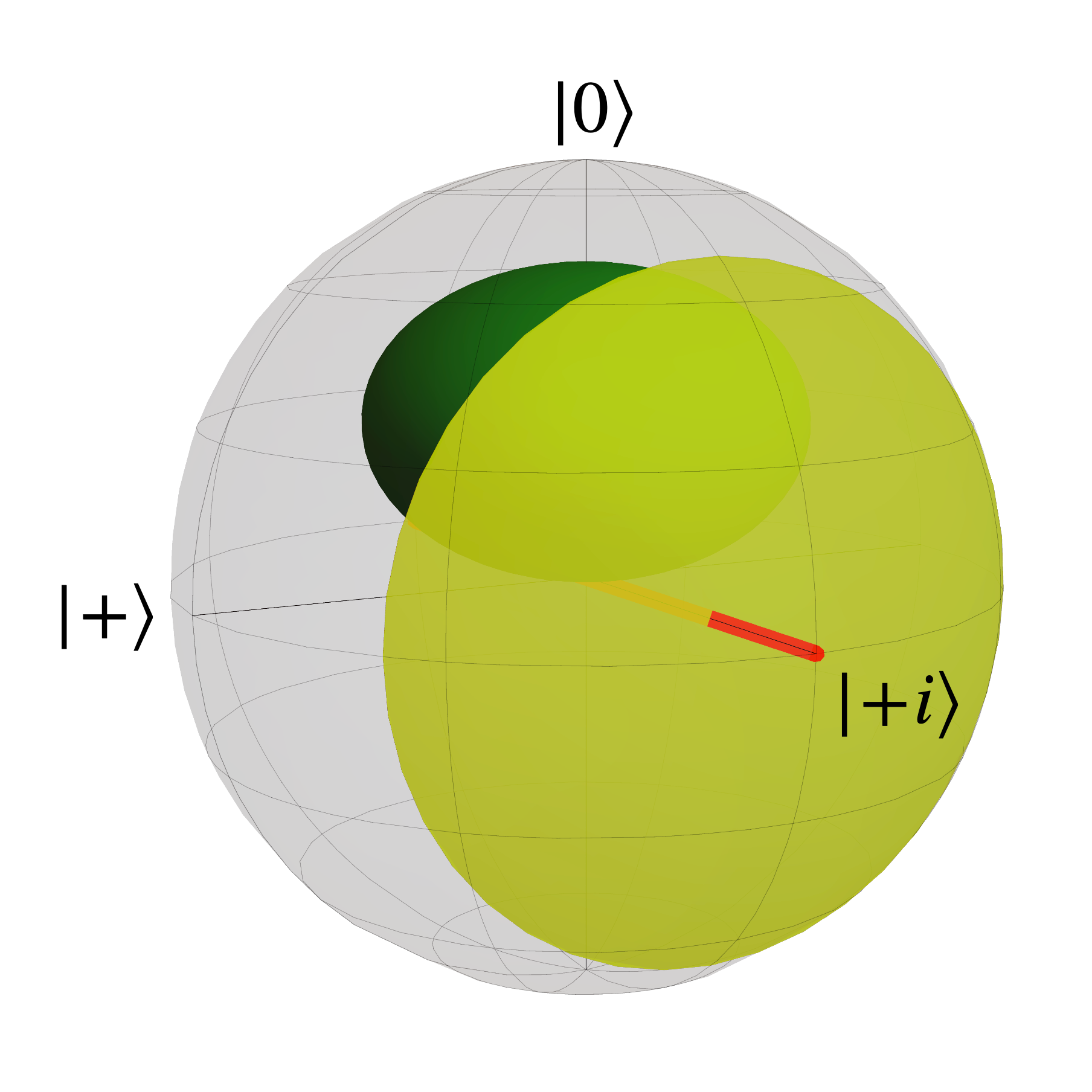}
        \label{fig:relaxation_fidelity}}
    \caption{Regions of level sets, stable points, and breakdown points for three different noise channels: dephasing (left column), bit-flip (middle column), and relaxation with $\beta\Delta=2$ (right column). The target properties are the coherence magnitude (top row) and the fidelity with $\ket{+i}$ (bottom row). The grey spheres represent the Bloch sphere of a qubit. The stable region [$\bv\cdot(R\bv+\bc)=0$] and breakdown region ($\grad f \propto \bv$) are indicated by green and red, respectively. The yellow regions represent the level sets for a given initial state with initial coherence $v_x^2+v_y^2=0.22$ in panels (a)-(c) and initial fidelity $0.8$ with $\bw=(0,1,0)$ in panels (d)-(f).}
    \label{fig:stable_region_plot}
\end{figure*}

\begin{figure}[h]
\includegraphics[width=0.36\textwidth]{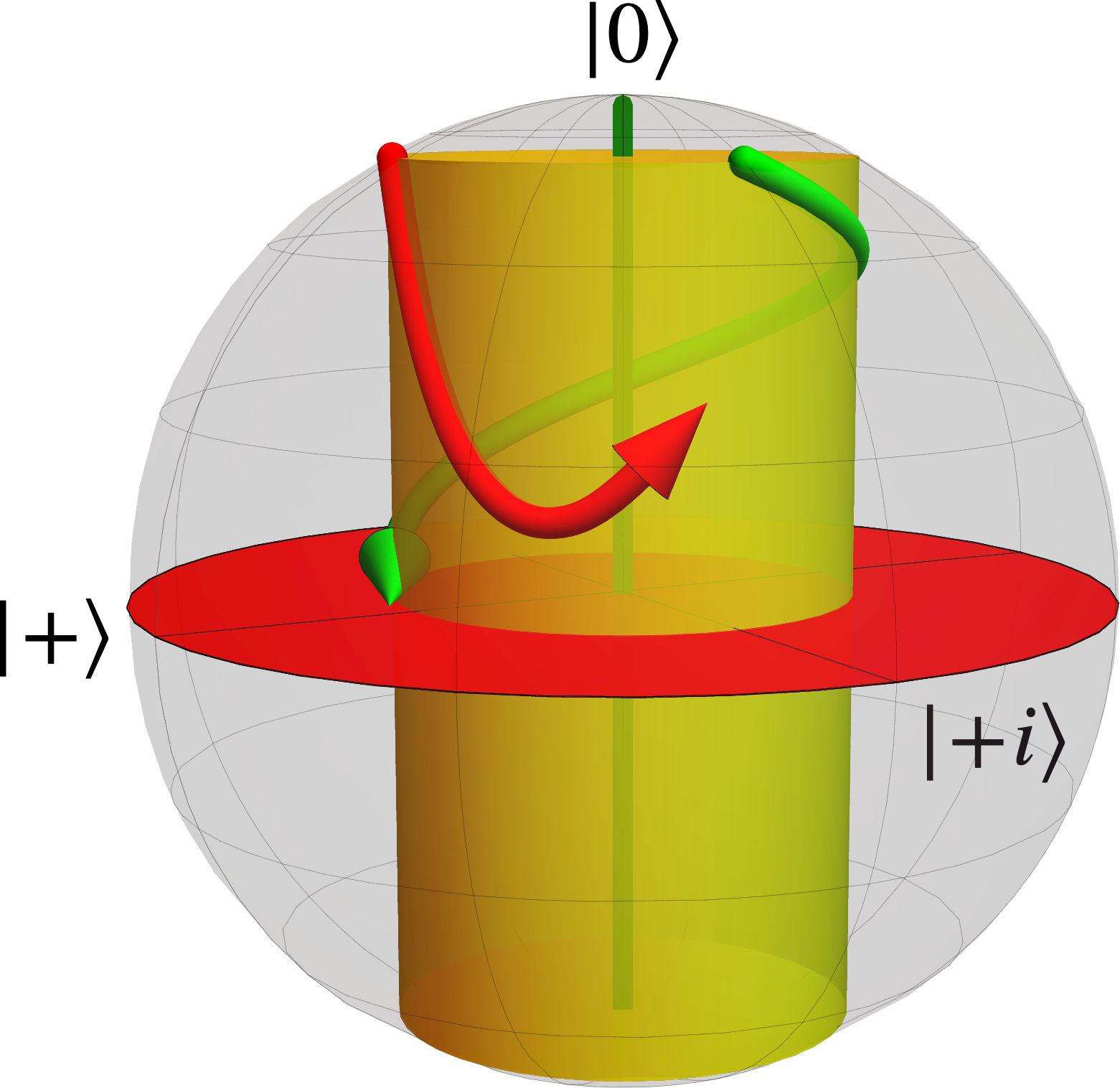}
        \caption{Illustration of \cref{th:general_trajectory} for coherence preservation under dephasing (\cref{ex:1}). The red trajectory has an increasing purity and, therefore, cannot be realized. The green trajectory has decreasing purity and terminates before the breakdown region. Since it satisfies all conditions of \cref{th:general_trajectory}, it is realizable.}
     \label{fig:gen_trajectory_illustration}
\end{figure}

\begin{figure}[h]
\includegraphics[width=0.36\textwidth]{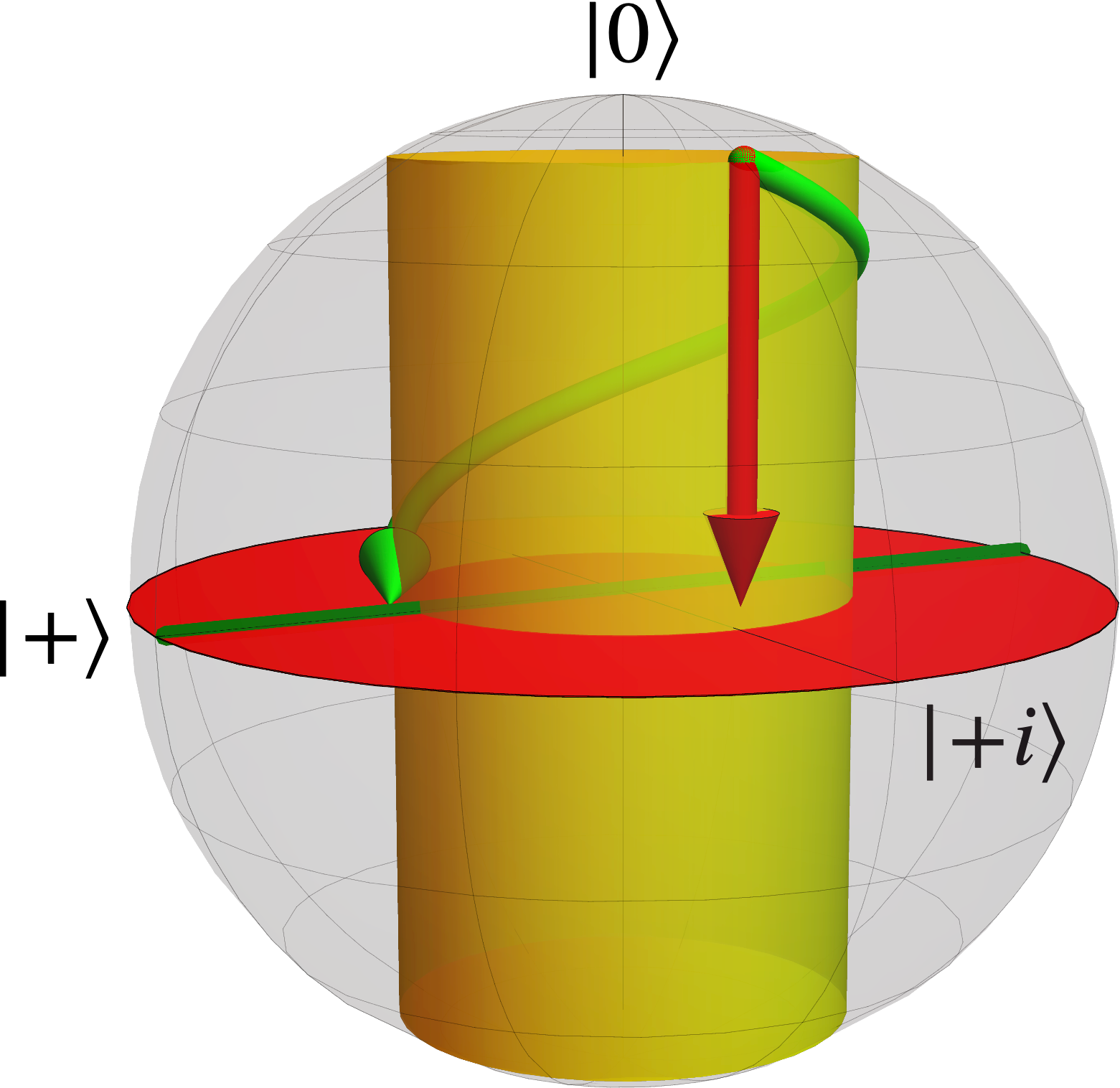}
        \caption{Landscape of the Bloch sphere for the bit-flip channel, showing the cylindrical level set (yellow surface), the breakdown set in the $(x,y)$ plane (red disk), the set of stable points along the $x$-axis (green line), and two trajectories: a coherence-preserving trajectory that travels along the level set towards the stable set (green arrow), and a trajectory with a breakdown point at its end (red arrow).}
     \label{fig:bitflip-coherence}
\end{figure}

\begin{figure*}
     \centering
        \subfigure[]{\includegraphics[width=.45\textwidth]{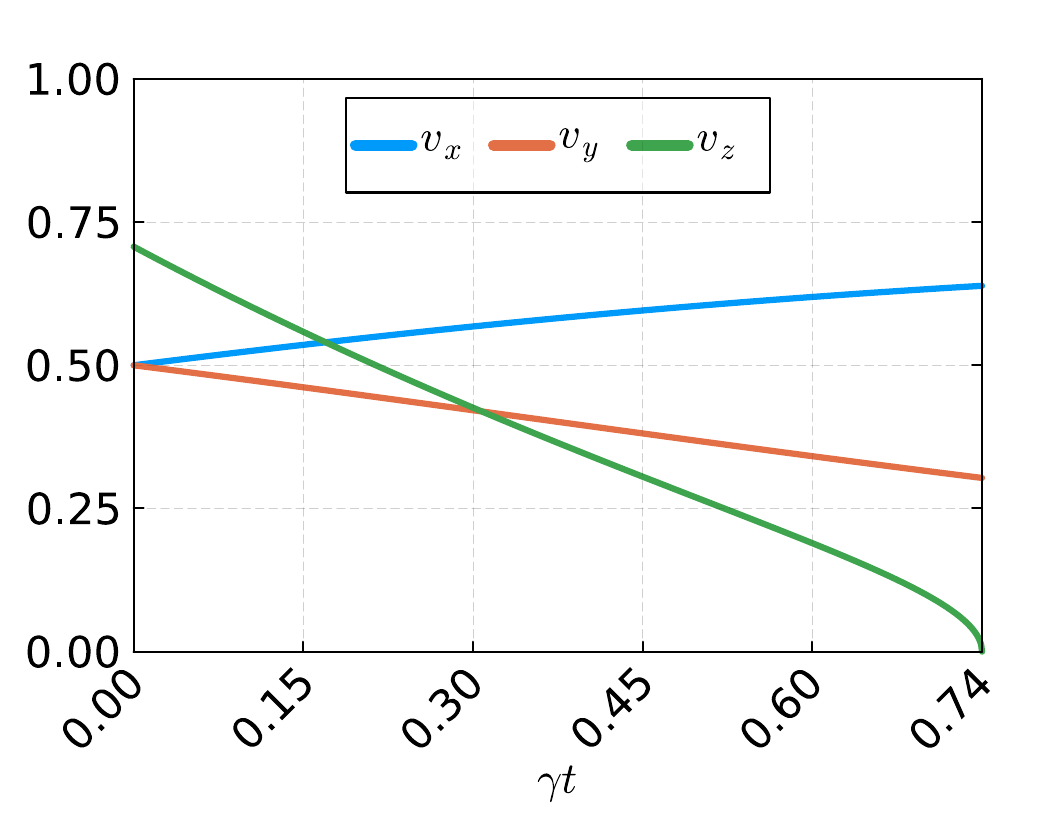}
        \label{fig:coherence_bloch_vector}}
        \subfigure[]{\includegraphics[width=.45\textwidth]{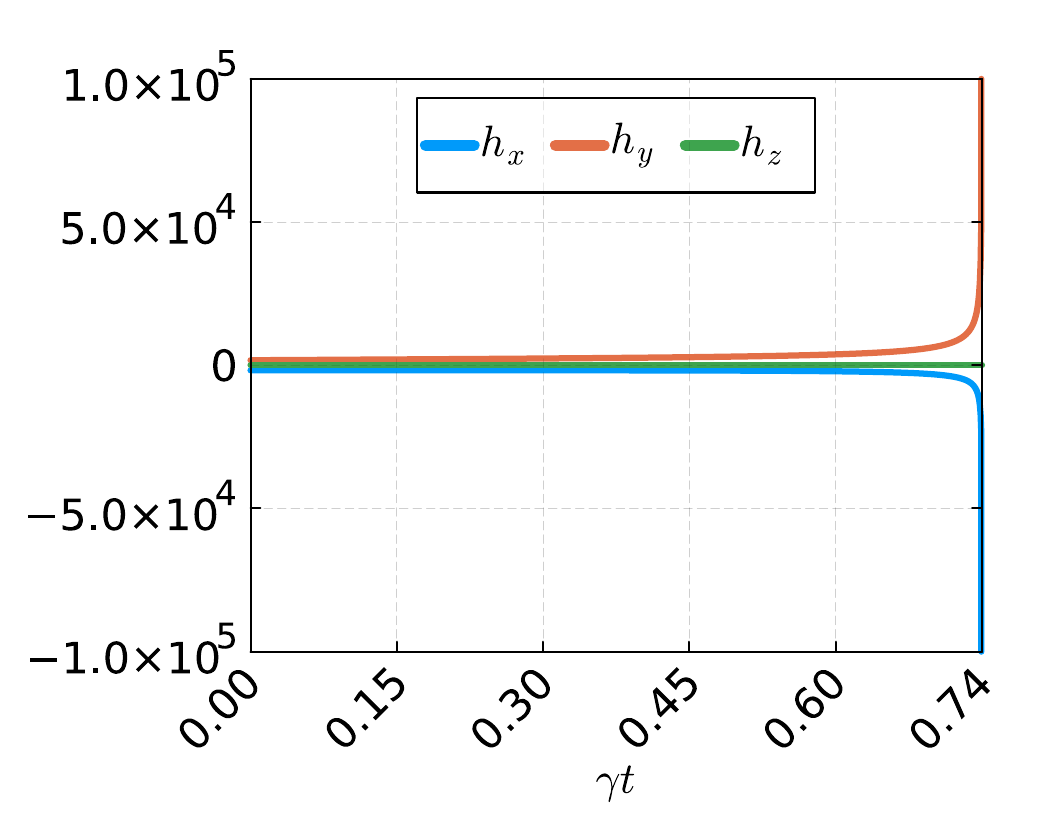}
        \label{fig:coherence_hamiltonian}}
    \caption{Time dependence of the Bloch vector components (a) and the control Hamiltonian components (b) for preserving the coherence magnitude ($v_x^2+v_y^2=0.5$) subject to the bit-flip channel, using only the necessary ($\a_3$) component of the $f$-preserving Hamiltonian [\cref{eq:basic_control}]. The initial state is $(1/2,1/2,1/\sqrt{2})$. Note that the Hamiltonian diverges near the breakdown point ($v_z=0, v_x\neq 0$).}
    \label{fig:t-dep-coherence}
\end{figure*}

\subsection{Coherence}
\label{sec:coherence}

We first focus on the square of the coherence magnitude, which we refer to as the ``coherence'' for simplicity, and whose target constant value we denote by $f_0$. Coherence is the experimental signature of the superposition structure (in a given basis) of quantum systems, which is one of the fundamental attributes of quantum mechanics~\cite{Glauber:1963aa}. Preserving coherence is, therefore, a task of fundamental interest, and the coherence-generating power of quantum operations has been studied extensively~\cite{Styliaris:2018aa}.

\subsubsection{Level set and breakdown points}
\label{sec:LS_BP}

Note that the coherence evades the no-go result of \cref{prop:1}: it does not depend only on $\|\bv\|$.
Indeed, for the coherence magnitude $f = f_0 = v_x^2(0)+v_y^2(0)$, we have $\grad f = (2v_x,2v_y,0)$ and the level set is a cylinder of radius $\sqrt{f_0}$, whose axis is the $z$-axis, as illustrated by the cylinders in \cref{fig:stable_region_plot}(a)-(c).
It follows from \cref{def:breakdown-qubit} that the set of breakdown points is the entire $(x,y)$ plane except for the stable points in the plane.
The role of the control landscape will be more evident below, where we consider several noise models.

\subsubsection{Dephasing}
\label{sec:dephasing-bd}

We revisit \cref{ex:1}, for which the level set with non-zero coherence magnitude $f_0=v_x^2(0)+v_y^2(0)$ forms a vertical cylinder. What are the realizable trajectories?
As shown in \cref{ex:3}, the unstable points are the points not on the $z$-axis.
For any such point we have $\partial_u P_D(\bl)=\bl \cdot(R\bl+\bc)=-2\gamma(l_x^2+l_y^2)<0$ where $\gamma$ is the dephasing rate.
However, for stable points, $\partial_u P_D(\bl)=\bl \cdot(R\bl+\bc)=0$ from \cref{prop:locus} and \cref{eq:dotP-v-qubit}.
Thus, as a consequence of \cref{th:general_trajectory}, a trajectory can be realized iff 
\begin{enumerate}
\item For the portion of the trajectory not on the $z$-axis, the derivative of the trajectory purity is also negative (i.e., the trajectory purity decreases) and 
\item For points on the $z$-axis on the trajectory, the derivative of the trajectory purity is zero.
\end{enumerate}

$\bl$ is equivalent to an $f$-preserving trajectory iff $\bl$ satisfies the conditions above and lies on a level set of $f$.
For coherence preservation with $f_0>0$, the level set contains no points on the $z$-axis.
Therefore, by \cref{th:no-breakdown-on-f,th:general_trajectory}, any trajectory on the cylindrical level set which (i) has a decreasing purity and (ii) does not cross the breakdown region, i.e., the $(x,y)$ plane, is realizable.
An example is illustrated in \cref{fig:gen_trajectory_illustration}: the purity of the red trajectory increases and, therefore, is not realizable.
However, the green trajectory is realizable because it has a decreasing purity throughout and terminates before the breakdown region. For such a trajectory, after decreasing, $l_z(t)$ must eventually reach $0$, i.e., the $(x,y)$ plane, which is the region of breakdown points.
The control Hamiltonian diverges at this point, and due to \cref{th:disconnect_breakdown}, it cannot reach a disconnected point.

Let us find the corresponding breakdown time. Using $v^2 = f_0+v_z^2$, we obtain:
\begin{align}
            \partial_t({v}_z^2) =  \partial_t({v}^2) = 2\dot{P}(\bv)  = -4\g f_0\ ,
\end{align}
whose solution is $v_z^2(t) = v_z^2(0)-4\gamma f_0 t$. Setting $v_z(t_b)=0$ yields the breakdown time as 
\beq
\label{eq:t_b-dephasing}
t_b = \frac{v_z^2(0)}{4\gamma f_0}\ , 
\eeq
which decreases, as might be expected, in inverse proportion to the dephasing rate $\g$ and the initial coherence value $f_0$. Note that $t_b$ is independent of the Bloch vector trajectory; as will be evident in later examples, this is not a general result.

Conversely, an $f$-preserving trajectory can only reach a stable point if it starts at a stable point, i.e., if the trajectory is a point on the $z$-axis.

\begin{figure}[h]
\includegraphics[width=0.36\textwidth]{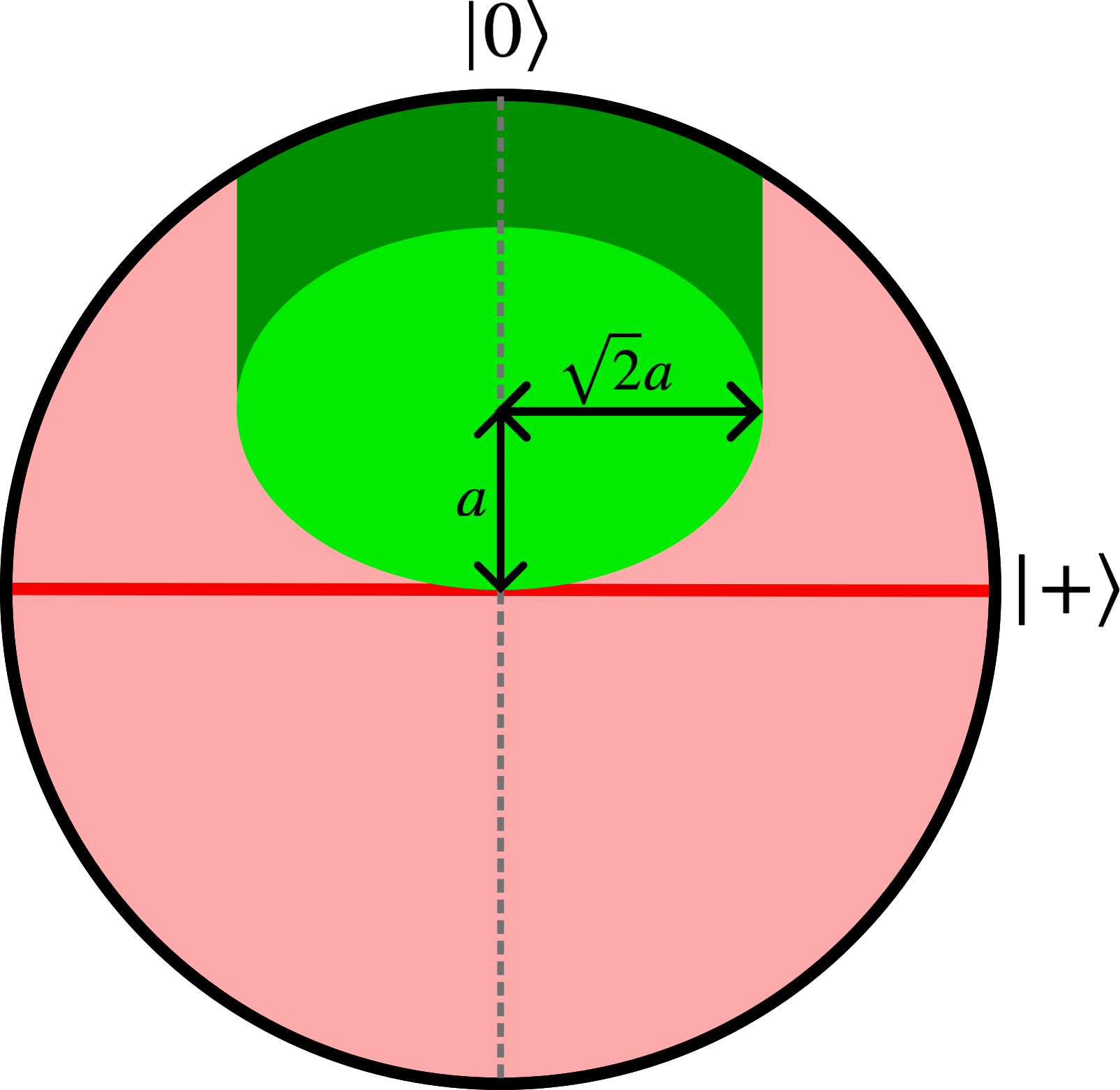}
    \caption{Cross-sectional view ($v_y=0$) of the initial state stability landscape for preserving coherence subject to pure relaxation ($a\equiv [1+\exp(-\b \D)]^{-1}-1/2$).
    The purity decreases in the dark green area, while it increases inside the light green area.
    The two green areas combined are the initial states which lead to stable points.
    The other pink areas always lead to a breakdown point.
    The same structure applies when additional dephasing is present: the ellipsoid shrinks uniformly along the $v_x$ and $v_y$ axes.
    Note that $\bv=(0,0,2a)$ is the steady state for relaxation.}
    \label{fig:relaxation}
\end{figure}

\subsubsection{Bit-flip}
\label{subsec:bit-flip-discussion}

For the bit-flip channel, $L_\a=\s^x$ and the rate $\g_\a(t) = \g>0$ in \cref{eq:L_a}. This corresponds to $R=\mathrm{diag}(0,-2\gamma,-2\gamma)$ and $\bc=\mathbf{0}$. The stable points [\cref{eq:dotP-v-qubit,eq:pD}] are given by $\dot{P}=\bv\cdot R\bv=-2\gamma(v_y^2+v_z^2)=0$, which is the $x$-axis. The breakdown condition $\grad f|_{\bv_b} \propto \bv_b$ yields the entire $(x,y)$ plane but excludes the $x$-axis (the set of stable points). 
For unstable points, we have $\dot{P}_D(\bv)=-2\gamma(v_y^2+v_z^2)<0$.
Thus, as a consequence of \cref{th:general_trajectory}, any trajectory with a decreasing trajectory purity can be realized. Consequently, all trajectories with a fixed non-zero coherence $f_0$ and monotonically decreasing $v_z$ component can be realized as $f$-preserving trajectories.

Remarkably, it is possible to reach a stable point (the $x$-axis) under the bit-flip channel while staying on the level set.
Namely, a trajectory that lies on the cylindrical level set, has a decreasing $|v_z|$ component, and ends on the $x$-axis satisfies \cref{th:general_trajectory}.
Therefore, such a trajectory has a control Hamiltonian that realizes such a trajectory; see \cref{app:bitflip-sol} for a derivation of the corresponding control Hamiltonian.
This requires enforcing $\a_2\ne 0$ in \cref{eq:h-decomp-basis}.

On the other hand, trajectories that do not terminate on the $x$-axis instead end up at a breakdown point in the $(x,y)$ plane. As we show in  \cref{app:bitflip-sol}, working with \cref{eq:basic_control} while setting the other components ($\a_2$ and $\a_3$) of $\bh$ to $0$, we find the corresponding breakdown time to be:
\beq
t_b = \frac{1}{4\gamma} \ln\left((1+ \xi)\exp(\frac{v_z^2(0)}{\xi f_0})-\xi \right)\ ,
\label{eq:t_b-bitflip}
\eeq
where $\xi = v_y^2(0)/v_x^2(0)$. Thus, $t_b$ again decreases as the bit-flip rate $\g$ and initial coherence value $f_0$ grow.

\cref{fig:bitflip-coherence} illustrates the two types of trajectories described above: any trajectory that terminates at a stable point has an infinite breakdown time, and conversely, any trajectory that terminates at a breakdown point has a finite breakdown time.
\cref{fig:t-dep-coherence} shows the time-dependence of the Bloch vector components along with the components of the control Hamiltonian in the latter case. The $h_x$ and $h_y$ components diverge at the breakdown time, where $v_z=0$.

\subsubsection{Depolarization}

The depolarizing channel is defined via $\mc{L}_D(\rho) = \frac{1}{3}\sum_{\a\in\{x,y,z\}} \gamma (\sigma^\a \rho \sigma^\a -\rho)$.
This corresponds to $R=-\frac{4}{3}\g I$ and $\bc=\mathbf{0}$.
The stable points [\cref{eq:dotP-v-qubit,eq:pD}] are given by $\dot{P}=-\frac{4}{3}\g v^2=0$, which is the origin $\bv=\bf{0}$. 
For unstable points, $-\frac{4}{3} \g v^2 <0$.
As a consequence of \cref{th:general_trajectory}, any trajectory with a decreasing trajectory purity can be realized. If the trajectory has a fixed coherence magnitude, it can be realized as an $f$-preserving trajectory.
The set of breakdown points is the entire $(x,y)$ plane, excluding the origin (the stable point).
Since $f$-preserving trajectories must have a monotonically decreasing $v_z(t)$, they must terminate at a breakdown point given that the initial state had non-zero coherence.

The breakdown time is the time it takes to reach the $(x,y)$ plane. The calculation is identical to the one leading to \cref{eq:t_b-dephasing}, except that now we obtain $\dot{v}_z^2 =  -\frac{8}{3} \g (f_0+v_z^2)$, whose solution is $v_z(t) = [\left(f_0+v^2_z(0)\right) e^{-\frac{8 \gamma  t}{3}}-f_0]^{1/2}$.
Setting $v_z(t_b)=0$ yields the breakdown time as 
\beq
t_b =\frac{3}{8 \g} \ln \left(\frac{v^2_z(0)}{f_0}+1\right)\ , 
\eeq
which is again independent of the Bloch vector trajectory.

\subsubsection{Relaxation}
\label{sec:relaxation-preservation}

So far, we have considered examples of unital channels. 
We next analyze the case of relaxation, which is non-unital ($\bc\ne\mathbf{0}$). 
For relaxation subject to coupling to a thermal environment at inverse temperature $\b$ (also known as generalized amplitude damping), the Lindblad operators are $\s^-$ and $\s^+$ with respective rates $\g_- = \gamma p_0$ (relaxation) and $\g_+ = \gamma (1-p_0)$ (absorption), where $p_0 = [1+\exp(-\b \D)]^{-1}$ is the ground state probability for a qubit with energy gap $\D$ at inverse temperature $\b$, and $\gamma$ is the relaxation rate at zero temperature. 
In this case, $R=\mathrm{diag}(-\gamma/2,-\gamma/2,-\gamma)$ and $\bc=(0,0,2\gamma a)$, where $a=p_0-1/2$.
The stable points $\dot{P}=\bv\cdot(R\bv+\bc)=0$ are given by
the ellipsoid $\frac{1}{2}v_x^2+\frac{1}{2}v_y^2+(v_z-a)^2=a^2$, centered at $(0,0,a)$ with major axis $\sqrt{2}a$ along $v_x$ and $v_y$, and minor axis $a$ along $v_z$. 
Since the ellipsoid touches the origin, the set of breakdown points is the entire $(x,y)$ plane except the origin. 
At all points inside this ``stability ellipsoid'', the dissipator-induced purity increases [$\dot{P}_D(\bv)>0$] while outside, it decreases. 
As a consequence of \cref{th:general_trajectory}, any trajectory for which the trajectory purity increases inside the ellipsoid and decreases outside can be realized.

As illustrated in \cref{fig:relaxation}, the initial state determines whether a stable point is reachable (i.e., whether the breakdown time is infinite).
In particular, when the initial state has (i) coherence magnitude $f_0 = v_x^2+v_y^2 < 2a^2$, is outside the stability ellipsoid and in the northern hemisphere (dark green) or (ii) is inside the stability ellipsoid (light green), the state can end up at a stable point on the stability ellipsoid.
The purity increases for initial points inside the stability ellipsoid and decreases for initial points outside it. 
When the level set has a radius larger than $\sqrt{2}a$ (major axis of the stability ellipsoid), every point on the level set ends up at a breakdown point.
For these initial states, the breakdown time can be found analytically and is given by \cref{eq:tb-deph+rel} below.

\subsubsection{Dephasing + Relaxation}
\label{sec:deph+rel}

If the dephasing rate is $\gamma_d$ and the relaxation rate is $\gamma_1$, the combined dissipator is 
$R=\mathrm{diag}(-\g_2, -\g_2, -\g_1)$ and $\bc=(0,0,2\g_1a)$, where $\g_2 = 2\gamma_d+\gamma_1/2$, and $a=[1+\exp(-\b \D)]^{-1}-1/2$ as in the pure relaxation case.
The region of stable points is then the ellipsoid $\g_2(v_x^2+v_y^2)+\g_1(v_z-a)^2=\g_1 a^2$, centered at $(0,0,a)$.
The presence of dephasing, therefore, simply shrinks the stability ellipsoid for pure relaxation in the $(x,y)$ plane.

The landscape of stable points follows the same structure as that of pure relaxation: when the initial state has (i) coherence $f_0\equiv v_x^2+v_y^2 < a^2\frac{\g_1}{\g_2}$, is outside the stability ellipsoid and in the northern hemisphere or (ii) is inside the stability ellipsoid, the state will end up at a stable point on the stability ellipsoid.
For states with $f_0>a^2\frac{\g_1}{\g_2}$, all trajectories necessarily end up at a breakdown point.
We show in \cref{app:breakdown-relaxation}, using \cref{eq:basic_control}, that in this case the breakdown time is given by
\begin{align}
\label{eq:tb-deph+rel}
	t_b &= \frac{1}{2\g_1}  \log {\left(\frac{(v_z(0) - a)^2 + \Omega^2}{a^2 + \Omega^2}\right)} \\
         & \qquad + \frac{a}{\Omega\g_1} \left(\arctan{ \frac{v_z(0)-a}{\Omega} } + \arctan{ \frac{a}{\Omega} }\right)\ , \notag
\end{align}
where $\Omega^2 \equiv \frac{\g_2}{\g_1}f_0-a^2$.
Setting $\gamma_d=0$ gives the breakdown time for the pure relaxation case of \cref{sec:relaxation-preservation}, with $\g_2 =\gamma_r/2$ and $\Omega$ correspondingly modified.

Similarly, for points below the stability ellipsoid (i.e., $\g_2(v_x^2+v_y^2)+\g_1(v_z-a)^2>\g_1 a^2$, $f_0<a^2 (\g_1/\g_2)$ and $v_z<a$), we obtain:
    \begin{align}
\label{eq:tb-deph+rel2}
	t_b &=  \frac{1}{2\g_1} \log {\left(\frac{(v_z(0) - a)^2 - \Omega^2}{a^2 - \Omega^2}\right)} \\
         & \qquad - \frac{a}{\Omega \g_1} \left(\arctanh{ \frac{v_z(0)-a}{\Omega} } + \arctanh{\frac{a}{\Omega} }\right)\ . \notag
     \end{align}
where $\Omega^2 \equiv -(\frac{\g_2}{\g_1}f_0-a^2)$.

\subsection{Fidelity}
\label{sec:fidelity-preservation}
We now set the target quantity to be the Uhlmann fidelity $F^2(\r,\s)$ between two states $\r$ and $\s$. We show in \cref{app:Bloch-Uhlmann} that in terms of the corresponding Bloch vectors, the fidelity becomes $f\equiv F^2(\bv,\bw) = \frac{1}{2}\left(1+\bv\cdot\bw + [(1-\|\bv\|^2)(1-\|\bw\|^2)]^{1/2}\right)$.
Thus, the fidelity, like the coherence, evades the no-go result of \cref{prop:1}: it is a function not only of $\|\bv\|$.

\subsubsection{Level set and breakdown points}
\label{sec:LS_BP-f}

The level set for fidelity is a complex surface, but when $\|\bw\|=1$, it simplifies to a plane, as indicated in \cref{fig:dephasing_fidelity,fig:bitflip_fidelity,fig:relaxation_fidelity} by the yellow surface. To identify the breakdown points, we first compute $\grad f$:
\beq
\label{eq:gradf-fid}
\grad f = \frac12 \left( \bw -k_0\bv \right)\ , \quad k_0=\left[\frac{(1-\|\bw\|^2)}{(1-\|\bv\|^2)}\right]^{1/2}\ .
\eeq
Thus, $\grad f \cross \bv=\frac12 (\bw\cross\bv)$, and using \cref{eq:alpha_3}:
\beq
\alpha_3 = \frac{(\bw-k_0\bv) \cdot (R\bv + \bc)}{\|\bw\cross\bv\|^2}\ .
\label{eq:uhlman_a3}
\eeq
It follows that it is not possible to preserve the fidelity if either ($\norm{\bv}=1$ and $\norm{\bw}\neq1$) or $\bv \propto \bw$.

For $\|\bw\|=1$, the set of points where $\grad f \propto \bv$ is $\bv=c\bw$ for any $c\in[-1,1]$ . The breakdown points are therefore $\{c\bw:c\in[-1,1]\}$ minus the set of stable points. The latter set depends on the noise models, which we consider next. Henceforth, we consider the simplifying assumption $\|\bw\|=1$.

For further analysis, we decompose the Bloch vector $\bv$ as the sum of vectors parallel to and perpendicular to $\bw$: $\bv = \alpha_w \bw + \alpha_p \mathbf{p}$ where $\mathbf{p}\cdot\bw=0$, $\|\mathbf{p}\|=1$ and $\alpha_p |_{t=0} > 0$.
Since the target property simplifies to $f = \frac12 (1+\bv\cdot\bw) = \frac12 (1+\alpha_w)$ for $\|\bw\|=1$, only $\alpha_p$ and the direction of $\mathbf{p}$ can be $t$-dependent.
To further simplify the analysis, we also assume that the fidelity-preserving Hamiltonian enforces a trajectory where the direction of $\mathbf{p}$ is fixed (we call this a \emph{fixed-$\mathbf{p}$ Hamiltonian}).
The constraint equation then simplifies to
\begin{equation}
\label{eq:fidelity-control-dynamics}
\dot{\alpha_p}\mathbf{p} = 2 \bh \cross \bv + R\bv+\bc\ .
\end{equation}
Taking the dot product with $\bv$ yields
\begin{equation}
\label{eq:fidelity-control}
\alpha_p\dot{\alpha_p} = \bv\cdot( R\bv+\bc) = \dot{P}_D(\bv)\ .
\end{equation}
Multiplying \cref{eq:fidelity-control-dynamics} by $\alpha_p$ and substituting \cref{eq:fidelity-control} gives
\begin{equation}
\bv\cdot( R\bv+\bc) \mathbf{p} = 2 \alpha_p \bh \cross \bv + \alpha_p (R\bv+\bc)\ .
\end{equation}
Comparing the coefficients of each of the orthogonal components $\bw$, $\mathbf{p}$ and $\bw \cross \mathbf{p}$ then gives the corresponding control Hamiltonian $\bh$ as a function of the instantaneous state.

\subsubsection{Dephasing/Bit-flip}
\begin{figure*}
\centering
        \subfigure[$F^2=0.75$]{\includegraphics[width=.30\textwidth]{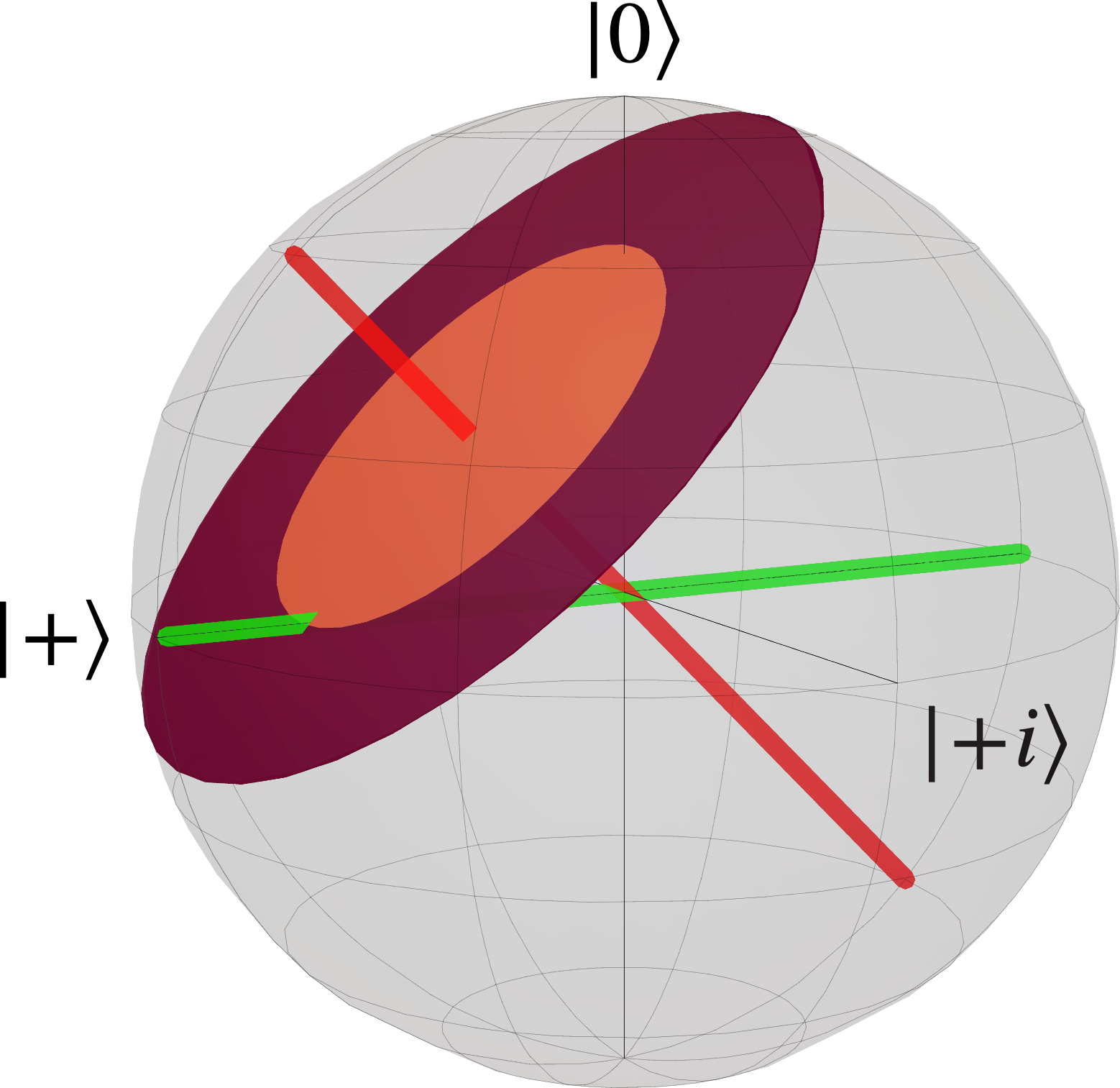}
        \label{fig:bitflip_initial_stable}}
        \subfigure[$F^2=0.9$]{\includegraphics[width=.30\textwidth]{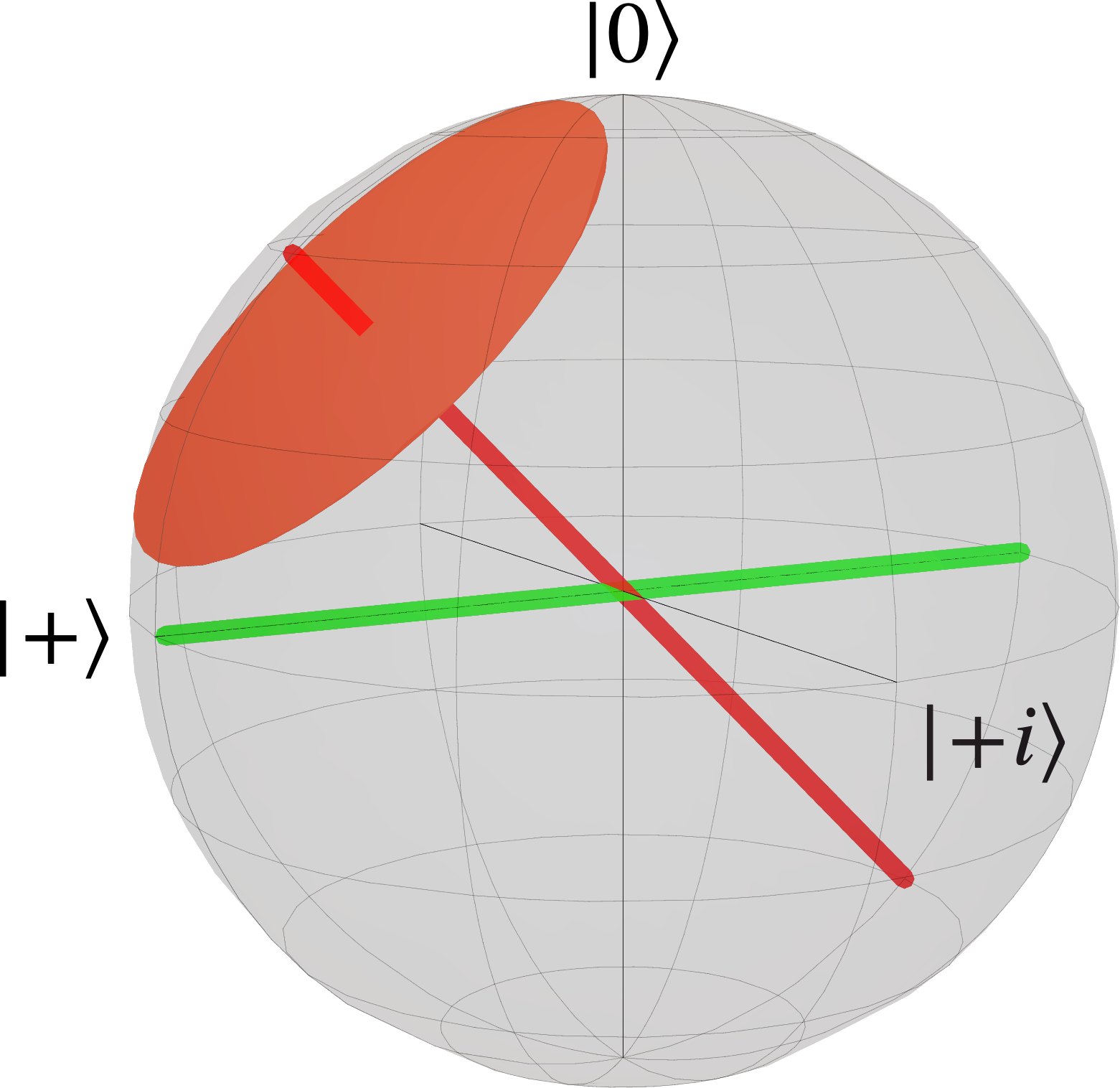}
        \label{fig:bitflip_initial_unstable}}
        \subfigure[$F^2=0.75$, with a fixed-$\mathbf{p}$ Hamiltonian]{\includegraphics[width=.30\textwidth]{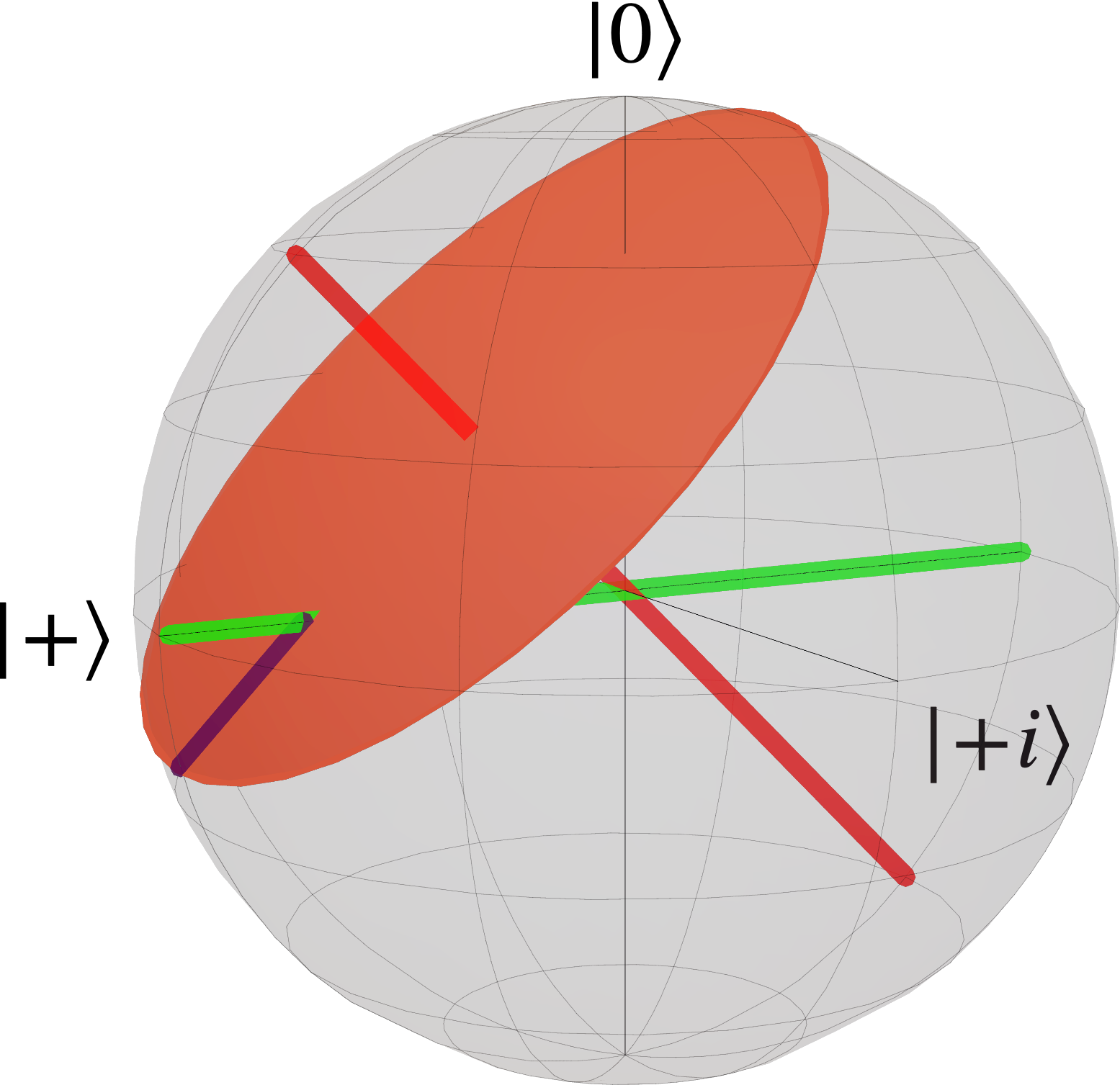}
        \label{fig:bitflip_initial_stable_specific_ham}}
    \caption{Initial state stability landscape for preserving the fidelity with $\bw = (1/\sqrt{2},0,1/\sqrt{2})$, for three different initial fidelity values, subject to the bit-flip channel.
    The red (green) region is the set of breakdown (stable) points.
    The level sets of constant fidelity are the orange and purple (a,c) and orange (b) regions inside the Bloch sphere, which form a plane.
    The purple regions represent initial states that can lead to stable points, while the orange regions represent initial states guaranteed to lead to breakdown.
    }
    \label{fig:bitflip-fidelity}
\end{figure*}

\begin{figure*}
     \centering
        \subfigure[]{\includegraphics[width=.45\textwidth]{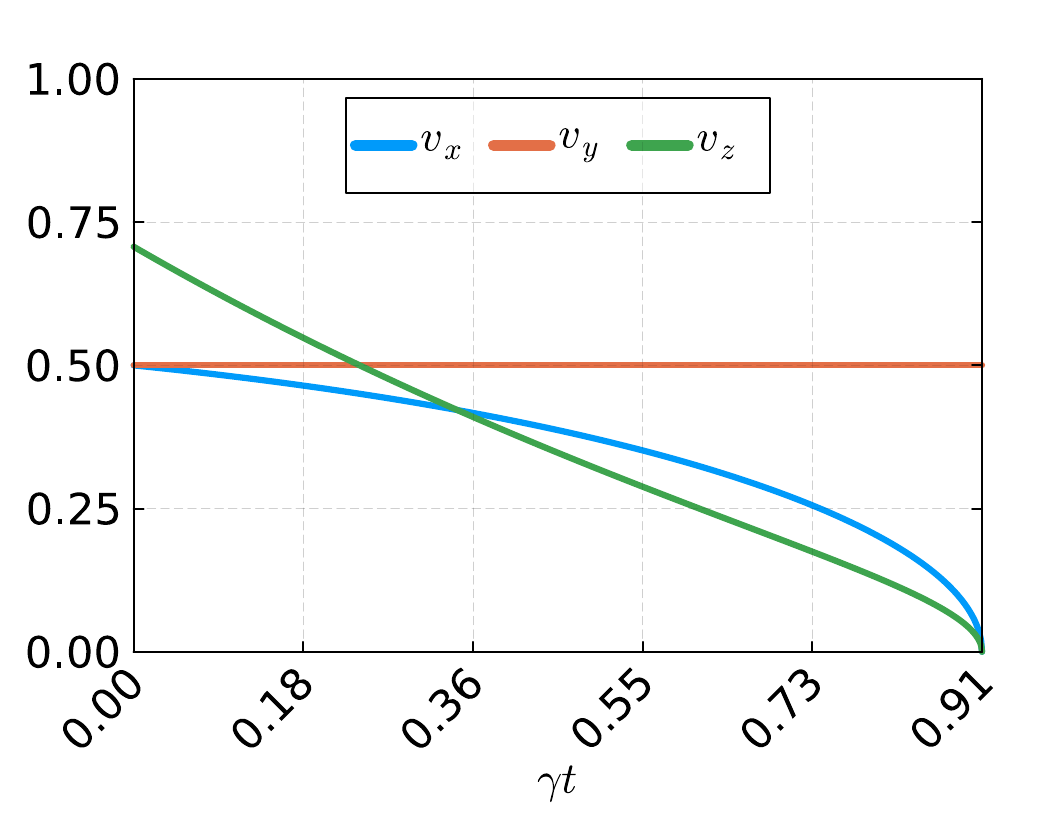}
        \label{fig:fidelity_bloch_vector}}
        \subfigure[]{\includegraphics[width=.45\textwidth]{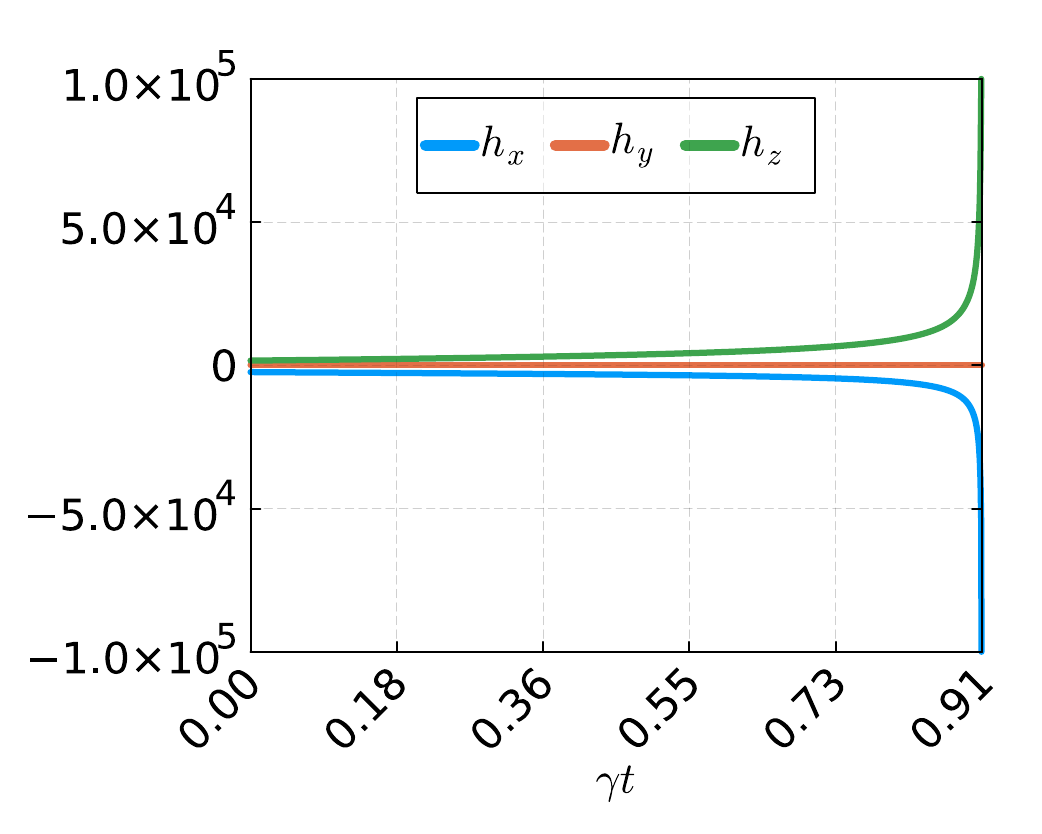}
        \label{fig:fidelity_hamiltonian}}
     \caption{Time dependence of the Bloch vector components (a) and the control Hamiltonian components (b) for preserving the fidelity at a value of $F^2=0.75$ with respect to $\bw= (0,1,0)$ subject to the bit-flip channel, using only the necessary component of the $f$-preserving Hamiltonian [\cref{eq:basic_control}]. The initial state is $(1/2,1/2,1/\sqrt{2})$. Note that the Hamiltonian diverges near the breakdown point ($v_x=v_z=0$).}
    \label{fig:t-dep-fidelity}
\end{figure*}

The stable point locus is the $z$-axis for dephasing and the $x$-axis for the bit flip channel, as in the discussion on coherence preservation in \cref{sec:coherence}.
\cref{fig:bitflip-fidelity} illustrates that the initial conditions decide whether the state can reach a stable point.
Contrasting \cref{fig:bitflip_initial_stable,fig:bitflip_initial_unstable} in particular, shows that it is possible to reach a stable point if an appropriate Hamiltonian is chosen, provided the initial fidelity is sufficiently low.
The set of all points that can lead to stable points forms a cylinder oriented along $\bw$ minus cones with axis $\bw$ and edge along the dephasing axis.
In comparison, \cref{fig:bitflip_initial_stable_specific_ham} shows the smaller set of points that terminate at a stable point for the same initial fidelity as in \cref{fig:bitflip_initial_stable} when we are constrained to apply the fixed-$\mathbf{p}$ Hamiltonian described above.

\cref{fig:t-dep-fidelity} shows the time-dependence of the Bloch vector components and the control Hamiltonian components. Similarly to \cref{fig:t-dep-coherence}, the $h_x$ and $h_y$ components diverge at the breakdown time, where $v_x=v_z=0$.

Continuing our analysis with a fixed-$\mathbf{p}$ Hamiltonian, the dissipator can be written as 
\beq
R\bv+\bc = -2\gamma [\bv-(\bv\cdot\hat{\mathbf{d}})]\hat{\mathbf{d}},
\eeq 
where $\hat{\mathbf{d}}$ represents the unit vector in the $v_z$ and $v_x$ direction for the dephasing and bit-flip channels, respectively.
Then \cref{eq:fidelity-control} simplifies to
\begin{align}
\alpha_p\dot{\alpha_p} = -2\gamma [\|\bv\|^2-(\bv\cdot\hat{\mathbf{d}})^2] = \dot{P}_D(\bv) ,
\end{align}
which is non-positive by the Cauchy-Schwartz inequality.
As a consequence of \cref{th:general_trajectory}, a trajectory with non-increasing purity can be realized.
Since $\alpha_w$ remains constant to preserve fidelity, $\alpha_p$ should be the non-increasing component.

Recall that $\bv = \alpha_w \bw + \alpha_p \mathbf{p}$, $\mathbf{p}\cdot\bw=0$, and $\|\bw\|=\|\mathbf{p}\|=1$; setting $\mathbf{p}\cdot \hat{\mathbf{d}}=\cos{\theta_p}$ and $\mathbf{w}\cdot \hat{\mathbf{d}}=\cos{\theta_w}$, we have:
\begin{align}
\label{eq:dephasing-fidelity-eq}
\alpha_p \dot{\alpha_p} &= -2\gamma (\alpha_p^2 \sin^2{\theta_p} + \alpha_w^2 \sin^2{\theta_w} \notag \\
	&\qquad - 2\alpha_w\alpha_p\cos{\theta_p}\cos{\theta_w} )\ .
\end{align}
This differential equation can be solved analytically for $\alpha_p(t)$ based on initial conditions. 
If $\a_w$ is $0$, the system decays to a stable point, leading to an infinite breakdown time.
Otherwise, if $\theta_p=0$ then $\theta_w=\pi/2$ and $\alpha_p(t)^2-\alpha_p(0)^2=-4\g \a_w t$, which gives the breakdown time  $t_b=\a_p(0)^2/(4\g \a_w)$.
The analysis for $\theta_p\neq 0$, including cases with finite and infinite $t_b$, is provided in \cref{app:dephasing-fidelity-breakdown}.

Note that the infinite $t_b$ result is in contrast to the no-go coherence-preservation result we obtained for the dephasing channel: under no circumstance is it possible to indefinitely preserve the coherence of a state, except for the trivial coherence value of $0$. But the fidelity can be preserved indefinitely for an appropriate choice of $\bw$ and initial fidelity value.

\subsubsection{Depolarization}
The set of stable points contains only one point: the origin.
The state can reach the origin given that we start with $\a_w=0$, which leads to an infinite breakdown time. 
Otherwise, repeating the same analysis as in the previous section, setting $R\bv+\bc = -4\gamma \bv/3$  yields
\begin{subequations}
\begin{align}
\alpha_p \dot{\alpha_p} &= -\frac{4\gamma}{3} (\alpha_p^2 + \alpha_w^2)\ .
\end{align}
\end{subequations}
This gives
\begin{equation}
\alpha_p^2(t) = e^{-8\gamma t/3}(\alpha_p^2(0) + \alpha_w^2) - \alpha_w^2\ .
\end{equation}
Setting $\alpha_p(t_b) = 0$ gives the breakdown time
\begin{equation}
t_b=\frac{3}{8\gamma} \ln {\frac{\alpha_p^2(0) + \alpha_w^2}{\alpha_w^2}} = \frac{3}{8\gamma} \ln {\frac{\|\bv(0)\|^2}{(\bv\cdot\mathbf{w})^2}}\ .
\end{equation}
The behavior matches that of coherence preservation: $t_b$ is independent of the control Hamiltonian used. However, not all initial states will lead to a finite $t_b$.

\subsubsection{Relaxation}

\begin{figure}
\centering
       \subfigure[\ Initial fidelity $F^2=0.75$]{\includegraphics[width=.32\textwidth]{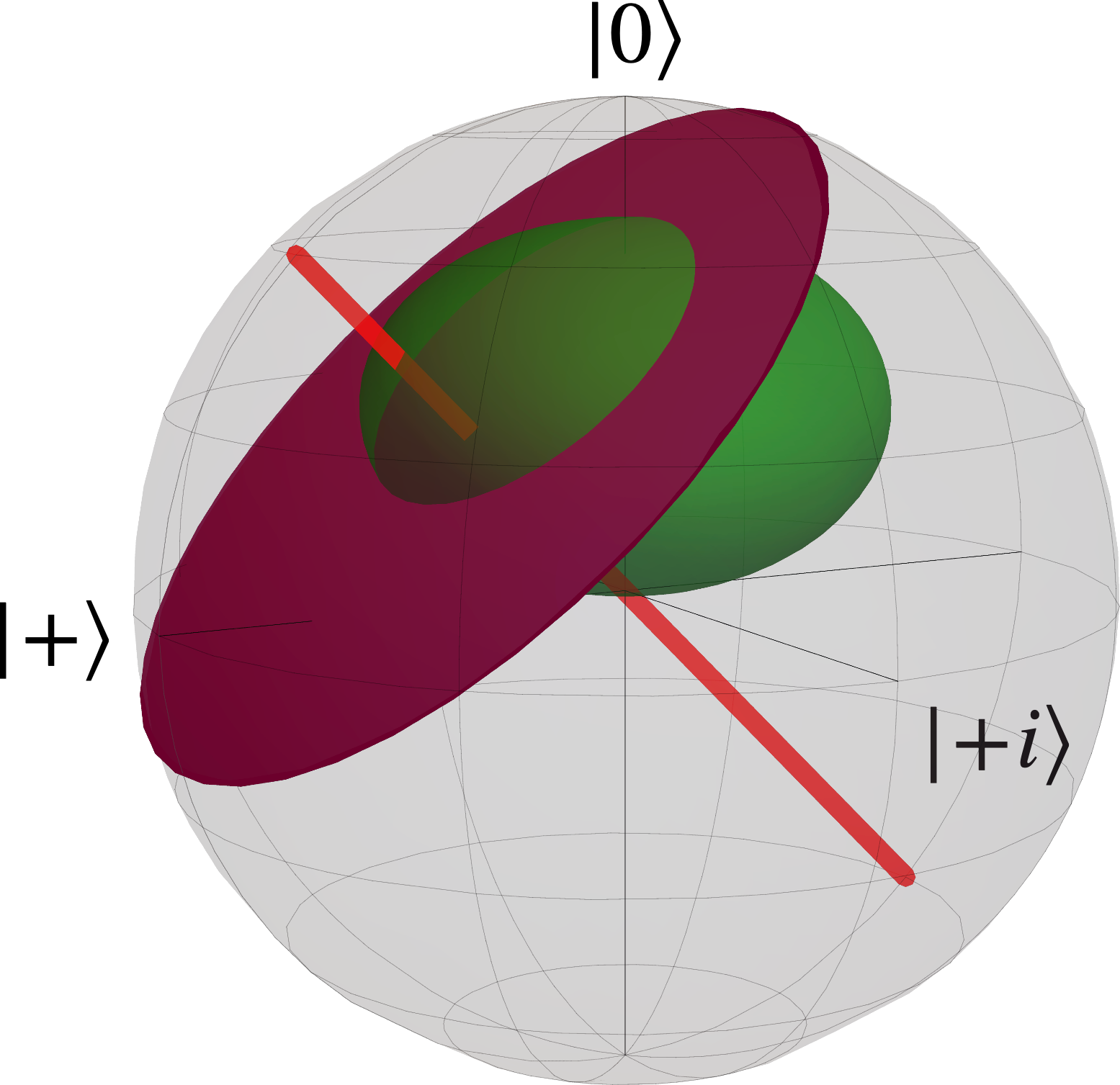}
        \label{fig:relaxation_initial_stable_fidelity}}
        \subfigure[Initial fidelity $F^2=0.9$]{\includegraphics[width=.32\textwidth]{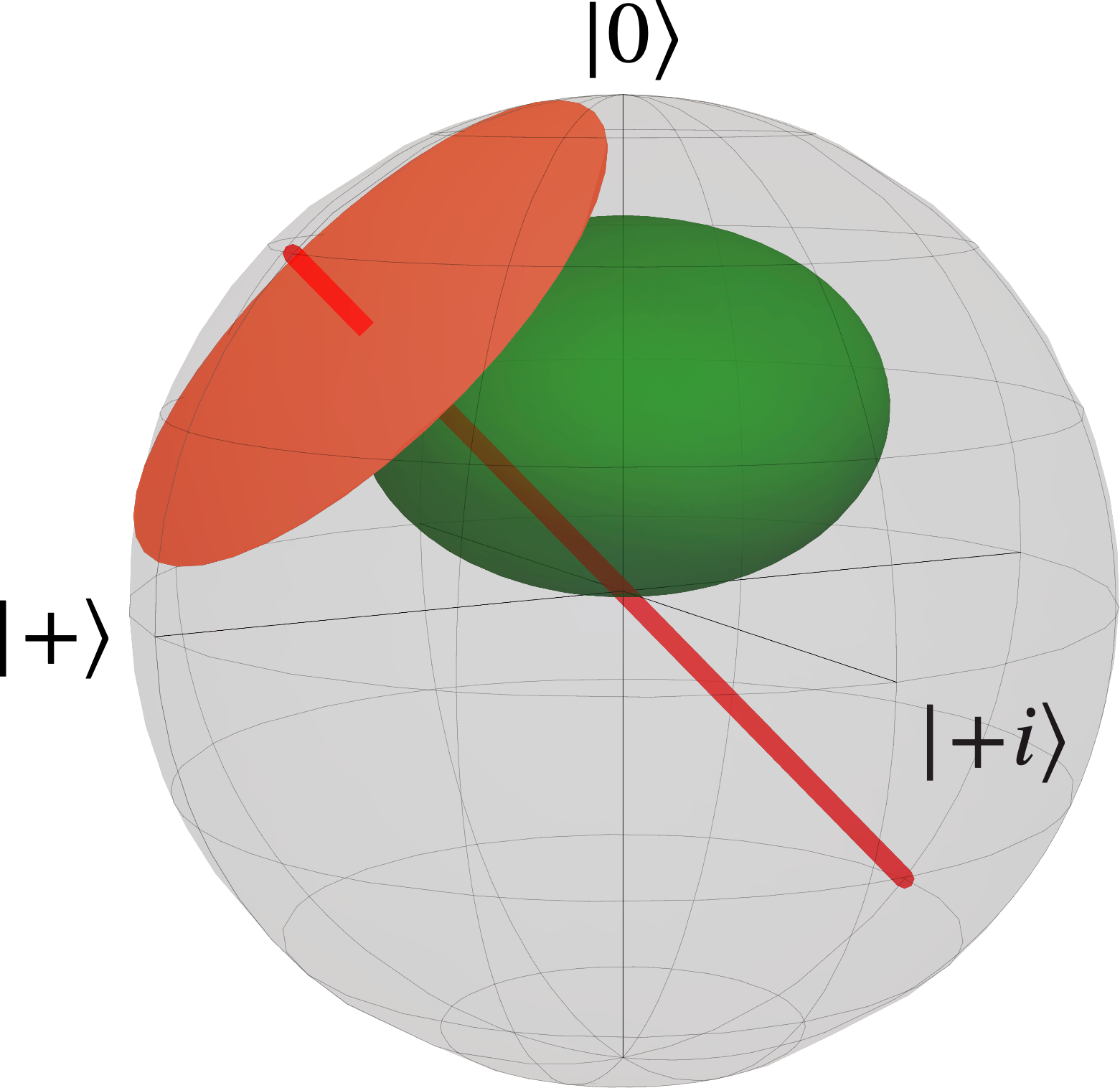}
        \label{fig:relaxation_initial_unstable_fidelity}}
    \caption{Initial state stability landscape for preserving the fidelity with $\bw = (1/\sqrt{2},0,1/\sqrt{2})$, for two different initial fidelity values, subject to the relaxation channel with $\beta \Delta=2$.
    The red (green) region is the set of breakdown (stable) points.
    The level sets of constant fidelity are the purple (a) and orange (b) regions inside the Bloch sphere, which form a plane.
    The purple regions represent initial states that can lead to stable points, while the orange regions represent initial states guaranteed to lead to breakdown.
    }
    \label{fig:relaxation-fidelity}
\end{figure}

The set of stable points forms an ellipsoid. Following the same notation as in \cref{sec:relaxation-preservation}, the dissipative part can be written as
$R\bv +\bc = -\frac{\gamma}{2} \bv + \frac{\gamma}{2} (4a-\bv\cdot\hat{\mathbf{k}})\mathbf{k}$, where $\hat{\mathbf{k}}$ is the unit vector in $v_z$ direction. Then \cref{eq:fidelity-control} simplifies to
\begin{align}
\alpha_p\dot{\alpha_p}  = -\frac{\gamma}{2} (\|\bv\|^2-4a (\bv\cdot\hat{\mathbf{k}}) + (\bv\cdot\hat{\mathbf{k}})^2)\ .
\end{align}
Since $\|\bv\|^2 = \alpha_p^2+\alpha_w^2$ and $\bv\cdot \hat{\mathbf{k}} = \alpha_p (\mathbf{p}\cdot\hat{\mathbf{k}}) + \alpha_w (\mathbf{w}\cdot\hat{\mathbf{k}})$ where $\mathbf{p}\cdot\hat{\mathbf{k}}$ and $\mathbf{w}\cdot\hat{\mathbf{k}}$ are constant, the resulting differential equation is also solvable analytically for a given initial condition. Next, we discuss how varying initial conditions lead to different dynamics.

\cref{fig:relaxation-fidelity} illustrates how the initial state determines whether a stable point can be reached. If the fidelity level set intersects the region of stable points, as in \cref{fig:relaxation_initial_stable_fidelity}, then stable points are reachable from any initial state on that level set (except when the initial state is collinear with $\bw$).
This unique control landscape allows us to preserve the fidelity for almost all states on a level set, unlike any of our previous examples.
Surprisingly, they are reachable using the simple fixed-$\mathbf{p}$ Hamiltonian.
Contrast this with the case of the dephasing channel, where using the fixed-$\mathbf{p}$ Hamiltonian limited the set of reachable stable points.
For points starting inside the ellipsoid, the purity increases (i.e., $\a_p$ increases) until the ellipsoid is reached. Similarly, for points starting outside the ellipsoid, the purity decreases (i.e., $\a_p$ decreases), and eventually, the state ends up on the ellipsoid.

In contrast, when the level set does not intersect the stability ellipsoid, all initial states necessarily reach a breakdown point, as illustrated in \cref{fig:relaxation_initial_unstable_fidelity}.

\section{Conclusions and Outlook}
\label{sec:conclusion}

We have presented a tracking control framework for using smoothly time-varying Hamiltonians to address quantum property preservation (QPP): the preservation of key target properties (e.g., coherence and fidelity) of open quantum systems. QPP can be interpreted as providing operational stability to a quantum system, a softer requirement than preserving the entire state over time. The main advantages of our tracking control approach over standard quantum error avoidance, correction, or suppression are that it avoids any encoding overhead, does not require any measurement-based feedback, and is applicable even subject to Markovian environments, where traditional Hamiltonian control approaches such as dynamical decoupling fail.

We found necessary conditions for control Hamiltonians to preserve a given target property $f$ over time, and provided analytic (non-unique) solutions for these Hamiltonians. 
We showed that independent of the initial state, the dissipator delineates certain regions in the state space where stability is guaranteed to persist indefinitely (stable points). Similarly, given a dissipator, the target property $f$ generates instability regions in the state space (breakdown points). Breakdown points are associated with a breakdown time at which the Hamiltonian diverges and after which the preservation of $f$ ceases. The goal of property preservation can, therefore, be viewed as preventing the state trajectory from passing through any breakdown point. However, the set of possible $f$-preserving trajectories is also constrained; it is characterized by whether the control Hamiltonian can counteract the dissipator action and result in unitary dynamics.

As we have defined it here, one obvious limitation of the tracking control approach to QPP is that it requires full knowledge of the state $\rho$. For an unknown state, this  would, in principle, necessitate full quantum state tomography, a computationally demanding requirement even when implemented using methods such as compressed sensing~\cite{Gross:2010aa}. An important potential avenue for simplification is the use of shadow tomography, which requires exponentially fewer measurements than full state tomography and provides partial information about the quantum state~\cite{Aaronson-shadow-tomography, Huang:2020wo}. Future work will address whether the tracking control problem can be reformulated using such partial state information.

An additional potential future simplification is related to the fact that we have presented a time-local formulation of the target preservation control problem. A natural reformulation in terms of minimizing the integrated deviation of the target property from its initial value is an interesting question. It may even sidestep the issue of the breakdown singularities of the tracking control method.

Other avenues for future research include establishing connections to quantum error avoidance and correction by allowing for the use of ancilla qubits. While this direction is orthogonal to the spirit of the quantum property preservation framework outlined in the Introduction, it is nevertheless potentially promising as an avenue for reducing the resource requirements typically assumed in quantum error correction.
For example, consider developing a tracking control framework that uses partial state information to implement a codespace-preserving Hamiltonian for stabilizer codes. In this context, we may choose the target function to be the fidelity of the input state with the $+1$ eigenspace of the stabilizer group and seek a control Hamiltonian that ensures the fidelity remains high until the next error correction cycle.

\acknowledgments 
This research was supported by the ARO MURI grant W911NF-22-S-0007.
KS acknowledges support from the Graduate Student Fellowship awarded by USC Viterbi School of Engineering.
We thank Parth Darekar for many helpful discussions during the early stages of this work.

\appendix

\section{Coherence vector}
\label{app:cohvec}

We closely follow the exposition in \cite{ODE2QME}, which also contains the proofs of all the propositions provided below.

Let $M(d, F)$ denote the vector space of $d\times d$ matrices with coefficients
in $F$, where $F \in \{\mathbb{R}, \mathbb{C}\}$. 
For our purposes it suffices to identify $\mc{B}(\mc{H})$ with $M(d,\mathbb{C})$. 
Elements of $\mathcal{B}[\mathcal{B}(\mc{H})]$, i.e., linear transformations $\mc{E}:\mathcal{B}(\mc{H})\to \mathcal{B}(\mc{H})$, are called superoperators, or maps. 

Given a nice operator basis $\{F_j\}$, we may ``coordinatize'' the  operator $X\in\mc{B}(\mc{H})$ as
\begin{equation}
  \label{eq:coordinatization1}
  X=\sum_{j=0}^{J}\boldsymbol{X}_{j}F_{j}\ ,
\end{equation}
where $J\equiv d^2-1$ and we use the notation $\boldsymbol{X}=\{\boldsymbol{X}_{j}\}$ for the vector of coordinates of the operator $X$ (we interchangeably use the $\vec{X}$ and $\boldsymbol{X}$ notations to denote a vector). I.e., for any $X\in\mc{B}(\mc{H})$: 
\beq
  \label{eq:coordinatization2}
  \boldsymbol{X}_i = \<F_i,X\> = \Tr(F_i X)\ .
\eeq

The matrix elements of a \emph{superoperator} $\mc{E}: \mc{B}(\mc{H})\to \mc{B}(\mc{H})$ are given by 
\beq
\boldsymbol{\mathcal{E}}_{ij}= \<F_i,\mathcal{E}(F_j)\> = \Tr[F_{i}\mathcal{E}(F_{j})]\ .
\label{eq:Eij}
\eeq

\begin{mydefinition}
\label{def:Herm-supop}
$\mathcal{E}$ is \emph{Hermiticity-preserving} iff:
\beq
[\mathcal{E}\left(A^{\dagger}\right)]^\dag = \mathcal{E}(A)\ \quad \forall A\in \mc{B}(\mc{H})\ .
\label{eq:Herm-pres}
\eeq
$\mc{E}$ is \emph{Hermitian} iff 
\beq
\mathcal{E}^\dag(A) = \mathcal{E}(A)\ \quad \forall A\in \mc{B}(\mc{H})\ .
\eeq
\end{mydefinition}

\begin{myproposition}
\label{prop:0}
The operator $X\in \mc{B}(\mc{H})$ is Hermitian iff $\boldsymbol{X}\in \mathbb{R}^{d^{2}}$ in a nice operator basis.
$\mathcal{E}$ is Hermiticity-preserving iff $\boldsymbol{\mathcal{E}}\in M(d^2,\mathbb{R})$.
The Liouvillian is Hermiticity-preserving.
\end{myproposition}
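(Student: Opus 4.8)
The plan is to prove the three claims in sequence, since each builds on the previous one. The crucial structural fact throughout is that a nice operator basis consists of \emph{Hermitian} operators, $F_j = F_j^\dagger$, so that the coordinate map $X \mapsto \boldsymbol{X}$ of \cref{eq:coordinatization2} interacts cleanly with the adjoint. This is what turns all three statements into short computations rather than anything structurally deep.

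For the first claim, I would start from the coordinates $\boldsymbol{X}_j = \Tr(F_j X)$ and take complex conjugates using $\overline{\Tr M} = \Tr(M^\dagger)$: this gives $\overline{\boldsymbol{X}_j} = \Tr(X^\dagger F_j^\dagger) = \Tr(F_j X^\dagger)$, where the last step uses $F_j^\dagger = F_j$ and cyclicity. If $X$ is Hermitian this equals $\Tr(F_j X) = \boldsymbol{X}_j$, forcing every $\boldsymbol{X}_j$ to be real; conversely, if all $\boldsymbol{X}_j$ are real, then from \cref{eq:coordinatization1} we get $X^\dagger = \sum_j \overline{\boldsymbol{X}_j} F_j^\dagger = \sum_j \boldsymbol{X}_j F_j = X$. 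This direction presents no obstacle.

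For the second claim, the key observation is that, by \cref{eq:Eij}, fixing $j$ and letting $i$ range produces the coordinate vector of the operator $\mathcal{E}(F_j)$, i.e., this is the $j$-th column of $\boldsymbol{\mathcal{E}}$. Hence $\boldsymbol{\mathcal{E}}$ is real iff every $\mathcal{E}(F_j)$ has real coordinates, which by the first claim holds iff every $\mathcal{E}(F_j)$ is Hermitian. It then remains to connect ``$\mathcal{E}(F_j)$ Hermitian for all $j$'' with the Hermiticity-preserving condition \cref{eq:Herm-pres}. In the forward direction, setting $A = F_j$ in \cref{eq:Herm-pres} immediately gives $[\mathcal{E}(F_j)]^\dagger = \mathcal{E}(F_j)$. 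For the converse I would expand an arbitrary $A = \sum_j \boldsymbol{A}_j F_j$ with complex coefficients, use $A^\dagger = \sum_j \overline{\boldsymbol{A}_j} F_j$, and then verify $[\mathcal{E}(A^\dagger)]^\dagger = \mathcal{E}(A)$ by linearity, the complex conjugates cancelling against the reality of $\boldsymbol{\mathcal{E}}$ term by term.

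The third claim is a direct verification that the generator in \cref{eq:matrix_le} satisfies \cref{eq:Herm-pres}, treating $\lind_H$ and $\lind_D$ separately. For the Hamiltonian part I would use $H = H^\dagger$ together with the conjugation of the prefactor, $\overline{-i} = i$, to obtain $\big[-i[H,A^\dagger]\big]^\dagger = -i[H,A] = \lind_H(A)$. For each dissipator term I would take the adjoint of $L_\alpha A^\dagger L_\alpha^\dagger$ and of the anticommutator, noting that the real rates $\gamma_\alpha(t)$ pass through unchanged, to get $[\lind_D(A^\dagger)]^\dagger = \lind_D(A)$. The only (mild) obstacle is bookkeeping: tracking which factors pick up a complex conjugate versus an adjoint, in particular the sign flip from $\overline{-i}$ in the Hamiltonian term and the swap of the two anticommutator terms under the dagger. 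No deeper difficulty arises, and since the sum of two Hermiticity-preserving maps is Hermiticity-preserving, the result follows.
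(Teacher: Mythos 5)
Your proof is correct and complete. Note that the paper itself does not supply a proof of \cref{prop:0}: the appendix states that the proofs of all propositions in that section are contained in the cited reference \cite{ODE2QME}, so there is no in-paper argument to compare against. Your three-step route --- (i) reality of coordinates $\boldsymbol{X}_j=\Tr(F_j X)$ is equivalent to $X=X^\dagger$ because the $F_j$ are Hermitian and orthonormal; (ii) the columns of $\boldsymbol{\mathcal{E}}$ are the coordinate vectors of the $\mathcal{E}(F_j)$, reducing the superoperator statement to the operator statement plus linearity; (iii) a direct adjoint computation on $\lind_H$ and $\lind_D$ using $H=H^\dagger$, $\overline{-i}=i$, and the reality of the rates $\gamma_\alpha(t)$ --- is exactly the standard argument one would expect the reference to contain, and each step checks out. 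The only point worth making explicit in a write-up is that the converse in step (ii) uses complex-linearity of $\mathcal{E}$ so that $\mathcal{E}(A^\dagger)=\sum_j \overline{\boldsymbol{A}_j}\,\mathcal{E}(F_j)$; you use this correctly but tacitly.
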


\begin{mycorollary}
\label{cor:1}
The quantum master equation $\dot{\rho} = \mc{L}\rho$ in a nice operator basis
is a real-valued linear ordinary differential equation for the vector 
$\boldsymbol{\rho}$ whose coordinates are $\boldsymbol{\rho}_j = \Tr(\rho F_j)$:
\beq
\dot{\boldsymbol{\rho}}=\boldsymbol{\mathcal{L}}\boldsymbol{\rho}\ , \quad \boldsymbol{\rho}\in \mathbb{R}^{d^2}\ , \quad \boldsymbol{\mathcal{L}}\in M(d^2,\mathbb{R})\ .
\label{eq:rhobold}
\eeq
\end{mycorollary}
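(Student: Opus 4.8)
The plan is to derive the corollary directly from \cref{prop:0} together with the coordinatization conventions of \cref{eq:coordinatization1,eq:coordinatization2,eq:Eij}. The key idea is that expressing the operator equation $\dot\rho=\mc{L}\rho$ in a fixed nice operator basis converts it into an ordinary matrix-vector ODE, after which the reality of $\boldsymbol{\rho}$ and $\boldsymbol{\mc{L}}$ are immediate from the Hermiticity facts already in hand.

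First I would record the generic rule by which a superoperator acts in coordinates. For any $X\in\mc{B}(\mc{H})$ and linear map $\mc{E}$, set $Y=\mc{E}(X)$ and expand $X=\sum_j\boldsymbol{X}_jF_j$. Using linearity of $\mc{E}$ and of the Hilbert--Schmidt inner product, the $i$-th coordinate of $Y$ is
\beq
\boldsymbol{Y}_i=\<F_i,\mc{E}(X)\>=\sum_j\boldsymbol{X}_j\<F_i,\mc{E}(F_j)\>=\sum_j\boldsymbol{\mc{E}}_{ij}\boldsymbol{X}_j\ ,
\eeq
where the last step uses the definition \cref{eq:Eij} of the superoperator matrix elements. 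Hence $\boldsymbol{Y}=\boldsymbol{\mc{E}}\boldsymbol{X}$; coordinatizing the action of a superoperator is exactly matrix multiplication of its coordinate matrix on the coordinate vector. Applying this with $\mc{E}=\mc{L}$ and $X=\rho$ shows that the right-hand side of the master equation coordinatizes to $\boldsymbol{\mc{L}}\boldsymbol{\rho}$.

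Next I would handle the left-hand side by noting that, because the basis $\{F_j\}$ is time-independent, differentiation commutes with coordinatization: the coordinate of $\dot\rho$ along $F_j$ is $\<F_j,\dot\rho\>=\tfrac{d}{dt}\<F_j,\rho\>=\tfrac{d}{dt}\boldsymbol{\rho}_j$, so the coordinate vector of $\dot\rho$ is precisely $\dot{\boldsymbol{\rho}}$. Equating coordinates on both sides then gives $\dot{\boldsymbol{\rho}}=\boldsymbol{\mc{L}}\boldsymbol{\rho}$, a linear ODE in $\mathbb{R}^{d^2}$. The reality statements follow by invoking \cref{prop:0}: since $\rho$ is Hermitian, $\boldsymbol{\rho}\in\mathbb{R}^{d^2}$; and since the Liouvillian is Hermiticity-preserving, $\boldsymbol{\mc{L}}\in M(d^2,\mathbb{R})$. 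This establishes all three assertions.

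Because every ingredient is supplied by \cref{prop:0}, there is no genuine obstacle here; the only point requiring a moment's care is the bookkeeping that differentiation passes through the fixed-basis coordinatization (so that $\dot{\boldsymbol{\rho}}$ really is the coordinate vector of $\dot\rho$) and that the superoperator-to-matrix identity above is applied with the index placement consistent with \cref{eq:Eij}.
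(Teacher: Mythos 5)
Your proposal is correct and follows essentially the same route as the paper, which simply cites \cref{prop:0} with $X=\rho$ and $\mc{E}=\mc{L}$; you have merely spelled out the routine bookkeeping (coordinatization turns superoperator action into matrix multiplication, and differentiation commutes with expansion in a fixed basis) that the paper leaves implicit. No gaps.
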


\begin{proof}
This follows directly from \cref{prop:0} with $X=\rho$ and $\mc{E}=\mc{L}$. 
\end{proof}

By \cref{cor:1}, we can expand the density matrix in the nice operator basis as:
\beq
\rho = \mathbf{F}'\cdot \boldsymbol{\rho}  = \frac{1}{d}I + \mathbf{F}\cdot\bv\ , 
\label{eq:333}
\eeq
where $\mathbf{F}' = (F_0, F_1,\dots, F_J)$, $\boldsymbol{\rho} = (1/\sqrt{d},v_1,\dots,v_J)^T$, $\mathbf{F} = (F_1,\dots, F_J)$ collects the traceless basis-operators into a vector, and the corresponding coordinate vector $\bv = (v_1,\dots,v_J)^T \in \mathbb{R}^{J}$ is called the \emph{coherence vector}. We have:
\beq
v_j = \Tr(\rho F_j) = \boldsymbol{\rho}_j \ .
\label{eq:v_j}
\eeq

The space of coherence vectors for a $d$-dimensional quantum system (qudit) is a convex set $\mc{M}^{(d)}$ that is topologically equivalent to a sphere. 
For $d=2$ (a qubit), $\mc{M}^{(2)}$ is a sphere. For $d>2$, $\mc{M}^{(d)}$ is still a convex set but is no longer a sphere, nor is it a polytope. Instead, its faces are copies of $\mc{M}^{(d')}$ with $d'<d$. The boundary of $\mc{M}^{(d)}$ contains all states of less than maximal rank. The set $\mc{M}^{(d)}$ lies inside a sphere of radius $\sqrt{(d-1)/(2d)}$ and contains a maximal sphere of radius $\sqrt{1/[2d(d-1)]}$~\cite{bengtsson2006geometry}. 

\begin{myproposition}
\label{prop:4}
The coherence vector satisfies \cref{eq:le}. 
Moreover, the decomposition of $\mc{L}$ as $\mc{L} = \mc{L}_H+\mc{L}_D$ with $\mc{L}_H$ and $\mc{L}_D$ given by \cref{eq:L_H,eq:L_a}
induces the decomposition $\mc{L}_H(\rho) \leadsto Q\bv$, $\mc{L}_D(\rho) \leadsto R\bv + \bc$, $\bc\in\mathbb{R}^J$, $R,Q\in M(J,\mathbb{R})$, and $Q=-Q^T$ (anti-symmetric). Specifically, $R$, $Q$, and $\bc$ have elements in the nice operator basis $\{F_j\}$ given by:
\bes
\label{eq:8.4.2}
\begin{align}
\label{eq:Qjk}
Q_{ij} &\equiv \Tr[F_{i}\mathcal{L}_H(F_{j})] = Q_{ij}^*\\
\label{eq:Rjk}
R_{ij} &\equiv \Tr[F_{i}\mathcal{L}_D(F_{j})]= R_{ij}^* \\
c_j &= \frac{1}{d}\Tr[F_{j}\mathcal{L}_D(I)] = c_j^*\ .
\end{align}
\ees
\end{myproposition}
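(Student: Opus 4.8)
The plan is to build directly on \cref{cor:1}, which already recasts the master equation in a nice operator basis as the real linear system $\dot{\boldsymbol{\rho}} = \boldsymbol{\mathcal{L}}\boldsymbol{\rho}$ with $\boldsymbol{\mathcal{L}}_{ij} = \Tr[F_i\mathcal{L}(F_j)]$ [\cref{eq:Eij}]. The remaining work is essentially block bookkeeping: I would separate the index $j=0$, associated with $F_0 = I/\sqrt{d}$, from the indices $j\ge 1$ carrying the coherence vector $\bv$. Since $\boldsymbol{\rho}_0 = \Tr(\rho F_0) = 1/\sqrt{d}$ is pinned by trace preservation, the component equations for $i\ge 1$ read $\dot{v}_i = \boldsymbol{\mathcal{L}}_{i0}/\sqrt{d} + \sum_{j\ge 1}\boldsymbol{\mathcal{L}}_{ij}\,v_j$, in which the first (constant) term is the inhomogeneity and the second is the action of the $J\times J$ submatrix on $\bv$.

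Next I would insert the splitting $\mathcal{L} = \mathcal{L}_H + \mathcal{L}_D$ into $\boldsymbol{\mathcal{L}}_{ij}$, defining $Q_{ij} = \Tr[F_i\mathcal{L}_H(F_j)]$ and $R_{ij} = \Tr[F_i\mathcal{L}_D(F_j)]$ for $i,j\ge 1$, which reproduces the claimed formulas and identifies the homogeneous part as $(Q+R)\bv$. For the inhomogeneous coefficient, $\boldsymbol{\mathcal{L}}_{i0}/\sqrt{d} = \tfrac{1}{d}\Tr[F_i\mathcal{L}(I)]$, I would use the crucial simplification $\mathcal{L}_H(I) = -i[H,I] = 0$, so that $\mathcal{L}(I) = \mathcal{L}_D(I)$ and the inhomogeneity is purely dissipative: $c_i = \tfrac{1}{d}\Tr[F_i\mathcal{L}_D(I)]$, exactly as claimed. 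Assembling the three pieces gives $\dot{\bv} = (Q+R)\bv + \bc$, which is \cref{eq:le}.

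For reality, I would note that both $\mathcal{L}_H$ and $\mathcal{L}_D$ are Hermiticity-preserving: $\mathcal{L}_H$ directly, since $(-i[H,A^\dagger])^\dagger = -i[H,A]$ for Hermitian $H$; and $\mathcal{L}_D = \mathcal{L} - \mathcal{L}_H$ as a difference of Hermiticity-preserving maps, the full Liouvillian being Hermiticity-preserving by \cref{prop:0}. Consequently $\mathcal{L}_H(F_j)$ and $\mathcal{L}_D(F_j)$ are Hermitian, and each entry of $Q$, $R$, and $\bc$ is a Hilbert-Schmidt pairing of two Hermitian operators, hence real; equivalently, \cref{prop:0} forces $\boldsymbol{\mathcal{L}}_H,\boldsymbol{\mathcal{L}}_D \in M(d^2,\mathbb{R})$.

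Finally, the antisymmetry $Q = -Q^T$ is the one step that needs a small manipulation rather than a direct appeal to earlier results. I would invoke the commutator trace identity $\Tr(A[B,C]) = \Tr([A,B]C)$ with $A=F_i$, $B=H$, $C=F_j$, giving $Q_{ij} = -i\Tr(F_i[H,F_j]) = -i\Tr([F_i,H]F_j) = i\Tr([H,F_i]F_j) = -\Tr\!\big(F_j(-i[H,F_i])\big) = -Q_{ji}$, where I used $[F_i,H]=-[H,F_i]$ and cyclicity. I expect this antisymmetry computation, together with the clean handling of the affine term, to be the only places requiring care; everything else is inherited from \cref{cor:1,prop:0}. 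The main (minor) obstacle is simply identifying the index-$0$ block correctly as the sole source of $\bc$ and showing it is Hamiltonian-independent via $\mathcal{L}_H(I)=0$.
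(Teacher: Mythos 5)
The paper does not actually contain a proof of \cref{prop:4}: the appendix explicitly defers all proofs of these coherence-vector propositions to the cited reference, so there is no in-paper argument to compare your attempt against. Judged on its own, your blind proof is correct and self-contained. The block bookkeeping (isolating the $j=0$ column of $\boldsymbol{\mathcal{L}}$ as the sole source of the affine term, with $\boldsymbol{\rho}_0=1/\sqrt{d}$ fixed), the observation $\mathcal{L}_H(I)=-i[H,I]=0$ that makes $\bc$ depend only on $\mathcal{L}_D$, the reality of all entries via Hermiticity preservation of $\mathcal{L}_H$ directly and of $\mathcal{L}_D=\mathcal{L}-\mathcal{L}_H$ via \cref{prop:0}, and the antisymmetry chain $Q_{ij}=-i\Tr(F_i[H,F_j])=i\Tr([H,F_i]F_j)=-Q_{ji}$ all check out. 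The only point I would make explicit rather than leave implicit in the phrase ``pinned by trace preservation'' is that the entire $i=0$ row of $\boldsymbol{\mathcal{L}}$ vanishes because both $\mathcal{L}_H$ and $\mathcal{L}_D$ are trace-annihilating, i.e.\ $\boldsymbol{\mathcal{L}}_{0j}=\frac{1}{\sqrt{d}}\Tr[\mathcal{L}(F_j)]=0$; this is what guarantees $\dot{\boldsymbol{\rho}}_0=0$ and hence the consistency of restricting the $d^2$-dimensional linear system to the $J$-dimensional equation \cref{eq:le} for $\bv$. This is a one-line addition, not a gap that invalidates anything.
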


\begin{mydefinition}
The dissipator $\mathcal{L}_D$ is unital if $\mathcal{L}_D(I)= 0$.
\end{mydefinition}

\begin{myproposition}
\label{prop:5}
$\bc=\mathbf{0}$ iff $\mathcal{L}_D$ is unital.
\end{myproposition}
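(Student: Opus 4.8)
The plan is to work directly from the component formula $c_j = \frac{1}{d}\Tr[F_j \lind_D(I)]$ established in \cref{prop:4} [\cref{eq:8.4.2}], together with the orthonormality and completeness of the nice operator basis $\{F_j\}_{j=0}^{J}$. First I would dispatch the easy direction: if $\lind_D$ is unital, then $\lind_D(I)=0$ by definition, and substituting into the formula immediately gives $c_j = \frac{1}{d}\Tr[F_j\cdot 0]=0$ for every $j\ge 1$, so $\bc = \mathbf{0}$.

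For the converse I would expand the operator $\lind_D(I)$ in the nice operator basis. Since the $F_j$ are orthonormal and Hermitian, $\lind_D(I) = \sum_{j=0}^{J}\Tr[F_j\lind_D(I)]\,F_j$. The hypothesis $\bc=\mathbf{0}$ says $c_j=0$, i.e. $\Tr[F_j\lind_D(I)]=0$, for all $j\ge 1$, so every traceless contribution to this sum vanishes and only the $j=0$ term can survive. Because $F_0=\frac{1}{\sqrt d}I$, that residual term is proportional to $\Tr[\lind_D(I)]\,I$.

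The one place that needs an extra input — and the step I expect to be the only non-routine point — is showing this residual identity component also vanishes, i.e. that $\Tr[\lind_D(I)]=0$. This follows from the trace-annihilating property of any Lindblad dissipator: using the explicit form \cref{eq:L_a} and cyclicity of the trace, $\Tr[\lind_D(X)] = \sum_\alpha \gamma_\alpha\left(\Tr[L_\alpha^\dagger L_\alpha X] - \tfrac12\Tr[\{L_\alpha^\dagger L_\alpha, X\}]\right) = \sum_\alpha \gamma_\alpha\left(\Tr[L_\alpha^\dagger L_\alpha X] - \Tr[L_\alpha^\dagger L_\alpha X]\right)=0$ for every $X$, and in particular for $X=I$. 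Combining this with the previous paragraph forces $\lind_D(I)=0$, i.e. $\lind_D$ is unital. Everything beyond this trace identity is bookkeeping in the orthonormal basis, since the traceless components are killed directly by the assumption $\bc=\mathbf{0}$.
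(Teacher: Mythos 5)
Your proof is correct and complete. Note that the paper itself does not supply a proof of this proposition: the appendix states that it closely follows Ref.~\cite{ODE2QME} and that the proofs of all propositions there are contained in that reference, so there is no in-paper argument to compare against. Your argument is the standard one and handles the only genuinely non-routine point correctly: the hypothesis $\bc=\mathbf{0}$ only controls the traceless components of $\lind_D(I)$, and the residual component along $F_0\propto I$ must be killed separately by the trace-annihilating identity $\Tr[\lind_D(X)]=0$, which you verify directly from \cref{eq:L_a} via cyclicity of the trace.
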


\begin{myproposition}
\label{prop:Rsymm-LHerm}
The following are conditions under which $R$ is symmetric:
\begin{enumerate}
\item $R$ is symmetric in the single qubit ($d=2$) case.
  \item $R$ is symmetric and $\bc=\mathbf{0}$ iff the Lindblad operators $\mc{L}_{\a}$ are Hermitian.
  \end{enumerate}
\end{myproposition}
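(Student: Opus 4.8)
The plan is to reduce both parts to a single object: the antisymmetric part of $R$. Since the $F_i$ are Hermitian and $\lind_D$ is Hermiticity-preserving (\cref{prop:0}), I would first rewrite $R_{ij}-R_{ji}=\Tr[F_i(\lind_D-\lind_D^\dagger)(F_j)]$, where $\lind_D^\dagger$ is the adjoint superoperator with respect to the Hilbert–Schmidt inner product. The anticommutator term in \cref{eq:L_a} is self-adjoint, so it drops out, leaving the ``jump'' piece $\mc{G}(X)\equiv(\lind_D-\lind_D^\dagger)(X)=\sum_\a \g_\a\,(L_\a X L_\a^\dagger-L_\a^\dagger X L_\a)$. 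Thus $R$ is symmetric iff the traceless-to-traceless block of $\mc{G}$ vanishes. A short computation also gives $\lind_D(I)=\mc{G}(I)$, so by \cref{prop:5} the condition $\bc=\mathbf 0$ is equivalent to $\mc{G}(I)=0$. This reformulation in terms of $\mc{G}$ is the backbone of the whole argument.

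For part 1 ($d=2$), I would take $F_i=\s^i/\sqrt2$ and write each Lindblad operator in Cartesian form $L_\a=A_\a+iB_\a$ with $A_\a,B_\a$ Hermitian (and traceless, as assumed throughout). A one-line computation gives $L_\a X L_\a^\dagger-L_\a^\dagger X L_\a=2i(B_\a X A_\a-A_\a X B_\a)$, so $\Tr[\s^i\mc{G}(\s^j)]=2i\sum_\a \g_\a(\Tr[\s^i B_\a\s^j A_\a]-\Tr[\s^i A_\a\s^j B_\a])$. Setting $A_\a=\mathbf a_\a\cdot\boldsymbol{\s}$, $B_\a=\mathbf b_\a\cdot\boldsymbol{\s}$ and using the four-Pauli trace identity $\Tr[\s^i\s^k\s^j\s^l]=2(\d_{ik}\d_{jl}-\d_{ij}\d_{kl}+\d_{il}\d_{kj})$, both traces evaluate to $2\big((\mathbf b_\a)_i(\mathbf a_\a)_j+(\mathbf b_\a)_j(\mathbf a_\a)_i-\d_{ij}\,\mathbf a_\a\cdot\mathbf b_\a\big)$, so their difference vanishes identically for all $i,j\in\{1,2,3\}$. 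Hence $R_{ij}=R_{ji}$, proving symmetry. I would stress that this does \emph{not} force $\bc=\mathbf 0$, since $\mc{G}(I)=2i\sum_\a \g_\a[B_\a,A_\a]$ can be nonzero (e.g.\ relaxation), consistent with the existence of non-unital qubit channels.

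For part 2 the easy direction ($\Leftarrow$) is immediate: if every $L_\a$ is Hermitian then $B_\a=0$, so $\mc{G}\equiv0$, which gives both $R$ symmetric and $\mc{G}(I)=0$, i.e.\ $\bc=\mathbf 0$. For the converse I would expand the $L_\a$ in a fixed traceless Hermitian basis $\{G_m\}$ and pass to the GKS (Kossakowski) matrix $a_{mn}=\sum_\a \g_\a\,\ell_{\a m}\bar\ell_{\a n}$, which is Hermitian and positive semidefinite; splitting $a=s+it$ into real-symmetric and real-antisymmetric parts yields $\mc{G}(X)=2i\sum_{m,n}t_{mn}\,G_m X G_n$. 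The crux is to show that $R$ symmetric \emph{and} $\bc=\mathbf 0$ together force $\mc{G}\equiv0$ as a full superoperator: $R$ symmetric kills the traceless-to-traceless block, while $\bc=\mathbf 0$ gives $\mc{G}(I)=0$ and, through the identity $\Tr[\mc{G}(F_j)]=-\Tr[F_j\,\mc{G}(I)]$, also kills the trace component of every $\mc{G}(F_j)$; hence $\mc{G}$ vanishes on the complete basis $\{I,F_1,\dots,F_J\}$. Linear independence of the maps $X\mapsto G_m X G_n$ then forces $t_{mn}=0$, so $a$ is real symmetric. Finally I would diagonalize $a=O\Lambda O^T$ by a real orthogonal $O$ and define $K_k=\sum_m O_{mk}G_m$; these are Hermitian, and the dissipator takes the form $\sum_k\lambda_k\big(K_k\rho K_k-\tfrac12\{K_k^2,\rho\}\big)$, exhibiting a Hermitian Lindblad representation.

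I expect the converse of part 2 to be the main obstacle, for two reasons. First, one must track carefully that ``$R$ symmetric'' alone constrains only the traceless-to-traceless block of $\mc{G}$; it is precisely the extra hypothesis $\bc=\mathbf 0$, via the trace-component bookkeeping above, that upgrades this to $\mc{G}\equiv0$ — which is why both conditions appear in the statement. Second, because the Lindblad operators are not unique, ``the $L_\a$ are Hermitian'' must be read as the \emph{existence} of a Hermitian representation; the real orthogonal diagonalization of the GKS matrix is what furnishes it, and I would check that reality (not merely Hermiticity) of $a$ is exactly what that diagonalization requires.
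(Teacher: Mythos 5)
Your proof is correct, but there is nothing in the paper to compare it against: the appendix states explicitly that it follows Ref.~\cite{ODE2QME} and that the proofs of all the propositions there, including \cref{prop:Rsymm-LHerm}, are contained in that reference. So where the paper offers only a citation, you have supplied a self-contained argument, and it holds up. The reduction of both parts to the antisymmetric ``jump'' superoperator $\mathcal{G}=\lind_D-\lind_D^\dagger$ is sound: the anticommutator term in \cref{eq:L_a} is self-adjoint and drops out, and $R_{ij}-R_{ji}=\Tr[F_i\,\mathcal{G}(F_j)]$ because $R$ is real and $\lind_D^\dagger$ is Hermiticity-preserving. The $d=2$ Pauli trace computation checks out --- both traces equal $2\bigl((\mathbf{b}_\a)_i(\mathbf{a}_\a)_j+(\mathbf{b}_\a)_j(\mathbf{a}_\a)_i-\d_{ij}\,\mathbf{a}_\a\cdot\mathbf{b}_\a\bigr)$, which is symmetric under $\mathbf{a}_\a\leftrightarrow\mathbf{b}_\a$, and the tracelessness of $A_\a,B_\a$ needed to expand them in $\{\s^x,\s^y,\s^z\}$ is guaranteed by the paper's convention that the $L_\a$ are traceless. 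Your reading of part 2 as the \emph{existence} of a Hermitian Lindblad representation is the right one; the literal iff would fail, e.g.\ for $L_{1,2}=(\s^x\pm i\s^y)/\sqrt{2}$ with equal rates, which give a real symmetric GKS matrix and $\bc=\mathbf{0}$ yet are individually non-Hermitian. Two small points worth recording: the linear independence of the maps $X\mapsto G_m X G_n$, which you invoke to conclude $t=0$ from $\mathcal{G}\equiv 0$, deserves a one-line justification (it follows from $\{I,G_1,\dots,G_J\}$ being an operator basis, so that the $G_m\otimes G_n$ span linearly independent superoperators); and positive semidefiniteness of the GKS matrix is never actually used --- which is fortunate, since the paper allows $\g_\a<0$ in the non-Markovian case, where only Hermiticity of $a_{mn}$ survives, and that is all your real orthogonal diagonalization requires.
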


\section{The Bloch vector}
\label{app:Bloch}

We write $\s^0=I$. The set of Pauli matrices $\{\s^0,\s^x, \s^y, \s^z\}$ forms a basis for the space of $2\times 2$ complex matrices. It is not a proper nice operator basis since it is not normalized, and this gives rise to the difference between the Bloch vector and the coherence vector for a qubit. In more detail, any qubit density matrix can be expanded in the Pauli basis as 
\beq 
\r = \frac{1}{2} \left( I +\sum_{i=x,y,z} v_i\s^i \right) = \frac{1}{2} \left( I + \bv \cdot \boldsymbol{\s} \right)\ ,
\label{eq:Bloch-vec}
\eeq 
where $\bv = (v_x, v_y, v_z)\in S^2$ is called the \emph{Bloch vector}, $S^2$ is the unit sphere in $\mathbb{R}^3$, and $\boldsymbol{\s} = (\s^x, \s^y, \s^z)$ is the vector of Pauli matrices. Since the Pauli matrices are traceless and $\Tr I=2$, the expansion ensures that $\Tr\r=1$. Since the Pauli matrices are Hermitian, this also ensures that $\r$ is Hermitian. In terms of the Bloch vector, $\r$ becomes 
\beq 
\r = \frac{1}{2} \left( \begin{array}{cc} 1+v_z & v_x - i v_y \\ v_x + i v_y & 1 - v_z \end{array} \right)\ . 
\eeq 

Positivity is the statement that the eigenvalues $\lambda_\pm$ are non-negative:
\beq
\left| \r - \lambda I \right| = 0 \Rightarrow \lambda^2 - (\Tr\r)\lambda + |\r | = 0\ ,
\eeq
where $|\r |$ denotes the determinant of $\r$. I.e., using $\Tr\r=1$:
\beq
\lambda_\pm = \frac{1}{2} (1\pm \sqrt{1-4|\r |}) \geq 0\ .
\label{eq:lambdapm}
\eeq
Positivity can now be made explicit by noting that
\beq
|\r | = \frac{1}{4} \left(1-v_z^2-(v_x^2+v_y^2)\right) = \frac{1}{4} \left(1-\|\bv\|^2\right)\ ,
\eeq
so that by \cref{eq:lambdapm}: $\lambda_\pm = \frac{1}{2} ( 1 \pm \left\| \bv \right\| )$.
Thus, positivity is equivalent to: 
\beq
{\left\| \bv \right\| \leq 1}\ ,
\label{eq:v-leq1}
\eeq 
which means that valid qubit states are represented by Bloch vectors that lie on the surface or in the interior of the unit sphere. 

For a qubit, expanding the Hamiltonian in the Pauli basis as $H = h_0 I+ \bh \cdot\boldsymbol{\s}$ yields:
\bes
\begin{align}
\mc{L}_H(\rho) & = -i[H,\rho] = -\frac{i}{2}\sum_{i\in\{x,y,z\}}h_i [\s^i,\bv\cdot\boldsymbol{\s}] \\&= -\frac{i}{2}\sum_{i,j\in\{x,y,z\}}h_i v_j [\s^i,\s^j] \\
&= \sum_{i,j,k\in\{x,y,z\}}\varepsilon_{ijk} h_i v_j\s^k = (\bh \times\bv)\cdot\boldsymbol{\s} \ .
\end{align}
\ees
Since $\dot{\r} = \frac{1}{2}(\dot{\bv}\cdot\mathbf{\sigma})$, we find $\dot{v}\cdot\boldsymbol{\s} = 2(\bh \times\bv)\cdot\boldsymbol{\s}$, so that:
\beq
Q\bv = \dot{v} = 2(\bh \times\bv)\ .
\label{eq:Blocheq}
\eeq

\cref{tab:channels} provides the dissipators for various common noise channels we study in this work. These are calculated using \cref{eq:8.4.2}, after adjusting for the difference in normalization.

\begin{table}[h]
\centering
\begin{tabular}{ |c|c|}
    \hline
    \textbf{Noise channel} & \textbf{Dissipator} $D=(R,\bc)$  \\
    \hline \rule{0pt}{2em}
    Dephasing ($Z$) & $R = \mathrm{diag}(-2\gamma, -2\gamma, 0)$, $\mathbf{c} = \mathbf{0}$  \\
    \hline \rule{0pt}{2em}
    Bit-flip ($X$) & $R = \mathrm{diag}(0, -2\gamma, -2\gamma)$, $\mathbf{c} = \mathbf{0}$ \\
    \hline \rule{0pt}{2em}
    Bit-phase-flip ($Y$) & $R = \mathrm{diag}(-2\gamma, 0, -2\gamma)$, $\mathbf{c} = \mathbf{0}$ \\
    \hline \rule{0pt}{2em}
    Depolarizing & $R = -\frac{4}{3}\gamma I$, $\mathbf{c} = \mathbf{0}$  \\
    \hline \rule{0pt}{2em}
    \multirow{2}{*}{\parbox{2.5cm}{Relaxation at temperature $T$}} &  \\
    & $R = \left(-\frac{\gamma}{2}, -\frac{\gamma}{2}, -\gamma\right)$, $\mathbf{c} = (0, 0, 2\gamma a)$ \\
    & \\
    \hline \rule{0pt}{2em}
    \multirow{3}{*}{\parbox{2.8cm}{Relaxation at temperature $T$ ($\g_1$) + Dephasing ($\g_d$)}} &  \\
    & $R = \left(-\g_2, -\g_2, -\g_1\right)$, $\mathbf{c} = (0, 0, 2\g_1 a)$\\ 
    & \\
    \hline
\end{tabular}
\caption{Dissipators for different noise channels. $\g_2 \equiv 2\g_d + \frac{\g_1}{2}$, $a\equiv[1+\exp(-\b \D)]^{-1}-1/2$ where $\D$ is the qubit energy gap and $\b=1/T$ is the inverse temperature.}
\label{tab:channels}
\end{table}

\section{Uhlmann fidelity between Bloch vectors}
\label{app:Bloch-Uhlmann}

The Uhlmann fidelity between two general density matrices is $F(\r,\chi) = \|\sqrt{\r}\sqrt{\chi}\|_1 = \Tr(\sqrt{\sqrt{\chi}\rho\sqrt{\chi}})$. 

\begin{myproposition}
In the case of two qubits, the Uhlmann fidelity expressed in terms of the respective Bloch vectors $\bv$ and $\bw$ of $\r$ and $\chi$ is:
\beq
F^2(\bv,\bw) = \frac{1}{2}\left(1+\bv\cdot\bw + [(1-\|\bv\|^2)(1-\|\bw\|^2)]^{1/2}\right)\ .
\label{eq:Fid-qubits-Bloch}
\eeq
\end{myproposition}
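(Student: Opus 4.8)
The plan is to collapse the matrix square root in $F(\r,\chi)=\Tr\sqrt{\sqrt{\chi}\,\r\,\sqrt{\chi}}$ down to elementary invariants (a trace and a determinant) that are trivial to evaluate on Bloch vectors. The crucial observation is that $M\equiv\sqrt{\chi}\,\r\,\sqrt{\chi}$ is a $2\times2$ positive semidefinite matrix, so if its eigenvalues are $\mu_1,\mu_2\ge 0$ then $\Tr\sqrt{M}=\sqrt{\mu_1}+\sqrt{\mu_2}$, and squaring gives $(\Tr\sqrt{M})^2=\mu_1+\mu_2+2\sqrt{\mu_1\mu_2}=\Tr M+2\sqrt{\det M}$. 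This scalar identity, valid for any positive $2\times 2$ matrix, is what avoids an explicit computation of the operator square root, and it is the one place where the qubit (dimension two) assumption is genuinely used.

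Next I would eliminate the square roots of $\chi$ from $\Tr M$ and $\det M$. By cyclicity of the trace, $\Tr M=\Tr(\sqrt{\chi}\,\r\,\sqrt{\chi})=\Tr(\chi\r)$; by multiplicativity of the determinant together with $\det\sqrt{\chi}=\sqrt{\det\chi}$, $\det M=(\det\sqrt{\chi})^2\det\r=\det\chi\,\det\r$. Combining with the identity above yields the compact intermediate formula
\beq
F^2(\r,\chi)=\Tr(\r\chi)+2\sqrt{\det\r\,\det\chi}\ .
\eeq

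I would then evaluate the two pieces on Bloch vectors, writing $\r=\tfrac12(I+\bv\cdot\boldsymbol{\s})$ and $\chi=\tfrac12(I+\bw\cdot\boldsymbol{\s})$. For the overlap I use the Pauli product identity $(\bv\cdot\boldsymbol{\s})(\bw\cdot\boldsymbol{\s})=(\bv\cdot\bw)I+i(\bv\times\bw)\cdot\boldsymbol{\s}$ together with the tracelessness of the Pauli matrices and $\Tr I=2$, obtaining $\Tr(\r\chi)=\tfrac12(1+\bv\cdot\bw)$. For the determinants I quote the appendix result $\det\r=\tfrac14(1-\|\bv\|^2)$ (and the analogous expression for $\chi$), which gives $2\sqrt{\det\r\,\det\chi}=\tfrac12\sqrt{(1-\|\bv\|^2)(1-\|\bw\|^2)}$.

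Substituting both expressions into the intermediate formula and factoring out $\tfrac12$ reproduces \cref{eq:Fid-qubits-Bloch} exactly. The calculation is otherwise routine; the only step demanding care is the scalar identity $(\Tr\sqrt{M})^2=\Tr M+2\sqrt{\det M}$, whose validity hinges on $M$ being positive semidefinite so that each $\sqrt{\mu_i}$ is real and non-negative. This positivity is guaranteed because $\r,\chi\ge 0$, which makes $M=\sqrt{\chi}\,\r\,\sqrt{\chi}$ positive semidefinite (it is congruent to the positive operator $\r$), so I expect this to be the main—if minor—point to justify rather than the Bloch-vector algebra.
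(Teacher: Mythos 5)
Your proposal is correct and follows essentially the same route as the paper's proof: the scalar identity $(\Tr\sqrt{M})^2=\Tr M+2\sqrt{\det M}$ for positive semidefinite $2\times2$ matrices, reduction to $\Tr(\r\chi)+2\sqrt{\det\r\,\det\chi}$ via trace cyclicity and determinant multiplicativity, and evaluation of both invariants on Bloch vectors. The only cosmetic difference is that the paper conjugates by $\sqrt{\r}$ where you conjugate by $\sqrt{\chi}$, which is immaterial by symmetry of the fidelity.
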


\begin{proof}
Following Ref.~\cite{Hubner:1992aa}, let $A$ be any $2\times 2$ positive semidefinite matrix, with eigenvalues $\lambda_1,\lambda_2\ge 0$. Then $\sqrt{A}$ has eigenvalues $\sqrt{\lambda_1},\sqrt{\lambda_2}\ge 0$, so that $\Tr\sqrt{A} = \sqrt{\lambda_1}+\sqrt{\lambda_2}$ and $\det(A) = \lambda_1\lambda_2$. Thus:
\beq
(\Tr\sqrt{A})^2 = \lambda_1 + \lambda_2 +2 \sqrt{\lambda_1\lambda_2} = \Tr A + 2 \sqrt{\det(A)}\ .
\label{eq:2-q-F-step}
\eeq
Now let $A = \sqrt{\r} \chi \sqrt{\r}$, which is a positive semidefinite matrix since it is the product of positive semidefinite matrices. Then, using the cyclic property of the trace and the product formula for determinants, we have:
\begin{subequations}
\label{eq:2-q-F}
\begin{align}
F^2(\r,\s) &= (\Tr\sqrt{A})^2 = \Tr(\sqrt{\r} \chi \sqrt{\r}) + 2\sqrt{\det(\sqrt{\r} \chi \sqrt{\r})} \\
&= \Tr(\r \chi) + 2\sqrt{\det(\r\chi)}\ .
\end{align}
\end{subequations}
Next, write $\rho = \frac{1}{2}(I + \bv\cdot \mathbf{\sigma})$ and $\chi = \frac{1}{2}(I + \bw\cdot \mathbf{\sigma})$.
    Thus, 
\begin{equation}
        \Tr(\rho\chi) = \frac{1}{2}(1 + \bv\cdot \bw)
\end{equation}
    Using $\lambda_\pm = \frac{1}{2} ( 1 \pm \left\| \bv \right\| )$ for the eigenvalues of $\r$ and the fact that the determinant is the product of the eigenvalues,
\begin{equation}
        \det(\rho\chi) = \det(\rho)\det(\chi)
        = \frac{1}{16}(1-\norm{\bv}^2)(1-\norm{\bw}^2)\ .
\end{equation}
    Thus, $\sqrt{\det(\rho\chi)} = \frac{1}{4}[(1-\norm{v}^2)(1-\norm{w}^2)]^{1/2}$, and \cref{eq:Fid-qubits-Bloch} follows from \cref{eq:2-q-F}.
    \end{proof}

\section{Derivation of control Hamiltonians for coherence preservation}
\label{app:example_details}

\subsection{Preserving the coherence magnitude for the bit-flip channel}
\label{app:bitflip-sol}

For later reference, we note that for $f=v_x^2+v_y^2$:
\begin{subequations}
\label{eq:coh-refeqs}
\begin{align}
& \grad f = (2v_x,2v_y,0) \\
& \grad f\cross \bv = (2v_y v_z , -2v_x v_z , 0)\\ 
& \| \grad f\cross \bv\|^2 = 4v_z^2 f_0^2\ ,
\end{align}
\end{subequations}
where $f_0=v_x^2(0)+v_y^2(0)$ is the coherence magnitude that is being preserved.

We can solve the dynamics equation, for which we first need to compute $\bh \cross \bv$ using $\bh  = \alpha_1 \bv + \alpha_2 \grad f + \alpha_3 \grad f \cross \bv$ [\cref{eq:h-decomp-basis}]. In doing so, the term $\alpha_1\bv$ drops out, and we use $\alpha_3 = \frac{1}{2} \frac{\grad f \cdot (R\bv + \bc)}{\|\grad f \cross \bv\|^2}$ [\cref{eq:alpha_3}] with $R=\mathrm{diag}(0,-2\gamma,-2\gamma)$, and $\bc=\mathbf{0}$.
We then have 
\begin{subequations}
\begin{align}
&R\bv+\bc = (0,-2\g v_y,-2\g v_z)\\
&\grad f \cdot (R\bv + \bc)=-4 \gamma  v_y^2\ . 
\end{align}
\end{subequations}
Thus, using \cref{eq:coh-refeqs}, we find 
\beq
\a_3 = -\frac{1}{2}\frac{\gamma  v_y^2}{f_0 v_z^2}\ ,
\label{eq:alpha3-BFC}
\eeq
and hence:
\beq
\bh  = \left(-\frac{\gamma  v_y^3}{f_0v_z}+ 2\alpha _2 v_x,\frac{\gamma  v_x v_y^2}{f_0v_z}+ 2\alpha_2 v_y, 0\right)\ ,
\label{eq:D2}
\eeq
which yields:
\begin{align}
\bh \cross \bv &= \left(\frac{\gamma  v_x v^2_y}{f_0}+2 \alpha _2 v_y v_z,\frac{\gamma 
   v_y^3}{f_0}-2 \alpha _2 v_x v_z,\right.\notag \\
   &\left.\qquad-\frac{\gamma  v_y^2}{v_z}+2\alpha_2 v_x v_y\right)\ .
   \label{eq:hxv-bitflip}
\end{align}

Then, using \cref{eq:reference_dynamics_eq}, corresponding equations for the Bloch vector components are:
\begin{subequations}
\label{eq:vdot-btflip}
    \begin{align}
\label{eq:vdot-btflip-x}
        \dot{v}_x &= \frac{2 \gamma  v_x v_y^2}{f_0}+4 \alpha _2 v_y v_z\\
\label{eq:vdot-btflip-y}
        \dot{v}_y &= 2 \gamma  v_y
   \left(\frac{v_y^2}{f_0}-1\right)-4 \alpha _2 v_x v_z\\
\label{eq:vdot-btflip-z}
        \dot{v}_z &= -2 \gamma\frac{v_z^2+v_y^2}{v_z}+4\alpha_2 v_x v_y\ .
    \end{align}
\end{subequations}
 
\subsubsection{$\a_2=0$: finite breakdown time}
For $\a_2=0$, solving \cref{eq:vdot-btflip-y} first gives:
\beq
\label{eq:vy-bitflip}
v^2_y(t) = \frac{v^2_y(0)f_0}{{v^2_x(0) e^{4 \gamma  t}+v^2_y(0)}}\ .
\eeq
This lets us solve \cref{eq:vdot-btflip-z}, yielding:
\begin{align}
v^{2}_z(t) &= 
\frac{e^{-4 \gamma  t}}{v_x^2(0)}  \left[ v_x^2(0)v_z^2(0)\notag \phantom{\frac{1}{2}}\right.\\
&\qquad \left. -f_0 v_y^2(0)\ln \left(\frac{v_y^2(0)+ e^{4\g t}v_x^2(0)}{f_0 }\right)\right]\ .
\end{align}
In this trajectory, clearly $v_y(t)>0$ $\forall t\ge 0$. Therefore the Hamiltonian [\cref{eq:D2}] diverges when $v_z(t)=0$; the solution then yields \cref{eq:t_b-bitflip}.

\subsubsection{$\a_2\ne 0$: infinite breakdown time}

Different values of $\a_2$ yield different trajectories.
Instead of analyzing the result of varying $\a_2$, we can choose a trajectory we would like the Bloch vector to take and calculate the required Hamiltonian based on that path.
Such a choice can be advantageous; e.g., as we demonstrate here, we can ensure that the trajectory ends at a stable point, which means that the coherence magnitude is indefinitely preserved.

Due to \cref{th:general_trajectory}, any trajectory can be realized as long as that trajectory's purity is the same as $\partial_u P_D$ up to a positive multiplicative factor.
Consider $\partial_u P_D$ for the bit flip channel:
\beq
\partial_u P_D [\bl(u)] = \bl\cdot (R\bl+\mathbf{c}) = -2\gamma(l_y^2+l_z^2)\ ,
\eeq
so the trajectory purity must be negative along the entire path.
Since the preserved property is the coherence magnitude, the trajectory should lie on a cylindrical level set.
As long as these two properties are satisfied, we can find a Hamiltonian that moves the Bloch vector along that trajectory.

For example, suppose we wish to realize a trajectory that ends with the state on the $x$-axis of the Bloch sphere (e.g., the curved green arrow in \cref{fig:bitflip-coherence}).
Since the trajectory must lie on the level set, which is a cylinder with radius $\sqrt{f_0}$ (recall \cref{sec:LS_BP}), one possible (non-unique) trajectory that satisfies the required conditions is
\bes
\begin{align}
\bl(u) &= (l_x(u),l_y(u),l_z(u))\\
&=(\sqrt{f_0}\sin(\frac{\pi}{2}\theta + \frac{\pi}{2}(1-\theta)u),\notag \\
& \qquad \sqrt{f_0}\cos(\frac{\pi}{2}\theta + \frac{\pi}{2}(1-\theta)u),v_z(0)\sqrt{1-u})\ , 
\end{align}
\ees 
where $u \in [0,1]$ and $\theta = \frac{2}{\pi}\arctan(v_x(0)/v_y(0))$.
If the reparameterization to physical time is given by $\varphi : [0,1] \mapsto [0,t_f]$, the realizable trajectory is given by $\bv(\varphi(u)) = \bl(u)$.

\cref{th:general_trajectory} yields $c(u)$ (where $\varphi(s) = \int_{0}^{s} c(u) du$): the trajectory purity is $P_\bl(u)=\frac{1}{2}[1+\|\bl(u)\|^2] = \frac{1}{2}[1+ f_0 + v_z^2(0)(1-u)]$, which leads to
\beq
    c(u) = \frac{\partial_u P_\bl}{\partial_u P_D} =  \frac{v_z^2(0)}{4\gamma (l_y^2(u)+l_z^2(u))} =  \frac{v_z^2(0)}{4\gamma (v_y^2(t)+v_z^2(t))}\ .
\eeq

To find an appropriate value of $\a_2$, we can use the following relation between $\bl$ and $\bv$:
\begin{subequations}
    \begin{align}
    \partial_u \varphi(u) \cdot \dot{\bv}(t) = \partial_u \bl(u)\\
    \text{or } c(u) \cdot \dot{\bv}(t) = \partial_u \bl(u) \ .
    \label{eq:alpha2-cu}
    \end{align}
\end{subequations}

The r.h.s. of \cref{eq:alpha2-cu} can be computed from the trajectory information:
\begin{align}
    \partial_u l_x(u) &= \frac{\pi}{2}(1-\theta) l_y(u) = \frac{\pi}{2}(1-\theta) v_y(t)\ .
\end{align}
Equating the components of the vectors on both sides of \cref{eq:alpha2-cu} then yields $\a_2$, e.g., via \cref{eq:vdot-btflip-x}:
    \begin{align}
&\frac{v_z^2(0)}{4\gamma (v_y^2(t)+v_z^2(t))} \left(\frac{2 \gamma  v_x(t) v_y^2(t)}{f_0}+4 \alpha _2 v_y(t) v_z(t)\right)  \notag\\ 
    & \qquad = \frac{\pi}{2 }(1-\theta) v_y(t)\ ,
    \end{align}
so that
    \begin{align}
    \alpha_2 &= \frac{\gamma}{2v_z(t)}
\left(\frac{\pi(1-\theta)(v_y^2(t)+v_z^2(t))}{v_z^2(0)}-\frac{ v_x(t) v_y(t)}{f_0}\right)\ .
    \end{align}
Using this result for $\a_2$ along with \cref{eq:alpha3-BFC} for $\a_3$ will move the Bloch vector along the prescribed trajectory while preserving the coherence. Moreover, the trajectory leads to a stable point since the endpoint is on the $x$-axis. The corresponding breakdown time is infinity.

\subsection{Breakdown time for dephasing+relaxation channel in coherence preservation}
\label{app:breakdown-relaxation}

Recall from \cref{sec:deph+rel} that the dissipator is $R=\mathrm{diag}(-\g_2, -\g_2, -\g_1)$ and $\bc=(0,0,2\g_1 a)$, where $1/T_2 = 2\g_d+\g_1/2$ and $a=[1+\exp(-\b \D)]^{-1}-1/2$.  Thus:
\begin{subequations}
\begin{align}
& R\bv+\bc = \left(-\g_2 v_x, -\g_2 v_y,-\g_1 (v_z-2a)\right)\\
& \grad f\cdot(R\bv+\bc) = -2 \g_2 f_0\ .
\end{align}
\end{subequations}
Then, using \cref{eq:basic_control,eq:coh-refeqs} the basic control Hamiltonian ($\a_2=0$) is:
        \begin{align}
            \mathbf{h} &= 
             -\frac{\g_2}{2v_z} (v_y,-v_x,0)\ ,
        \end{align}
and using \cref{eq:reference_dynamics_eq}, the corresponding equations for the Bloch vector components are:
    \begin{subequations}
    \begin{align}
        \dot{v}_x &= 0 \\
        \dot{v}_y &= 0 \\
        \dot{v}_z &= -\frac{\g_2 f_0}{v_z} -\g_1 v_z + 2\g_1 a \\
         &= -\frac{\g_1}{v_z} \left( \left(v_z-a\right)^2 + \left(\frac{\g_2}{\g_1}f_0-a^2\right) \right) \ .
        \label{eq:vzdot16}
    \end{align}
    \end{subequations}

As argued in \cref{sec:deph+rel}, for points with $f_0 > a^2\frac{\g_1}{\g_2}$, all trajectories necessarily end up at a breakdown point. The constant term in \cref{eq:vzdot16} is positive for such points.
Substituting $\Omega^2 \equiv \frac{\g_2}{\g_1}f_0-a^2$ and integrating, we obtain:
\begin{subequations}
\begin{align}
     -\g_1 \int_0^t ds &= \int_{v_z(0)}^{v_z(t)} \frac{v_z dv_z}{\left(v_z-a\right)^2 + \Omega^2}\\
     &= \frac12  \log ( \left( v_z - a \right)^2 + \Omega^2) \notag \\
     & \qquad + \left. \frac{a}{\Omega} \arctan{ \frac{v_z-a}{\Omega} } \right|_{v_z(0)}^{v_z(t)}\ .
\end{align}
\end{subequations}
Substituting $v_z(t_b)=0$ finally gives the breakdown time as in \cref{eq:tb-deph+rel}, and similarly, when $f_0<a^2 (\g_1/\g_2)$ and $v_z<a$ we obtain \cref{eq:tb-deph+rel2}.

\section{Derivation of control Hamiltonians for Uhlmann fidelity preservation}
\label{sec:pres-Uhlmann}

\subsection{$\alpha_3$ for various noise channels}

Substituting the dissipator $D=(R,\bc)$ for various noise channels into \cref{eq:uhlman_a3} yields \cref{tab:channel_substitutions}.

\begin{table}[h]
\centering
\begin{tabular}{ |c|c|c|}
    \hline
    \textbf{Noise channel} & $\alpha_3$ \\
    \hline \rule{0pt}{2em}
    Dephasing & $-2\gamma \frac{(\mathbf{w} - k_0 \mathbf{v}) \cdot (v_x, v_y, 0)}{\|\mathbf{w} \times \mathbf{v}\|^2}$ \\
    \hline \rule{0pt}{2em}
    Bit-flip &  $-2\gamma \frac{(\mathbf{w} - k_0 \mathbf{v}) \cdot (0, v_y, v_z)}{\|\mathbf{w} \times \mathbf{v}\|^2}$ \\
    \hline \rule{0pt}{2em}
    Depolarizing & $-\frac{4\gamma}{3} \frac{(\mathbf{w} - k_0 \mathbf{v}) \cdot \mathbf{v}}{\|\mathbf{w} \times \mathbf{v}\|^2}$ \\
    \hline \rule{0pt}{2em}
    Relaxation & $ -\frac{\gamma}{2} \frac{(\mathbf{w} - k_0 \mathbf{v}) \cdot (v_x, v_y, 2(1 - v_z))}{\|\mathbf{w} \times \mathbf{v}\|^2}$ \\
    \hline
\end{tabular}
\caption{$\alpha_3$ for different noise channels.}
\label{tab:channel_substitutions}
\end{table}

\begin{figure}[ht]
\includegraphics[width=0.36\textwidth]{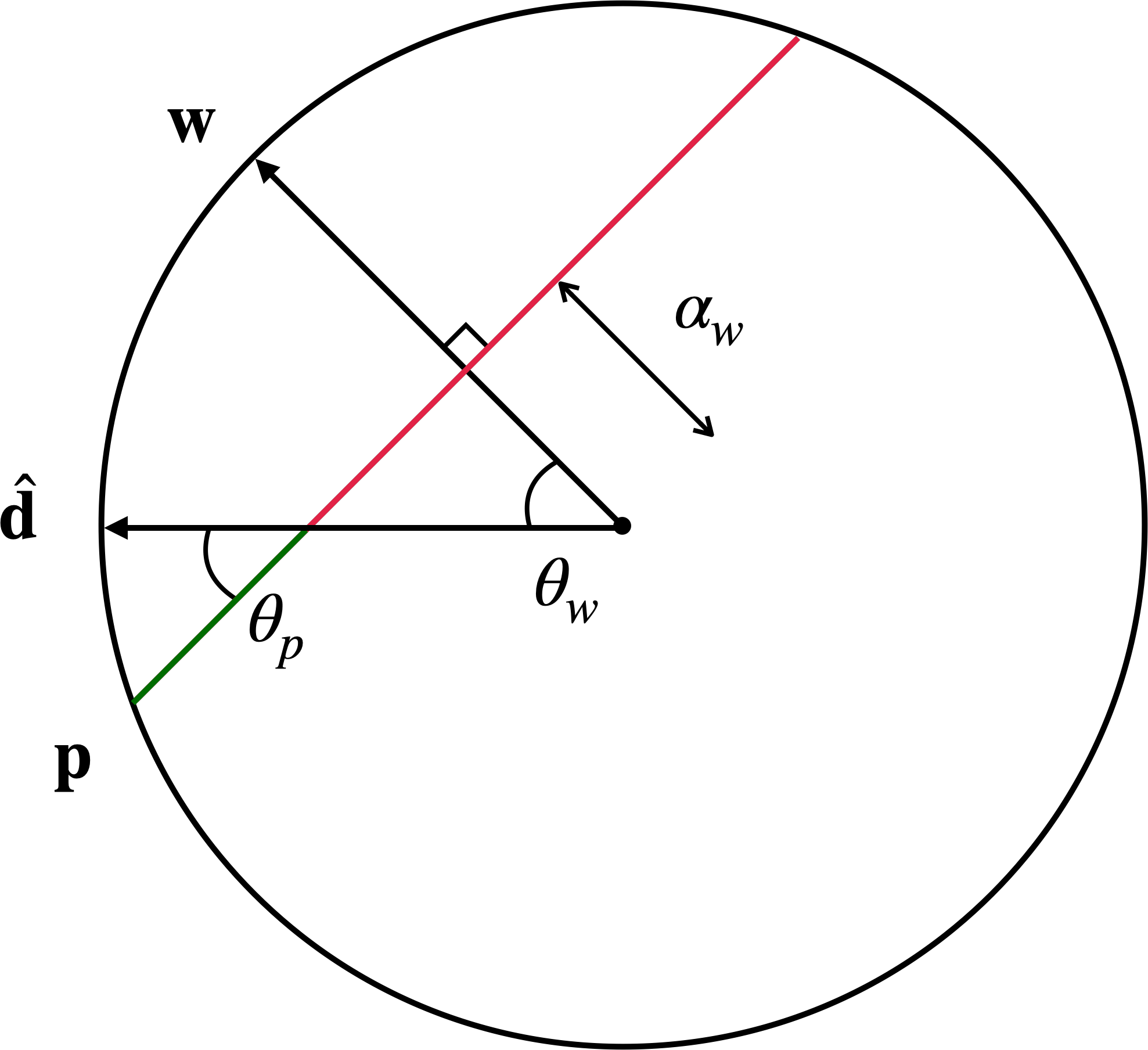}
        \caption{Cross-sectional view of the initial states leading to stable points in \cref{fig:bitflip_initial_stable_specific_ham}.}
	\label{fig:bitflip_initial_stable_specific_ham_detailed}
\end{figure}

\subsection{Preserving Uhlmann fidelity under dephasing/bit-flip}
\label{app:dephasing-fidelity-breakdown}

\cref{fig:bitflip_initial_stable_specific_ham_detailed} provides a simple way to characterize initial states that lead to stable points when using the fixed-$\mathbf{p}$ Hamiltonian:
\begin{enumerate}
\item $\mathbf{p}$, $\mathbf{w}$ and $\hat{\mathbf{d}}$ should be coplanar, i.e. $\mathbf{p}\cdot(\mathbf{w}\cross \hat{\mathbf{d}})=0$.
\item the initial state should be on the correct side of the sphere: $\hat{\mathbf{d}}$ should lie between $\mathbf{p}$ and $\mathbf{w}$, i.e.,  $\theta_p + \theta_w = \pi/2$ since $\mathbf{p}$ and $\mathbf{w}$ are orthogonal.
\item the initial state lies on the green segment in \cref{fig:bitflip_initial_stable_specific_ham_detailed}: $\a_p > \a_w \tan\theta_w $.
\end{enumerate}

All other points have a finite breakdown point, and the corresponding breakdown time $t_b$ can be calculated: rearranging \cref{eq:dephasing-fidelity-eq} gives
\begin{subequations}
\begin{align}
-2\gamma &= \frac{\alpha_p \dot{\a_p}}{\alpha_p^2 \sin^2{\theta_p} + \alpha_w^2 \sin^2{\theta_w} - 2\alpha_w\alpha_p\cos{\theta_p}\cos{\theta_w} }\\
&= \frac{\alpha_p \dot{\a_p}}{\a_p^2 - \a_p \left( \frac{2\a_w \cos{\theta_p}\cos{\theta_w}}{\sin^2{\theta_p}}\right) + \left(\frac{\a_w^2 \sin^2{\theta_w}}{\sin^2{\theta_p}}\right)} \frac{1}{\sin^2{\theta_p}}\\
&= \frac{\alpha_p \dot{\a_p}}{\left(\a_p - c_1 \right)^2 - c_2}\;\frac{1}{\sin^2{\theta_p}}\ ,
\end{align}
\end{subequations}
where
\begin{subequations}
\begin{align*}
c_1 &= \frac{\a_w \cos{\theta_p}\cos{\theta_w}}{\sin^2{\theta_p}}\\
c_2 &= \frac{\a_w^2 \cos{(\theta_p+\theta_w)} \cos{(\theta_p-\theta_w)}}{\sin^4{\theta_p}}\ .
\end{align*}
\end{subequations}
Integrating both sides yields
\begin{equation}
-2\gamma t =  \left. \frac12 \log{[(\a_p-c_1)^2-c_2]} + g(\a_p) \right|_{\a_p(0)}^{\a_p(t)} \frac{1}{\sin^2{\theta_p}}\ ,
\end{equation}
where
\begin{equation}
g(\a_p) = 
     \begin{cases}
      -\frac{c_1}{\sqrt{\abs{c_2}}}\arctan{\frac{c_1-\a_p}{\sqrt{\abs{c_2}}}} &\text{ if}\ c_2 < 0 \\[1em]
      \frac{c_1}{c_1-\a_p} &\text{ if}\ c_2 = 0\\[1em]
      \frac{c_1}{\sqrt{\abs{c_2}}}\arctanh{\frac{c_1-\a_p}{\sqrt{\abs{c_2}}}} &\text{ if}\ c_2 > 0\ .
    \end{cases}
\end{equation} 
The breakdown time can be found by setting $\a_p(t_b) = 0$, which results in
\begin{align}
t_b = \frac1{2\gamma \sin^2{\theta_p}} \left[\log{\left( \frac{(\a_p(0)-c_1)^2-c_2}{c_1^2-c_2} \right)} + g(\a_p(0))-g(0)\right]\ .
\end{align}
Note that $c_1^2-c_2$ is always non-negative.
\bibliography{refs}

\end{document}